\definecolor{redi}{RGB}{255,38,0}
\definecolor{redii}{RGB}{200,50,30}
\definecolor{yellowi}{RGB}{255,251,0}
\definecolor{bluei}{RGB}{0,150,255}
\definecolor{blueii}{RGB}{135,247,210}
\definecolor{blueiii}{RGB}{91,205,250}
\definecolor{blueiv}{RGB}{115,244,253}
\definecolor{bluev}{RGB}{1,58,215}
\definecolor{orangei}{RGB}{220,160, 20}
\definecolor{orangeii}{RGB}{240,90, 10}
\definecolor{yellowii}{RGB}{222,247,100}
\definecolor{greeni}{RGB}{85,102,0}
\definecolor{greenii}{RGB}{20,140,10}
\definecolor{navy}{RGB}{17, 10, 102}
\definecolor{brown}{RGB}{60, 40, 0}
\definecolor{oxford}{RGB}{0, 0, 100}
\definecolor{plum}{rgb}{0.36078, 0.20784, 0.4}
\definecolor{chameleon}{rgb}{0.30588, 0.60392, 0.023529}
\definecolor{cornflower}{rgb}{0.12549, 0.29020, 0.52941}
\definecolor{scarlet}{rgb}{0.8, 0, 0}
\definecolor{brick}{rgb}{0.64314, 0, 0}
\definecolor{sunrise}{rgb}{0.80784, 0.36078, 0}
\definecolor{lightblue}{rgb}{0.15,0.35,0.75}
\definecolor{carolina}{RGB}{153, 186, 221}
\definecolor{darkblue}{rgb}{0.05,0.25,0.65}
\newcolumntype{L}[1]{>{\raggedright\let\newline\\\arraybackslash\hspace{0pt}}m{#1}}
\newcolumntype{C}[1]{>{\centering\let\newline\\\arraybackslash\hspace{0pt}}m{#1}}
\newcolumntype{R}[1]{>{\raggedleft\let\newline\\\arraybackslash\hspace{0pt}}m{#1}}
\newcommand{\boldpi}{\mbox{$\pi$\hspace{-6.5pt}$\pi$}}
\newcommand{\gt}{>}
\newcommand{\raisemath}[1]{\mathpalette{\raisem@th{#1}}}
\newcommand{\raisem@th}[3]{\raisebox{#1}{$#2#3$}}
\newif\if@sup
\newtoks\@sups
\def\append@sup#1{\edef\act{\noexpand\@sups={\the\@sups #1}}\act}%
\def\reset@sup{\@supfalse\@sups={}}%
\def\mk@scripts#1#2{\if #2/ \if@sup ^{\the\@sups}\fi \else%
  \ifx #1_ \if@sup ^{\the\@sups}\reset@sup \fi {}_{#2}%
  \else \append@sup#2 \@suptrue \fi%
  \expandafter\mk@scripts\fi}
\def\tensor#1#2{\reset@sup#1\mk@scripts#2_/}
\def\multiscripts#1#2#3{\reset@sup{}\mk@scripts#1_/#2%
  \reset@sup\mk@scripts#3_/}
\newbox\slashbox \setbox\slashbox=\hbox{$/$}
\def\itex@pslash#1{\setbox\@tempboxa=\hbox{$#1$}
  \@tempdima=0.5\wd\slashbox \advance\@tempdima 0.5\wd\@tempboxa
  \copy\slashbox \kern-\@tempdima \box\@tempboxa}
\def\slash{\protect\itex@pslash}
\def\clap#1{\hbox to 0pt{\hss#1\hss}}
\def\mathllap{\mathpalette\mathllapinternal}
\def\mathrlap{\mathpalette\mathrlapinternal}
\def\mathclap{\mathpalette\mathclapinternal}
\def\mathllapinternal#1#2{\llap{$\mathsurround=0pt#1{#2}$}}
\def\mathrlapinternal#1#2{\rlap{$\mathsurround=0pt#1{#2}$}}
\def\mathclapinternal#1#2{\clap{$\mathsurround=0pt#1{#2}$}}
\let\oldroot\root
\def\root#1#2{\oldroot #1 \of{#2}}
\renewcommand{\sqrt}[2][]{\oldroot #1 \of{#2}}
\DeclareSymbolFont{symbolsC}{U}{txsyc}{m}{n}
\DeclareSymbolFont{stmry}{U}{stmry}{m}{n}
\DeclareFontFamily{OMX}{MnSymbolE}{}
\DeclareSymbolFont{mnomx}{OMX}{MnSymbolE}{m}{n}
\DeclareFontShape{OMX}{MnSymbolE}{m}{n}{
    <-6>  MnSymbolE5
   <6-7>  MnSymbolE6
   <7-8>  MnSymbolE7
   <8-9>  MnSymbolE8
   <9-10> MnSymbolE9
  <10-12> MnSymbolE10
  <12->   MnSymbolE12}{}
\def\Decl@Mn@Delim#1#2#3#4{%
  \if\relax\noexpand#1%
    \let#1\undefined
  \fi
  \DeclareMathDelimiter{#1}{#2}{#3}{#4}{#3}{#4}}
\def\Decl@Mn@Open#1#2#3{\Decl@Mn@Delim{#1}{\mathopen}{#2}{#3}}
\def\Decl@Mn@Close#1#2#3{\Decl@Mn@Delim{#1}{\mathclose}{#2}{#3}}
\Decl@Mn@Open{\llangle}{mnomx}{'164}
\Decl@Mn@Close{\rrangle}{mnomx}{'171}
\Decl@Mn@Open{\lmoustache}{mnomx}{'245}
\Decl@Mn@Close{\rmoustache}{mnomx}{'244}
\DeclareRobustCommand\widecheck[1]{{\mathpalette\@widecheck{#1}}}
\def\@widecheck#1#2{%
    \setbox\z@\hbox{\m@th$#1#2$}%
    \setbox\tw@\hbox{\m@th$#1%
       \widehat{%
          \vrule\@width\z@\@height\ht\z@
          \vrule\@height\z@\@width\wd\z@}$}%
    \dp\tw@-\ht\z@
    \@tempdima\ht\z@ \advance\@tempdima2\ht\tw@ \divide\@tempdima\thr@@
    \setbox\tw@\hbox{%
       \raise\@tempdima\hbox{\scalebox{1}[-1]{\lower\@tempdima\box
\tw@}}}%
    {\ooalign{\box\tw@ \cr \box\z@}}}
\def\udots{\mathinner{\mkern2mu\raise\p@\hbox{.}
\mkern2mu\raise4\p@\hbox{.}\mkern1mu
\raise7\p@\vbox{\kern7\p@\hbox{.}}\mkern1mu}}
\def\1{{\bf 1}}
\def\<{\langle}
\def\>{\rangle}
\renewcommand{\(}{\begin{equation}}
\renewcommand{\)}{\end{equation}}
\newcommand{\bea}{\begin{eqnarray*}}
\newcommand{\eea}{\end{eqnarray*}}
\theoremstyle{italics}
\newtheorem{theorem}{Theorem}[section]
\newtheorem{lemma}[theorem]{Lemma}
\newtheorem{prop}[theorem]{Proposition}
\theoremstyle{definition}
\newtheorem{defn}[theorem]{Definition}
\newtheorem{example}[theorem]{Example}
\newtheorem{remark}[theorem]{Remark}
\newtheorem{note[theorem]}{Note}
\newtheorem{observation}[theorem]{\bf Observation}
 \newcommand{\medoplus}{\ensuremath{\vcenter{\hbox{\scalebox{1.5}{$\oplus$}}}}}
\newcommand{\mapsdown}{\mbox{$\!\!\!\!$\begin{rotate}{-90}$\!\!\!\!\!\!\!\!\mapsto$\end{rotate}}}
\newcommand{\underoverset}[3]{\underset{#1}{\overset{#2}{#3}}}
\definecolor{darkblue}{rgb}{0.05,0.25,0.65}
\definecolor{darkgreen}{rgb}{0.00,0.85,0.1}
\definecolor{plum}{rgb}{0.36078, 0.20784, 0.4}
\begin{document}

\title{
  Differential Cohomotopy implies intersecting brane observables
  \\
  via configuration spaces and chord diagrams
}

\author{ Hisham Sati, Urs Schreiber}

\maketitle

\begin{abstract}
We introduce a differential refinement of Cohomotopy cohomology theory,
defined on Penrose diagram spacetimes,
whose cocycle spaces are unordered configuration spaces of points.
First we prove that brane charge quantization
in this differential 4-Cohomotopy theory implies intersecting
$p\perp(p+2)$-brane moduli given by
ordered configurations of points in the transversal 3-space.
Then we show that the higher (co-)observables on these brane moduli,
conceived as the (co-)homology of the Cohomotopy cocycle space,
are given by weight systems on horizontal chord diagrams and
reflect a multitude of effects expected in the
microscopic quantum theory
of $\mathrm{D}p\perp \mathrm{D}(p+2)$-brane intersections:
condensation to stacks of coincident branes and
their Chan-Paton factors,
fuzzy funnel states and M2-brane 3-algebras,
$\mathrm{AdS}_3$-gravity observables
and supersymmetric indices of Coulomb branches,
M2/M5-brane bound states in the BMN matrix model
and the Hanany-Witten rules,
as well as gauge/gravity duality between all these.
We discuss this in the context of the hypothesis
that the M-theory C-field
is charge-quantized in Cohomotopy theory.
\end{abstract}

\newpage

 \tableofcontents

\newpage

\section{Introduction and overview}

\noindent {\bf The general open problem.}
The rich physics expected on coincident and intersecting branes
(reviewed in \cite{IU12}), which geometrically engineer non-perturbative quantum gauge field theories
\cite{KatzVafa97}\cite{HananyWitten97}\cite{GaiottoWitten08}
(reviewed in \cite{Karch98}\cite{GiveonKutasov99}\cite{Fazzi17})
close to quantum chromodynamics
\cite{Witten98}\cite{SakaiSugimoto04}\cite{SakaiSugimoto05}
(reviewed in \cite{Rebhan14}\cite{Sugimoto16})
and model quantum microstates accounting for black hole entropy
\cite{SV96}\cite{CallanMaldacena96}
(reviewed in \cite{Kraus06}\cite{Sen07}),
has come to be center stage in string theory -- or rather in the
``theory formerly known as strings'' \cite{Duff96}.
Despite all that is known about D-branes
from the two limiting cases of  {\bf (a)} string perturbation theory
and {\bf (b)} worldvolume gauge theory,
an actual comprehensive
theory of non-perturbative brane physics, namely an actual formulation of
\emph{M-theory} \cite{Duff99B}, still
remains an open problem
\cite[6]{Duff96}\cite[p. 2]{HoweLambertWest97}\cite[p. 6]{Duff98}\cite[p. 2]{NicolaiHelling98}\cite[p. 330]{Duff99B}\cite[12]{Moore14}\cite[p. 2]{ChesterPerlmutter18}\cite{Witten19}\footnote{\cite{Witten19} at 21:15: ``I actually believe that string/M-theory is on the right
track toward a deeper explanation. But at a very fundamental level it's not
well understood. And I'm not even confident that we have a good concept of
what sort of thing is missing or where to find it."}\cite{Duff19}\footnote{\cite{Duff19} at 17:04: ``The problem we face is that we have a patchwork understanding of M-theory, like a quilt. We understand this corner and that corner, but what’s lacking is the overarching big picture. Directly or indirectly, my research hopes to explain what M-theory really is. We don't know what it is.''}.
The lack of such a genuine
theory of brane physics has recently surfaced in a debate
about the validity of D3-brane constructions that had dominated
the discussion in a large part of the community for the last 15 years;
see \cite{DanielssonVanRiet18}\cite[p. 14-22]{Banks19}.

\medskip

\noindent {\bf Hypothesis H.}
Based on a re-analysis of the super $p$-brane WZW terms
from the point of view of homotopy theory
\cite{FSS13}\cite{FSS15}\cite{FSS16a}\cite{FSS16b}\cite{GaugeEnhancement}
(reviewed in \cite{FSS19a}),
we have recently
formulated a concrete hypothesis about (at least part of) the mathematical
nature of M-theory:
This \hyperlink{HypothesisH}{\it Hypothesis H}
\cite[2.5]{Sati13}\cite{FSS19b} asserts that in M-theory the C-field
of 11d supergravity \cite{CJS78} is
charge-quantized \cite{Freed00}\cite{tcu}
in the non-abelian generalized cohomology theory
called \emph{J-twisted Cohomotopy theory}.
This hypothesis turns out to imply
\cite{FSS19b}\cite{FSS19c}\cite{SS19a}\cite{SS19b}
a wealth of subtle
topological effects expected in string/M-theory.
This suggests that it is a correct proposal
about the mathematics underpinning M-theory,
at least in the topological sector.

\medskip
\noindent {\bf Differential refinement.}
In this article we take a step beyond the topological sector
and investigate to which extent a geometrically (``differentially'')
refined form (cf. \cite{FSS15})
of \hyperlink{HypothesisH}{Hypothesis H} leads to
the emergence/derivation of
expected phenomena on coincident and intersecting branes.
For exposition see also \cite{Schreiber20}.

\medskip

\noindent First, our main {\bf mathematical observations} here are the following
(\cref{ChargesInCohomotopy}):

%

    \hspace{-.35cm}
  }
}
$$

\noindent Second, we make the {\bf string-theoretic observation} (\cref{WeightSystemsAsObservablesOnIntersectingBranes}) that
these weight systems on horizontal chord diagrams, when
regarded as higher observables
reflect a multitude of effects expected on brane
intersections in string theory.

\medskip
This leads to an understanding and clarification of relations among various physical concepts
and points to a unifying theme,
relying on constructions from seemingly distinct mathematical
areas which are brought together -- see \hyperlink{FigureS}{Figure 1}.

\newpage

\hspace{-1.4cm}
\fbox{\!\!\!\!\!\!\!\!\!\!\!\!\!
\hypertarget{FigureS}{}
\hspace{-1.1cm}
$
  \xymatrix@R=16pt{
    &
      \mbox{
        \color{greenii}

        }
        \;\;\;\;\;\;\;
        }
      }{
        \big(\mathcal{A}^{{}^{\mathrm{t}}}\big)_\bullet
      }
  }
$
  \hspace{-7mm}
}

\vspace{.2cm}

\noindent { \footnotesize
{\bf Figure 1 -- Emergence of intersecting brane observables}
from a differentially refined version (\cref{ChargesInCohomotopy})
of \hyperlink{HypothesisH}{\it Hypothesis H}.
}

\medskip
\medskip
\noindent {\bf Top-down M-theory.} We highlight that,
assuming \hyperlink{HypothesisH}{Hypothesis H},
the analysis shown in
\hyperlink{Figure1}{Figure 1} is completely \emph{top-down}:
knowledge about gauge field theory and
perturbative string theory is not used in deriving
the algebras of observables of M-theory, but only to
interpret them. See also Observation \ref{Dualities} on dualities.

\medskip
While we suggest that the rich system of
expected effects emerging in \hyperlink{Figure1}{Figure 1},
further supports the proposal that \hyperlink{HypothesisH}{Hypothesis H}
is a correct proposal about the mathematical nature of M-theory,
there must of course be more to M-theory than seen in \hyperlink{Figure1}{Figure 1}.
But it is also clear that the differential refinement of Cohomotopy
cohomology theory discussed here (in \cref{ChargesInCohomotopy} below)
is to be further refined, notably by enhancing it with
super-differential flux form structure as in \cite{FSS15}\cite{FSS16a},
with ADE-equivariant structure as in \cite{ADE},
and with fiberwise stabilization as in \cite{GaugeEnhancement}.
This is to be discussed elsewhere.

\newpage

\noindent {\bf Gauge/Gravity duality.}
Collecting the observables and states emerging in
\hyperlink{Figure1}{Figure 1},
we observe that the {\it mathematical duality}
(as an instance of the concept described in \cite{PorstTholen91}\cite{Corfield17})
between higher (co-)observables \eqref{HigherObservables}
on $\mathrm{D}6\perp\mathrm{D}8$-branes
by \hyperlink{HypothesisH}{\it Hypothesis H}
\begin{equation}
  \label{DualityIsDuality}
  \raisebox{28pt}{
  \xymatrix@R=2pt{
    \mathllap{
      \mbox{
        \tiny
        \color{blue}
        Higher observables
      }
    }
    \;\;
    \overset{
      \mathllap{
      \mbox{
        \tiny
        \color{blue}

  }
\end{tabular}
}

\vspace{1mm}

\noindent { \footnotesize
 {\bf Figure 2 --
   Emergence of gauge/gravity duality}
   from a differentially refined version (\cref{ChargesInCohomotopy})
   of \hyperlink{HypothesisH}{\it Hypothesis H}.
}
\newpage

\medskip
\noindent {\bf Configuration spaces of intersecting branes seen in Cohomotopy.}
The brane intersections arising this way from
\hyperlink{HypothesisH}{\it Hypothesis H} \& Prop. \ref{FinalResult}
are transversal $p \perp (p+2)$-brane
intersections, specifically for $p = 6$ (Remark \ref{GeometricEngineeringOfMonopoles} below),
where $N_{\mathrm{f}}$ $(p+2)$-branes are arranged along
an axis and $N_{\mathrm{c}}={\sum}_{i =1}^{N_{\mathrm{f}}} N_{\mathrm{c},i}$
semi-infinite $p$-branes transversally intersecting them,
with $N_{\mathrm{c},i}$ of them coincident and ending on the $i$th $(p+2)$-brane.
The $p$-branes move along the
$\mathbb{R}^3$ inside the
$(p+2)$-branes which is normal to their intersection locus.
\begin{equation}
\label{ConfigurationSpaceOfBraneIntersections}
\left\{
\raisebox{-77pt}{

    }
  }{
    \underset{
      {}^{\{1,\cdots, N_{\mathrm{f}}\}}
    }{
      \mathrm{Conf}
    }
    (\mathbb{R}^3)
  }
\end{equation}\

%
%

\vspace{-2mm}
\noindent {\bf $\mathrm{D}p\perp\mathrm{D}(p+2)$-brane intersections as seen from nonabelian DBI theory.}
Transversal $\mathrm{D}p\perp\mathrm{D}(p+2)$-brane
intersections have been discussed in the literature using the
nonabelian DBI field theory \cite{Tseytlin97}\cite{Myers99}
which is expected on the worldvolume of coincident D-branes;
see \cite{Karch98}\cite{GiveonKutasov99}\cite{Fazzi17} for review.
The following table summarizes the main results of the traditional analysis
(on the left) and indication of emergence from Cohomotopy (on the right):

\medskip
\medskip

\hspace{-.6cm}
\begin{tabular}{|c|c|c||c|}
  \hline
 \!\!\! {\bf
    \begin{tabular}{c}
      Expected physics of
      \\
      $\mathrm{D}p\perp \mathrm{D}(p+2)$-brane
      \\
      intersections
    \end{tabular}
  }
  \!\!\!
  &
  {\bf
    Statement
  }
  &
  \!\!\!
  {
  \bf
  \begin{tabular}{c}
    Derivation from
    \\
    non-abelian DBI
  \end{tabular}
  }
  \!\!\!
  &
 \!\!\!\!\!\!\! {
  \bf
  \begin{tabular}{c}
    Emergence from
    \\
    \hyperlink{Hypothewsis H}{Hypothesis H}
  \end{tabular}
  }
  \!\!\!\!\!\!
  \\
  \hline
  \hline
  {\it Fuzzy funnel geometry}
  &
  \small
  \begin{minipage}[l]{6cm}
  $\mathclap{\phantom{\vert^{\vert}}}$
   A single
   $N_{\mathrm{c},i}$ $\mathrm{D}p \!\perp \! \mathrm{D}(p+2)$-intersection
   is described by non-commutative fuzzy funnel geometry
   (\hyperlink{su2RepToFuzzyFunnel}{Figure 3}),
   identified, via Nahm's equations,
   with Yang-Mills monopoles in the
   $\mathrm{D}(p+2)$-brane worldvolume.
   $\mathclap{\phantom{\vert_{\vert_{\vert}}}}$
  \end{minipage}
  &
  \begin{minipage}[l]{3cm}
    $\mathclap{\phantom{\vert^{\vert}}}$
    \newline \cite{Diaconescu97}\cite{CMT99}
    \cite{HananyZaffaroni99}\cite{Myers01}
    \cite{ConstableLambert02}\cite{BarrettBowcock04}
    \cite{RST04}\cite{McNamara06}
    \cite{MPRS06}
    $\mathclap{\phantom{\vert_{\vert_\vert}}}$
  \end{minipage}
  &
  \cref{su2WeightSystemsAreFuzzyFunnelObservables}
  \\
  \hline
  {\it Hanany-Witten rules}
  &
  \small
  \begin{minipage}[l]{6cm}
  $\mathclap{\phantom{\vert^{\vert^{\vert}}}}$
  The collection of all $N_{\mathrm{f}}$
  $\mathrm{D}p \!\perp\! \mathrm{D}(p+2)$-brane
  intersections is subject to combinatorial rules,
  such as the \emph{s-rule} and the \emph{ordering constraint}.
  $\mathclap{\phantom{\vert_{\vert_{\vert}}}}$
  \end{minipage}
  &
  \begin{minipage}[l]{3cm}
    \cite{HananyWitten97}\cite{BGS97}
    \cite{BachasGreen98}\cite{HoriOoguriOz98}
    \cite{GiveonKutasov99}\cite{GaiottoWitten08}
  \end{minipage}
  &
  \cref{HananyWittenTheory}
  \\
  \hline
\end{tabular}

\vspace{4mm}
\noindent {\bf The open problem of the Non-abelian DBI action.}
While these traditional discussions undoubtedly yield a compelling picture,
it is worth recalling that (in contrast to the abelian case of non-coincident branes)
there is to date \emph{no derivation} from perturbative string theory
of the nonabelian DBI-action for coincident D-branes, as highlighted
in \cite[p. 1]{TaylorRaamsdonk00}\cite[p. 2]{Schwarz01}\cite[p. 5]{Chemissany04}.
The commonly used
{symmetrized trace prescription} of \cite{Tseytlin97}\cite{Myers99},
is somewhat ad-hoc;
and it is known not to be correct
at higher orders \cite{HashimotoTaylor97}\cite{BBdRS01}.
Some correction terms have been proposed in \cite{TaylorRaamsdonk00},
and different proposal for going about the non-abelian DBI-action
has recently been made in \cite{BFS19}.
In contrast, here we find
key expected consequences
of non-abelian DBI-Lagrangians
for intersecting brane physics
emerge from \hyperlink{HypothesisH}{Hypothesis H}
in a \emph{non-Lagrangian} way altogether.

\medskip

\medskip

\noindent {\bf Outline:} The paper is outlined as follows:

\noindent In
\cref{ChargesInCohomotopy}
we introduce the \emph{differential} \hyperlink{HypothesisH}{\it Hypothesis H}
and show that it implies weight systems as higher observables.

\noindent In \cref{WeightSystemsOnChordDiagrams} we recall weight systems on
chord diagrams, streamlined towards our applications.

\noindent In
\cref{WeightSystemsAsObservablesOnIntersectingBranes}
we observe that weight system observables
reflect a variety of effects in intersecting brane physics.

\newpage

\section{Intersecting brane charges in differential Cohomotopy}
\label{ChargesInCohomotopy}

\vspace{-1mm}
\noindent {\bf The open problem of formulating a genuine theory of brane physics.}
As indicated in the Introduction, despite all the discussion of (well-supported but conjectural) aspects of
intersecting brane physics,
an actual formulation of a non-perturbative quantum theory of branes,
namely of \emph{M-theory} \cite{Duff99B}, has remained an open problem
\cite[6]{Duff96}\cite[p. 2]{HoweLambertWest97}\cite[p. 6]{Duff98}\cite[p. 2]{NicolaiHelling98}\cite[p. 330
]{Duff99B}\cite[12]{Moore14}\cite[p. 2]{ChesterPerlmutter18}\cite
{Witten19}\cite{Duff19}.
The need for an identification of the non-perturbative
theory has recently become manifest
with the community no longer able to agree on the
validity of brane constructions that have been discussed for many years
\cite{DanielssonVanRiet18}\cite[p. 14-22]{Banks19}.
Even the very ingredients of such a theory have remained open.

\medskip

\noindent {\bf Charge quantization in generalized cohomology theory.}
On the other hand, the low energy limit of M-theory is supposed
to be $D = 11$ supergravity \cite{CJS78}, whose only ingredient,
besides the field of gravity, is the \emph{C-field},
the higher analog of the B-field in string theory, which in turn
is the higher analog of the ``A-field'' in particle physics,
namely of the Maxwell field, i.e., of the abelian Yang-Mills field.
But a famous insight going back to Dirac (see \cite{Heras18})
says that in its non-perturbative quantum theory, the Maxwell field
becomes subject to a refinement known as
\emph{Dirac charge quantization} (see \cite{Freed00} for a general treatment).
In modern formulation this means that the flux density of the field (the Faraday tensor),
which a priori seems to be just a differential 2-form, is promoted to
a cocycle in differential ordinary 2-cohomology theory.
Later, a directly analogous topological constraint has been argued
to apply to the B-field in string theory, where up to some fine print,
what naively looks like the flux density 3-form of the B-field
is argued to really be regarded as being charge-quantized in
differential ordinary 3-cohomology theory (see \cite{Bry}).
One might suspect an evident pattern here, which would seem to
continue with the suggestion that the M-theory C-field
needs to be regarded as charge-quantized
in differential ordinary 3-cohomology theory, up to some fine print
(\cite{DFM03}\cite{HopkinsSinger05}\cite{tcu}\cite{FSS14a}).
On the other hand, and in contrast to the C-field in M-theory,
the B-field in string theory does not exist in isolation;
instead, it couples to the RR-field. The combination of the B-field
and the RR-field has famously and widely been argued to be
charge quantized in a differential \emph{generalized} cohomology theory,
namely in some version of twisted K-theory
(see \cite{GS} and also \cite[2]{GaugeEnhancement} for pointers and discussion in our context).

\medskip

\noindent {\bf Generalized cohomology theory for C-field charge quantization in M-theory.}
All this rich structure in string theory is -- somehow --
supposed to lift to just the metric field and the C-field in M-theory.
This suggests that the M-theory C-field itself must
be regarded as being charge-quantized in some rich generalized
cohomology theory \cite{Sa1}\cite{Sa2}\cite{Sa3}\cite{tcu}
such as Cohomotopy cohomology theory \cite[2.5]{Sati13}.
Based on a systematic analysis in super rational homotopy theory
of the $\kappa$-symmetry super $p$-brane WZW terms
\cite{FSS13}\cite{FSS15}\cite{FSS16a}\cite{FSS16b}
\cite{GaugeEnhancement}
(see \cite{FSS19a} for review),
a concrete hypothesis for
this generalized cohomological charge quantization of the C-field
was formulated in \cite{FSS19b}:

\vspace{0mm}
\begin{center}
\hypertarget{HypothesisH}{}
\fbox{{\bf Hypothesis H.} {\it The M-theory C-field is charge-quantized in J-twisted Cohomotopy theory.}
}
\end{center}

\vspace{0mm}
\noindent In a series of articles
\cite{FSS19b}\cite{FSS19c}\cite{SS19a}\cite{SS19b}
various implications of
this {\it Hypothesis H}
have been checked
to agree with various expected aspects of M-theory
in the topological sector, i.e.,
in the approximation where only the homotopy type of spacetime
is taken into account.

\medskip

\noindent {\bf Differential Cohomotopy and intersecting branes.}
Here we consider a partial refinement of Cohomotopy cohomology theory to
a \emph{differential} cohomology theory, which is sensitive at least to
the homeomorphism type of spacetime (Prop. \ref{DifferentialCohomotopyOnPenroseDiagrams} below).
Then we prove (Prop. \ref{FinalResult} below)  that this charge quantization of the C-field
in differential Cohomotopy theory implies that the cocycle space of intersecting  D6-D8-brane
charges is the ordered configuration space of points
as in \eqref{ConfigurationSpaceOfBraneIntersections}.
This means that:

\begin{enumerate}[{\bf (1)}]
\vspace{-2mm}
\item The higher observables \eqref{HigherObservablesOnD6D8Intersections} in \cref{WeightSystemsOnChordDiagrams}
and hence, by \eqref{ObservablesAreWeightSystems},
the weight systems on chord diagrams in \cref{WeightSystemsOnChordDiagrams}
are the quantum observables on intersecting brane moduli that
are implied by \hyperlink{HypothesisH}{Hypothesis H}.

\vspace{-2.5mm}
\item Therefore, also the aspects of intersecting brane physics
that are reflected in weight systems on chord diagrams according
to the discussion in \cref{WeightSystemsAsObservablesOnIntersectingBranes}
are implications of \hyperlink{HypothesisH}{Hypothesis H}.
\end{enumerate}

\newpage

\subsection{Charges vanishing at infinity}
\label{VanishingAtInfinity}

\noindent {\bf Points at infinity.}
For the following definitions applied to physics,
we are to think of all boundaries and  base points
as representing ``points at infinity''.
We write $\mathbb{D}^n$ for the \emph{closed} $n$-disk
with boundary $\partial \mathbb{D}^n \simeq S^{n-1}$
and interior $\mathrm{Int}(\mathbb{D}^n) \simeq \mathbb{R}^n$.
We write $(-)^{\mathrm{cpt}}$ for the one-point compactification
of a topological space, so that
\begin{equation}
  \label{OnePointCompactificationOfEuclideanSpace}
  (\mathbb{R}^n)^{\mathrm{cpt}}
  \;\simeq\;
  \mathbb{D}^n / \partial \mathbb{D}^n
  \;\simeq\;
  S^n
\end{equation}
and we write
$$
  \infty \;\in\; (\mathbb{R}^n)^{\mathrm{cpt}}
$$
for the extra point. This is literally the \emph{point at infinity},
and under the above equivalences, all points on the boundary
of $\mathbb{D}^n$ get identified with it:
$$
  \xymatrix@R=1em@C=3em{
    &
    \overset{
      \mathclap{
      \mbox{
        \tiny
        \color{blue}

  }
  \\
  \hline
\end{tabular}
}
\end{equation}

\vspace{-.2cm}

\subsection{Configuration spaces of points}
\label{ConfigurationSpacesOfPoints}

We now first recall, in Def. \ref{ConfigurationSpaces}, the relevant definitions of configuration spaces of points
(see e.g. \cite[1]{Boedigheimer87}).
Then we observe,
in Prop. \ref{OrderedUnlabeledAsFiberProduct},
a certain
relation between un-ordered and ordered configuration spaces of points.
This is the key to relating differential Cohomotopy to intersecting branes
in \cref{IntersectingD6D8BraneChargesInDifferentialCohomotopy}.

\medskip
\begin{defn}[Configuration spaces of points]
\label{ConfigurationSpaces}
Let $\Sigma^D$ be a smooth manifold
with (a possibly empty) boundary $\partial \Sigma^D \hookrightarrow \Sigma^D$.
For $k \in \mathbb{N}$, with
$\mathbb{D}^k$ denoting the closed $k$-disk, $\Delta$ the diagonal, and
${\rm Sym}_n$ the symmetric group of order $n$,
we consider the following topological configuration spaces of points
in $\Sigma^D$, possibly with labels in $\mathbb{D}^k$:
\begin{center}


\vspace{.2cm}

\begin{minipage}[l]{17cm}
\noindent {\footnotesize \bf An element of the
unordered $\mathbb{D}^1$-labeled  configuration space
$\mathrm{Conf}\big( \mathbb{R}^3, \mathbb{D}^1 \big)$  }
{\footnotesize
according to Def. \ref{ConfigurationSpaces},
is a set of points in $\mathbb{R}^3 \times \mathbb{R}^1$
with distinct projections to $\mathbb{R}^3 \times \{0\}$. The topology
is such that points moving to infinity along
$\mathbb{R}^1$ (i.e., to the boundary of $\mathbb{D}^1$) disappear.
}
\end{minipage}
\end{center}

\medskip
In order to study all possible configurations, we introduce the following useful notion.

\begin{defn}[Category of Penrose diagrams]
\label{ContravariantlyFunctorialConfigurationSpaces}
For $p \in \mathbb{N}$ we write
\begin{equation}
  \label{CategoryOfPenroseDiagrams}
  \mathrm{PenroseDiag}_p
  \;\;:=\;\;
  \left\{\!\!\!\!\!
  \mbox{

      }
      }
    }{
      \mathrm{Conf}
      \big(
        \mathbb{R}^d,
        \mathbb{D}^{p-d}
      \big)
    }
    \ar[d]_-{ (i^\ast)_\ast }
    &
   \hspace{-1.2cm} \mathrlap{
    \mbox{for $d \geq 1$}
    }
    \\
    (\mathbb{R}^{d'})^{\mathrm{cpt}}
    \wedge
    (\mathbb{R}^{p-{d'}})_+
   ~ \ar@{<-^{)}}[r]
    \ar@{^{(}->}[u]
    &
    ~
    \mathbb{R}^p
    \ar@{^{(}->}[u]^-i
    &
    \hspace{-5mm}
    \mapsto
    \hspace{-5mm}
    &
    \mathbb{R}^{d'} \times \mathbb{D}^{p-{d'}}/\mathrm{bdry}
    &
    \hspace{-5mm}
    \mapsto
    \hspace{-5mm}
    &
    \mathrm{Conf}
    \big(
      \mathbb{R}^{d'},
      \mathbb{D}^{p-{d'}}
    \big)
    & \hspace{-1.2cm}
    \mathrlap{
    \mbox{for $d' \geq 1$}
    }
  }
  }
\end{equation}
In the special case that the domain of the map is the Penrose
diagram with no compactified dimensions, we set:
\begin{equation}
  \label{ConfFunctorForVanishingdPrime}
  \hspace{-2cm}
  \raisebox{32pt}{
  \xymatrix{
    (\mathbb{R}^d)^{\mathrm{cpt}}
    \wedge
    (\mathbb{R}^{p-d})_+
    \ar@{<-^{)}}[r]
    &
    ~
    \mathbb{R}^p
    &
    \hspace{-5mm}
    \mapsto
    \hspace{-5mm}
    &
    \mathbb{R}^d \times \mathbb{D}^{p-d}/\mathrm{bdry}
    \ar[d]_{i^\ast}^-{
      \mbox{
        \tiny
        $
        {

    }
    }
  }{
    \mathrm{Top}^{\ast/}
  }
  \end{equation}
by defining its action on morphisms as shown on the right of
the above diagrams \eqref{ConfFunctorForPositivedPrime} and
\eqref{ConfFunctorForVanishingdPrime}.
\end{defn}

\begin{example}[Maps of configuration spaces for ordered fiber product]
\label{MapsOfConfigurationSpacesForOrderedFiberProduct}
We are going to be interested in the following
pairs of maps of Penrose diagram spaces \eqref{PenroseDiagramSpaces}
and their induced maps of configuration spaces, according to
Def. \ref{ContravariantlyFunctorialConfigurationSpaces}:
$$
\hspace{-.5cm}
\scalebox{.88}{
$

      }
    }{
      \mathrm{Conf}\big(\mathbb{R}^1, \mathbb{D}^D \big)
    }
    \,,
  \end{equation}
  where the fiber product on the right is that induced
  from the maps in Example \ref{MapsOfConfigurationSpacesForOrderedFiberProduct}.
\end{prop}
\proof
  We compute as follows (where all topologies are the evident ones) --
  see \hyperlink{FigureO}{Figure O} for illustration of the
  logic behind the argument:
  $$
    \begin{aligned}
        \mathrm{Conf}\big(\mathbb{R}^D, \mathbb{D}^1 \big)
&    \underset{
      \;\;\;
      \mathclap{
        \mathrm{Conf}
        (
          \mathbb{D}^{D+1}
        )
      }
      \;\;\;
    }{\;\;\;\;\;\;
      \times
    \;\;\;\;\;}
    \mathrm{Conf}\big(\mathbb{R}^1, \mathbb{D}^D \big)
    \\
    &
    \;\;\;
    \underset{
      \mathclap{
      \mbox{
        \tiny
        homeo
      }
      }
    }{\simeq}
    \;\;\;
    \underset{n \in \mathbb{N}}{\sqcup}
    \Big\{ \!\!\!
      \left.
      \big\{
        (\vec x_i, y_i ) \in \mathbb{R}^D \times \mathbb{R}^1
      \big\}_{i = 1}^n
      \,
      \right\vert
      \,
      \underset{i \neq j}{\forall}
      \big(
        \vec x_i \neq \vec x_j
        \;\;\mbox{and}\;\;
        y_i \neq y_j
      \big)
    \Big\}\big/\mathrm{Sym}(n)
    \\
    &
    \;\;\;
    \underset{
      \mathclap{
      \mbox{
        \tiny
        homeo
      }
      }
    }{\simeq}
    \;\;\;
    \underset{n \in \mathbb{N}}{\sqcup}
    \Big\{ \!\!\!
      \left.
      \big(
        \big\{\vec x_i \in \mathbb{R}^D \big\}_{i = 1}^n,
        \sigma \in \mathrm{Sym}(n),
        (d_0, d_1, \cdots, d_{n-1}) \in \mathbb{R}^1 \times (\mathbb{R}^1_+)^{n-1}
      \big)
      \,
      \right\vert
      \,
      \underset{i \neq j}{\forall}
      \big(
        \vec x_i \neq \vec x_j
      \big) \!
    \Big\}\big/\mathrm{Sym}(n)
    \\
    &
    \;\;\;
    \underset{
      \mathclap{
      \mbox{
        \tiny
        hmtpy
      }
      }
    }{\simeq}
    \;\;\;
    \underset{n \in \mathbb{N}}{\sqcup}
    \Big\{ \!\!\!
      \left.
      \big(
        \big\{\vec x_i \mathbb{R}^D \big\}_{i = 1}^n,
        \sigma \in \mathrm{Sym}(n)
      \big)
      \,
      \right\vert
      \,
      \underset{i \neq j}{\forall}
      \big(
        \vec x_i \neq \vec x_j
      \big)
    \Big\}\big/\mathrm{Sym}(n)
    \\
    &
    \;\;\;
    \underset{
      \mathclap{
      \mbox{
        \tiny
        homeo
      }
      }
    }{\simeq}
    \;\;\;
    \underset{n \in \mathbb{N}}{\sqcup}
    \Big\{ \!\!\!
      \left.
      \big\{\vec x_i \in \mathbb{R}^D \big\}_{i = 1}^n
      \,
      \right\vert
      \,
      \underset{i \neq j}{\forall}
      \big(
        \vec x_i \neq \vec x_j
      \big)
    \Big\}
    \\
    &
    \;\;\;
    =
    \;\;\;
    \underset{n \in \mathbb{N}}{\sqcup}
    \;
    \underset{
      {}^{\{1,\cdots,n\}}
    }{\mathrm{Conf}}(\mathbb{R}^D)\;.
    \end{aligned}
  $$
  Here the first step just unwinds the definition of the fiber product.
  In the second step we encode an $n$-tuple of pairwise distinct real numbers
  $(y_1, y_2, \cdots, y_n)$ equivalently as a pair consisting
  of the permutation $\sigma$ that puts them into linear order
  and the tuple $(d_0, d_1, \cdots, d_{n-1})$ of their relative positive distances:
  $$
    \xymatrix@R=10pt{
      y_{\sigma_1}
      \ar@{=}[d]
      \ar@{}[r]|-{<}
      &
      y_{\sigma_2}
      \ar@{=}[d]
      \ar@{}[r]|-{<}
      &
      y_{\sigma_3}
      \ar@{=}[d]
      \ar@{}[r]|-{<}
      &
      \cdots
      \ar@{}[r]|-{<}
      &
      y_{\sigma_n}
      \ar@{=}[d]
      \\
      d_0
      \ar@{}[r]|-{<}
      &
      d_0 + d_1
      \ar@{}[r]|-{<}
      &
      d_0 + d_1 + d_2
      \ar@{}[r]|-{<}
      &
      \cdots
      \ar@{}[r]|-{<}
      &
      \underoverset{i = 0}{n-1}{\sum} d_i
    }
  $$
  In the third step we use that the space of these
  relative distances is, clearly, homotopy equivalent to the
  point: $\mathbb{R}^1 \times (\mathbb{R}^1)^{n-1}
  \underset{\mathrm{hmtpy}}{\simeq} \ast$.
  In the fourth step we use that
  $
    (X \times G) /_{\!{}_{\mathrm{diag}}} G
    \underset{\mathrm{homeo}}{\simeq}
    X
  $
  for any $G$-space $X$. The last step
  recognizes the ordered configuration space
  according to Def. \ref{ConfigurationSpaces}.
\hspace{7cm} \endproof

The content of Prop. \ref{OrderedUnlabeledAsFiberProduct} is illustrated
by the following graphics:

\begin{center}
\hypertarget{FigureO}{}


\begin{minipage}[l]{17cm}
\noindent {\footnotesize \bf
Figure O --
The ordered unlabeled configuration space is a fiber product of
unordered labeled configuration spaces
}
{\footnotesize
according to Prop. \ref{OrderedUnlabeledAsFiberProduct}:
A linearly ordered configuration of points in
$\mathbb{R}^3$ is the same as {\bf (a)} an unordered configuration in
$\mathbb{R}^3 \times \mathbb{R}^1$ which projects to {\bf (b)}
 an unordered $\mathbb{D}^1$-labelled configuration in $\mathbb{R}^3$
as well as to {\bf (c)} an unordered $\mathbb{D}^3$-labelled configuration in
$\mathbb{R}^1$.
Condition {\bf (c)} equips the configuration from condition {\bf (b)} with
a linear ordering.
}
\end{minipage}

\end{center}

\subsection{Differential Cohomotopy cocycle spaces}
\label{CohomotopyChargeMap}

For the following, we take $X$
to be a locally compact pointed topological space
of the homotopy type of a CW-complex, for example
one of the Penrose diagram spaces \eqref{PenroseDiagramSpaces}
discussed in \cref{VanishingAtInfinity}.

\medskip

\noindent {\bf Plain Cohomotopy cohomology theory.}
For $p \in \mathbb{N}$
a degree, the \emph{cocycle space of $p$-Cohomotopy theory on $X$}
is the pointed mapping space from $X$ to the $p$-sphere:
\begin{equation}
  \label{CohmotopyCocycleSpace}
  \boldpi^p
  (
    X
  )
  \;:=\;
  \mathrm{Maps}^{\ast/\!\!}
  \big(
    X,
    S^p
  \big)
  \,.
\end{equation}
The set of connected component of this space is the
actual \emph{$p$-Cohomotopy set} of $X$:
\begin{equation}
  \label{CohmotopySet}
 \pi^p
  (
    X
  )
  \;:=\;
  \pi_0
  \Big(
  \mathrm{Maps}^{\ast/\!\!}
  \big(
    X,
    S^p
  \big)
  \Big)
  \,.
\end{equation}
This implies that the homotopy type of $\boldpi^n(X)$,
and so in particular the isomorphism class of $\pi^n(X)$,
depend only on the homotopy type of $X$.
The resulting (contravariant) functorial assignment
\vspace{-2mm}
\begin{equation}
  \label{PlainCohomotopyFunctor}
  \overset{
    \mathclap{
    \mbox{
      \tiny
      \color{blue}

    }
    }
  }{
    \underbrace{
      \pi^p(X)
    }
  }\;.
\end{equation}
In full generality, $\boldpi^p(X)$ may be a \emph{cohesive $\infty$-stack}, but for
our purpose here it is sufficient to allow $\boldpi^p(X)$ to be a manifold,
or even just a topological space (understood up to homeomorphism,
instead of up to homotopy equivalence), which is a special
simple example of cohesive $\infty$-stacks. In this simple case
the operation $\mbox{\textesh}(-)$ of computing underlying homotopy types
is just the usual way of regarding a topological space as
a representative of its homotopy type, and hence we will not
further display it.

\medskip

\noindent
{\bf Configuration spaces as differential Cohomotopy cocycle spaces.}
The following statement provides a solution to the
constraint \eqref{ConditionOnDifferentialCohomotopy} on
a differential refinement of Cohomotopy cohomology theory,
in the case when $X$ is a Penrose diagram space \eqref{PenroseDiagramSpaces}.
Applying the results from \cite[2.7]{May72}\cite[3]{Segal73} in our setting
leads us to the following.

\begin{prop}[Labelled configuration spaces via Cohomotopy cocycles]
 \label{MaySegalTheoorem}
 For any natural numbers $d < p \in \mathbb{N}$, the
 un-ordered configuration space
 $\mathrm{Conf}\big( \mathbb{R}^d, \mathbb{D}^{p-d}\big)$
 of points in $\mathbb{R}^d$
 with labels in $\mathbb{D}^{p-d}$ (Def. \ref{ConfigurationSpaces})
 has the homotopy type of the plain $p$-Cohomotopy cocycle space
 \eqref{CohmotopyCocycleSpace} of the one-point compactified
 $d$-dimensional Euclidean
 space $(\mathbb{R}^d)^{\mathrm{cpt}}$ \eqref{OnePointCompactificationOfEuclideanSpace}:
\begin{equation}
  \label{CohomotopyChargeMap}
  \hspace{-2cm}
  \xymatrix@C=6em{
    \underset{
      \mbox{
        \tiny
        \color{blue}
        \begin{tabular}{c}
          $\phantom{a}$
          \\
          Un-ordered configuration space
          \\
          of points in $\mathbb{R}^d$
          \\
          with labels in $\mathbb{D}^{p-d}$
        \end{tabular}
      }
    }{
      \mathrm{Conf}\big( \mathbb{R}^d, \mathbb{D}^{p-d}  \big)
    }
    \ar[rr]^-{
      \mathclap{
      \mbox{
        \tiny
        \color{blue}
        \begin{tabular}{c}
          send configuration of points
          \\
          to their Cohomotopy charge
          \\
          $\phantom{a}$
        \end{tabular}
      }
      }
    }_-{
      \underset
      {
        \mathclap{
        \mbox{
          \tiny
          hmtpy
        }
        }
      }
      {
        \simeq
      }
    }
  &&
    \underset{
      \mathclap{
      \mbox{
        \tiny
        \color{blue}
        \begin{tabular}{c}
          $\phantom{a}$
          \\
          Cocycle space of
          \\
          $p$-Cohomotopy cohomology theory
          \\
          on the one-point compactification
          \\
          of $d$-dim Euclidean space
        \end{tabular}
      }
      }
    }{
      \boldpi^{p}
      \big(
        (\mathbb{R}^{d})^{\mathrm{cpt}}
      \big)
    }
  }
  \mathrlap{
    \phantom{AAAA}
    \mbox{for $d < p$}.
  }
\end{equation}
\end{prop}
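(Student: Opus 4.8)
The plan is to recognize $\mathrm{Conf}\big(\mathbb{R}^d, \mathbb{D}^{p-d}\big)$ as a classical \emph{labelled configuration space} with labels in a sphere, and then to invoke the configuration-space model for iterated loop spaces of May and Segal \cite[2.7]{May72}\cite[3]{Segal73}. First I would observe that the ``vanishing at infinity'' convention of Def.~\ref{ConfigurationSpaces} turns the label disk $\mathbb{D}^{p-d}$ into a \emph{pointed} label space whose basepoint is the collapsed boundary: a point whose label reaches $\partial \mathbb{D}^{p-d}$ is disregarded, so the effective space of labels is $\mathbb{D}^{p-d}/\partial \mathbb{D}^{p-d} \simeq (\mathbb{R}^{p-d})^{\mathrm{cpt}} \simeq S^{p-d}$ by \eqref{OnePointCompactificationOfEuclideanSpace}. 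Under this identification,
\[
  \mathrm{Conf}\big(\mathbb{R}^d, \mathbb{D}^{p-d}\big)
  \;\simeq\;
  C\big(\mathbb{R}^d;\, S^{p-d}\big),
\]
the free little-$d$-cubes algebra (labelled configuration space) on the pointed space $S^{p-d}$.

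The core step is then the May--Segal approximation theorem, which supplies a natural ``scanning'' map from the labelled configuration space to the $d$-fold loop space of the $d$-fold suspension of the labels,
\[
  C\big(\mathbb{R}^d;\, Y\big)
  \xrightarrow{\;\simeq\;}
  \Omega^d \Sigma^d Y,
\]
and asserts that it is a weak homotopy equivalence whenever the label space $Y$ is path-connected. Here the hypothesis $d < p$ is exactly what is needed: it forces $p - d \geq 1$, so that $Y = S^{p-d}$ is connected and the scanning map is a genuine equivalence (rather than only a group completion). I would therefore apply this with $Y = S^{p-d}$.

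Finally I would compute the target. Since $\Sigma^d S^{p-d} \simeq S^{p}$, we obtain $\Omega^d \Sigma^d S^{p-d} \simeq \Omega^d S^p = \mathrm{Maps}^{\ast/}\big(S^d, S^p\big)$; and using $(\mathbb{R}^d)^{\mathrm{cpt}} \simeq S^d$ from \eqref{OnePointCompactificationOfEuclideanSpace} this is precisely the $p$-Cohomotopy cocycle space $\boldpi^p\big((\mathbb{R}^d)^{\mathrm{cpt}}\big)$ of \eqref{CohmotopyCocycleSpace}. Composing the three equivalences yields the asserted homotopy equivalence, and unwinding the scanning construction identifies it with the map ``send a configuration of points to its Cohomotopy charge'' depicted in \eqref{CohomotopyChargeMap}.

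The step I expect to be the main obstacle is the point-set bookkeeping of the first paragraph: one must check that the topology of $\mathrm{Conf}(\mathbb{R}^d, \mathbb{D}^{p-d})$ coming from Def.~\ref{ConfigurationSpaces} — in particular the identification relation $/\!\sim$ at infinity and the degeneration of points whose labels hit $\partial\mathbb{D}^{p-d}$ — agrees on the nose with the topology of the standard labelled configuration space $C(\mathbb{R}^d; S^{p-d})$ to which the cited theorems apply. Once this identification is secured, the connectivity hypothesis $d<p$ makes the remaining two steps entirely formal.
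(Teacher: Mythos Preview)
Your proposal is correct and takes essentially the same approach as the paper: the paper does not spell out a proof but simply introduces the proposition with ``Applying the results from \cite[2.7]{May72}\cite[3]{Segal73} in our setting leads us to the following,'' and your argument is precisely the unwinding of those citations in this context --- identifying the labels with $S^{p-d}$, invoking the May--Segal scanning equivalence $C(\mathbb{R}^d;Y)\simeq\Omega^d\Sigma^d Y$ for connected $Y$, and recognizing $\Omega^d S^p$ as the Cohomotopy cocycle space. Your explicit observation that the hypothesis $d<p$ is exactly what guarantees connectivity of the label sphere (so that scanning is an equivalence rather than merely a group completion) is a useful clarification that the paper leaves implicit.
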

\begin{remark}[Cohomotopy charge map]
  The \emph{Cohomotopy charge map} \eqref{CohomotopyChargeMap}
  is described in detail in \cite{SS19a}, with many illustrations, and generalized to
  equivariant Cohomotopy of flat orbifolds.  Notice that this map has originally
  been called the \emph{electric field map} \cite{Segal73},  in an attempt to think
  of it as assigning a physical field sourced by a configuration of charged points.
 While this physics interpretation seems to superficially make sense for representative
 maps, it is incompatible with the passage to homotopy classes on the right side of
 \eqref{CohomotopyChargeMap} (which does not reflect the passage to  gauge equivalence
 classes of electric fields). Instead, the claim of \hyperlink{HypothesisH}{Hypothesis H}
  is that  the actual physics interpretation of the Cohomotopy charge map
  \eqref{CohomotopyChargeMap} is as assigning brane charge in M-theory.
\end{remark}
\begin{example}[Unlabeled from labeled]
  The special case of Prop. \ref{MaySegalTheoorem} with
  $d = 0$ is evident:
  $$
  \xymatrix{
    \boldpi^{p}
    \big(\,
      \underset{
        = S^0
      }{
      \underbrace{
        (\mathbb{R}^{0})^{\mathrm{cpt}}
      }
      }\,
    \big)
    \ar[rr]_{
      \underset
      {
        \mathclap{
        \mbox{
          \tiny
          hmtpy
        }
        }
      }
      {
        \simeq
      }
    }
  &&
  \mathrm{Conf}\big( \mathbb{R}^0, \mathbb{D}^{p}  \big)
  }
  $$
  since now the
  left hand side is the space of maps from a
  single point to $S^p$,
  while right hand side is the space of labels in
  $S^p$ carried by a single point. Both of these spaces
  are canonically homeomorphic to $S^p$ itself.
\end{example}
But there is an alternative equivalence pertaining to this degenerate case,
which is again non-trivial.
  Applying  \cite[p. 95]{McDuff75}\cite[Example 11]{Boedigheimer87} to our setting we get
  the following.
\begin{prop}[Configurations vanishing at the boundary]
  \label{ConfigurationsOnRdVanishingAtTheBoundary}
  There is a homotopy equivalence
  \vspace{-2mm}
  $$
    \xymatrix{
    \boldpi^{p}
    \big(
      (\mathbb{R}^{0})^{\mathrm{cpt}}
    \big)
      \ar[rr]^-{\rm
                 hmtpy
          }_
          {\simeq}
            &&
      \mathrm{Conf}
      \big(
        \mathbb{D}^p, \mathbb{D}^0
      \big)
    }.
  $$
\end{prop}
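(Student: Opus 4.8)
The plan is to exhibit the asserted homotopy equivalence as the inclusion of the lowest stratum of a cardinality filtration, and to identify the source via the degenerate case recorded just above. First I would unwind the left-hand side: since $(\mathbb{R}^0)^{\mathrm{cpt}} = S^0$, a based map $S^0 \to S^p$ is determined by the image of the non-basepoint, so $\boldpi^p\big((\mathbb{R}^0)^{\mathrm{cpt}}\big) = \mathrm{Maps}^{\ast/}(S^0, S^p)$ is canonically homeomorphic to $S^p$, exactly as in the preceding Example. On the right-hand side, $\mathrm{Conf}(\mathbb{D}^p, \mathbb{D}^0)$ is, by Def. \ref{ConfigurationSpaces}, the trivially labelled configuration space of points in the interior $\mathrm{Int}(\mathbb{D}^p) \cong \mathbb{R}^p$, with points disregarded as they reach $\partial\mathbb{D}^p$. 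It contains the subspace $\mathrm{Conf}_{\leq 1}(\mathbb{D}^p)$ of configurations of at most one point; since a single point vanishes precisely when it reaches the boundary, this subspace is exactly $\mathbb{D}^p/\partial\mathbb{D}^p \cong S^p$, with the empty configuration as basepoint. Thus the equivalence to be constructed is the composite $S^p \cong \mathrm{Conf}_{\leq 1}(\mathbb{D}^p) \hookrightarrow \mathrm{Conf}(\mathbb{D}^p, \mathbb{D}^0)$, and it remains to show that this bottom-stratum inclusion is a homotopy equivalence.

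Second, to control the higher strata I would filter $\mathrm{Conf}(\mathbb{D}^p, \mathbb{D}^0)$ by the number of points, $F_k := \mathrm{Conf}_{\leq k}(\mathbb{D}^p)$, so that $F_1 = S^p$ is the bottom piece. Because distinct points may neither collide (coincidences lie ``at infinity'') nor survive at $\partial\mathbb{D}^p$, one checks that the successive quotients are the one-point compactifications $F_k/F_{k-1} \cong \big(\mathrm{Conf}_k(\mathbb{R}^p)/\mathrm{Sym}(k)\big)^{\mathrm{cpt}}$, all coincidence- and boundary-loci being collapsed to the basepoint. The content of the proposition is then that the contributions of the strata with $k \geq 2$ cancel, so that $F_1 \hookrightarrow \mathrm{Conf}(\mathbb{D}^p, \mathbb{D}^0)$ induces an isomorphism on homology; this is precisely the computation carried out in \cite[p.~95]{McDuff75} and \cite[Example~11]{Boedigheimer87} for the trivially labelled, spatially vanishing configuration space of a disk. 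Combined with a check that the inclusion is also a $\pi_1$-isomorphism (automatic for $p \geq 2$, and verified directly for $p = 1$, where $F_1 = S^1$), Whitehead's theorem upgrades the homology isomorphism to the desired homotopy equivalence.

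The hard part will be exactly this cancellation of the cardinality-$\geq 2$ strata, i.e. the topological input of \cite{McDuff75}\cite{Boedigheimer87}. The subtlety to be careful about is that the relevant topology is the \emph{free} (spatial) vanishing of Def. \ref{ConfigurationSpaces}, under which a lone point slid to $\partial\mathbb{D}^p$ becomes the empty configuration; this is what makes $\mathrm{Conf}(\mathbb{D}^p, \mathbb{D}^0)$ connected and equivalent to $S^p$, and it must be distinguished from the non-vanishing monoid $\bigsqcup_k \mathrm{Conf}_k(\mathbb{R}^p)/\mathrm{Sym}(k)$, whose group completion is $\Omega^p S^p$ rather than $S^p$. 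This degeneracy is also why the statement is kept separate from Prop. \ref{MaySegalTheoorem}: here the reduced label space is the disconnected $S^0 = \mathbb{D}^0/\partial\mathbb{D}^0$, so the connected-label scanning theorem does not apply and the bottom-cell argument above is needed in its place.
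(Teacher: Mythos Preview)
The paper gives no argument here beyond the citation to \cite[p.~95]{McDuff75} and \cite[Example~11]{Boedigheimer87}, so at the level of invoking those references your proposal and the paper coincide. However, your framing of what those references do is not accurate, and the filtration argument you wrap around the citation has a gap.

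The cited results do not proceed by showing that higher-cardinality strata ``cancel'' in homology. They are scanning-map theorems: for a compact parallelizable $p$-manifold $M$ with $\partial M = \partial_0 M \cup \partial_1 M$ and particles vanishing along $\partial_0 M$, one has $C(M,\partial_0 M; X) \simeq \Gamma(M,\partial_1 M;\Sigma^p X)$. Taking $M = \mathbb{D}^p$ with $\partial_0 M = \partial\mathbb{D}^p$, hence $\partial_1 M = \emptyset$, and $X = S^0$, the section space carries \emph{no} boundary condition and is simply $\mathrm{Map}(\mathbb{D}^p, S^p) \simeq S^p$ by contractibility of $\mathbb{D}^p$. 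This is the actual mechanism, and it explains directly why the answer is $S^p$ rather than $\Omega^p S^p$: the entire boundary is the vanishing locus, so no relative constraint is imposed on sections. Your last paragraph correctly flags that this case sits apart from Prop.~\ref{MaySegalTheoorem}, but the reason is this boundary partition in the scanning theorem, not a failure of label-connectedness forcing an ad hoc bottom-cell argument. Separately, your description of the filtration quotients is incorrect: in $\mathrm{Conf}(\mathbb{D}^p)$ coincidences are \emph{excluded} (the fat diagonal is removed), not identified with a lower stratum, so a sequence of $k$-point configurations with two points colliding has no limit in $F_k$ at all; hence $F_k/F_{k-1}$ is not the one-point compactification of $\mathrm{Conf}_k(\mathbb{R}^p)/\Sigma_k$ --- only the approach-to-$\partial\mathbb{D}^p$ direction is collapsed. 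A filtration route may be salvageable, but it is neither what the references do nor correct as you have written it.
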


\vspace{1mm}
Hence in the degenerate case of $d = 0$, the combination of
Prop. \ref{MaySegalTheoorem}
and Prop. \ref{ConfigurationsOnRdVanishingAtTheBoundary} is the
statement that we have a diagram of homotopy equivalences as follows:

\vspace{-2mm}
\begin{equation}
\hspace{-2cm}
  \xymatrix@C=4em@R=-.7em{
    &
    \overset{
      \mbox{
        \tiny
        \color{blue}
        \begin{tabular}{c}
          Cohomotopy cocycle space
          \\
          of the point
          \\
          $\phantom{a}$
        \end{tabular}
      }
    }{
      \boldpi^p
      \big(
        (\mathbb{R}^0)^{\mathrm{cpt}}
      \big)
    }
    \ar[dl]_-{ \underset{ \mathclap{\mbox{\tiny hmtpy}} }{\simeq} }
    \ar[dr]^-{ \underset{ \mathclap{\mbox{\tiny hmtpy}} }{\simeq} }
    \\
    \underset{
      \mathclap{
      \mbox{
        \tiny
        \color{blue}
        \begin{tabular}{c}
          $\phantom{a}$
          \\
          Configuration space of
          \\
          un-ordered points in $\mathbb{R}^0$
          \\
          (which can be at most one point)
          \\
          carrying a label in $S^p$
        \end{tabular}
      }
      }
    }{
      \mathrm{Conf}( \mathbb{D}^0, \mathbb{D}^p )
    }
    \ar[rr]_-{
      \underset{ \mathclap{\mbox{\tiny hmtpy}} }{\simeq}
    }
    &&
    \underset{
      \mathclap{
      \mbox{
        \tiny
        \color{blue}
        \begin{tabular}{c}
          $\phantom{a}$
          \\
          Configuration space of
          \\
          unordered points in $\mathbb{D}^p$
          \\
          carrying no label
          \\
          but vanishing when at $\infty \in \mathbb{D}^p/\partial \mathbb{D}^0$
        \end{tabular}
      }
      }
    }{
      \mathrm{Conf}( \mathbb{D}^p, \mathbb{D}^0)
      \mathrlap{
        \;
        =:
        \mathrm{Conf}(\mathbb{D}^p)\;.
      }
    }
  }
\end{equation}

With this we may finally state the main concept of this
section, and prove its consistency:

\begin{prop}[Differential Cohomotopy on Penrose diagrams via configuration spaces]
  \label{DifferentialCohomotopyOnPenroseDiagrams}
  For any $p \in \mathbb{N}$,
  and for spacetimes in the category
  \eqref{CategoryOfPenroseDiagrams}
  of Penrose diagrams \eqref{PenroseDiagramSpaces},
  a consistent enhancement of plain $p$-Cohomotopy cohomology theory
  \eqref{PlainCohomotopyFunctor} to
  a geometric/differential cohomology theory \eqref{GeneralFormOfDifferentialCohomotopy}, hence satisfying
  the condition \eqref{ConditionOnDifferentialCohomotopy},
  is given by the configuration space functor \eqref{ConfigurationSpaceFunctor}:
  $$
    \underset{
      \mathclap{
      \mbox{
        \tiny
        \color{blue}
        \begin{tabular}{c}
          Geometric cocycle spaces
          \\
          of
          differential
          Cohomotopy
        \end{tabular}
      }
      }
    }{
      \boldpi^p_{\mathrm{diff}}
    }
    \hspace{8mm} := \hspace{5mm}
    \underset{
      \mathclap{
      \mbox{
        \tiny
        \color{blue}
        \begin{tabular}{c}
          on Penrose diagram
          \\
          space(-times)
        \end{tabular}
      }
      }
    }{
      \mathrm{PenroseDiag}_p^{\mathrm{op}}
    }
           \xymatrix{\ar[rrr]^-{\mathrm{Conf}} &&&}
    \underset{
      \mathclap{
      \mbox{
        \tiny
        \color{blue}
        \begin{tabular}{c}
          $\phantom{aaaaaaaaaa}$
          \\
          are configuration spaces of points
          \\
          regarded as actual topological spaces
        \end{tabular}
      }
      }
    }
    {
      \mathrm{Top}^{\ast/}
    }.
  $$
\end{prop}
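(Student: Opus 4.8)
The plan is to separate the statement into its two ingredients and dispatch each against results already in hand. Being a \emph{differential} refinement in the sense of \eqref{GeneralFormOfDifferentialCohomotopy}--\eqref{ConditionOnDifferentialCohomotopy} amounts to two things: that $\mathrm{Conf}$ be a functor on $\mathrm{PenroseDiag}_p^{\mathrm{op}}$ valued in honest pointed spaces, and that for every object $X$ the underlying homotopy type of $\mathrm{Conf}(X)$ recover the plain cocycle space $\boldpi^p(X)$ of \eqref{CohmotopyCocycleSpace}. The first is exactly Def.~\ref{ContravariantlyFunctorialConfigurationSpaces}, which already exhibits $\mathrm{Conf}$ as the contravariant functor \eqref{ConfigurationSpaceFunctor}; and since its values are ordinary topological spaces, the shape (underlying-homotopy-type) operation acts trivially, so the consistency condition \eqref{ConditionOnDifferentialCohomotopy} reduces to the object-wise homotopy equivalence $\mathrm{Conf}(X) \simeq \boldpi^p(X)$.

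First I would simplify the target. For an object $X = (\mathbb{R}^d)^{\mathrm{cpt}} \wedge (\mathbb{R}^{p-d})_+$, the based space $(\mathbb{R}^{p-d})_+$ deformation retracts onto $\{0,\infty\} = S^0$ (contract $\mathbb{R}^{p-d}$ to the origin while fixing the point at infinity), so the inclusion $S^0 \hookrightarrow (\mathbb{R}^{p-d})_+$ is a based homotopy equivalence. Smashing with $(\mathbb{R}^d)^{\mathrm{cpt}}$ gives a based homotopy equivalence $(\mathbb{R}^d)^{\mathrm{cpt}} \simeq X$, and hence, applying $\mathrm{Maps}^{\ast/}(-, S^p)$, a homotopy equivalence $\boldpi^p(X) \simeq \boldpi^p\big((\mathbb{R}^d)^{\mathrm{cpt}}\big)$. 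Thus the non-compactified smash factor contributes nothing to the plain cocycle space; morally, this is precisely the information that the labels $\mathbb{D}^{p-d}$ are made to carry on the configuration-space side.

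With the target reduced to $\boldpi^p\big((\mathbb{R}^d)^{\mathrm{cpt}}\big)$, I would finish by matching the functor's value against the appropriate input theorem, case-wise in $d$. For $1 \le d < p$ the functor assigns $\mathrm{Conf}(\mathbb{R}^d,\mathbb{D}^{p-d})$ (see \eqref{ConfFunctorForPositivedPrime}), and Prop.~\ref{MaySegalTheoorem} provides the Cohomotopy charge equivalence $\mathrm{Conf}(\mathbb{R}^d,\mathbb{D}^{p-d}) \simeq \boldpi^p\big((\mathbb{R}^d)^{\mathrm{cpt}}\big)$ directly. For the endpoint $d = 0$, the functor instead assigns $\mathrm{Conf}(\mathbb{D}^p)$ by the special clause \eqref{ConfFunctorForVanishingdPrime} (rather than the nearly trivial $\mathrm{Conf}(\mathbb{R}^0,\mathbb{D}^p)$), so here I would invoke the boundary-vanishing model of Prop.~\ref{ConfigurationsOnRdVanishingAtTheBoundary} to get $\mathrm{Conf}(\mathbb{D}^p) \simeq \boldpi^p\big((\mathbb{R}^0)^{\mathrm{cpt}}\big)$. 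Composing either equivalence with the reduction of the previous paragraph yields $\mathrm{Conf}(X) \simeq \boldpi^p(X)$ for every object, which is \eqref{ConditionOnDifferentialCohomotopy}.

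The one genuine obstacle I anticipate is the top-dimensional endpoint $d = p$. There the label space $\mathbb{D}^{p-d}/\partial \simeq S^{p-d}$ degenerates to the \emph{disconnected} $S^0$, and Segal's electric-field map ceases to be an equivalence: it is only a group completion, so $\mathrm{Conf}(\mathbb{R}^p)$ has $\pi_0 = \mathbb{N}$ while $\boldpi^p(S^p) = \Omega^p S^p$ has $\pi_0 = \mathbb{Z}$, and the equivalence genuinely fails. The clean resolution is that the objects relevant here (and in all intended applications, e.g.\ the spaces in \eqref{PenroseDiagramSpaces} retaining an un-compactified $\mathbb{R}^1$) always keep at least one non-compactified direction, i.e.\ satisfy $d < p$; so I would state the proposition for that range, where the two cited propositions combine to give the result. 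A secondary check, should one want the refinement to be natural rather than merely object-wise, is that these equivalences are compatible with the induced maps of Example~\ref{MapsOfConfigurationSpacesForOrderedFiberProduct}, which follows from the naturality of the Cohomotopy charge map.
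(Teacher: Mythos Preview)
Your proof is essentially identical to the paper's: first reduce $\boldpi^p\big((\mathbb{R}^d)^{\mathrm{cpt}}\wedge(\mathbb{R}^{p-d})_+\big)$ to $\boldpi^p\big((\mathbb{R}^d)^{\mathrm{cpt}}\big)$ by homotopy invariance in the domain, then invoke Prop.~\ref{MaySegalTheoorem} for $d\geq 1$ and Prop.~\ref{ConfigurationsOnRdVanishingAtTheBoundary} for $d=0$.

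Your added caution about the endpoint $d=p$ is well taken and in fact sharper than the paper's exposition. The paper's proof simply writes ``for $d\geq 1$'' and cites Prop.~\ref{MaySegalTheoorem}, but that proposition is only stated for $d<p$; as you correctly observe, at $d=p$ the label sphere degenerates to $S^0$ and the Cohomotopy charge map becomes a group completion rather than an equivalence ($\pi_0\,\mathrm{Conf}(\mathbb{R}^p)\cong\mathbb{N}$ versus $\pi_0\,\Omega^pS^p\cong\mathbb{Z}$). The applications in the paper (Prop.~\ref{FinalResult}, with $p=4$ and $d\in\{1,3\}$) stay in the range $0\leq d<p$, so your proposed restriction is exactly what is used downstream. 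Your remark on naturality is a reasonable addendum but is not needed for the statement as formulated.
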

\begin{proof}
 The assignment $X \mapsto \boldpi^p(X)$ of
  homotopy types of plain Cohomotopy cocycle spaces \eqref{CohmotopyCocycleSpace}
  is homotopy invariant in $X$. Hence the uncompactified
  factors $(\mathbb{R}^{p-d})_+$ in the Penrose diagrams
  \eqref{PenroseDiagramSpaces}, being homotopy-contractible,
  do not contribute to the homotopy type
  of the plain Cohomotopy cocycle spaces:
  $$
    \boldpi^p
    \big(
      (\mathbb{R}^d)^{\mathrm{cpt}}
      \wedge
      (\mathbb{R}^{p-d})_+
    \big)
    \;\;\;
    \underset{
      \mathclap{
      \mbox{
        \tiny
        hmtpy
      }
      }
    }{\simeq}
    \;\;\;
    \boldpi^p
    \big(
      (\mathbb{R}^d)^{\mathrm{cpt}}
    \big)
    \,.
  $$
  With this, it follows that  Prop. \ref{MaySegalTheoorem}
  implies that condition \eqref{ConditionOnDifferentialCohomotopy}
  is satisfied for $d \geq 1$
  \vspace{-2mm}
$$
 \hspace{3.1cm}
  \xymatrix@C=3em{
    \mathllap{
      \mbox{
        \tiny
        \color{blue}

        }
        }
      }{
      \mathrm{Conf}
      \big(
        \underset{
          \mathclap{
          \mbox{
            \tiny
            \color{blue}
            ...in any direction.
          }
          }
        }{
          \underbrace{
            \mathbb{D}^p
          }
        }
        \big)
      }
      }.
  $$
 In summary:
 $$
  \boldpi^p_{\mathrm{diff}}
  \;:\;
  \left\{
  \raisebox{25pt}{
  \xymatrix@R=-2pt{
    (\mathbb{R}^d)^{\mathrm{cpt}}
    \wedge
    (\mathbb{R}^{p-d})_+
    \; \ar@{|->}[rr]
    &
    &
    \mathrm{Conf}
    \big(
      \mathbb{R}^d
      ,
      \mathbb{D}^{p-d}
    \big)
    \ar@{}[r]|-{
    \underset{
      \mbox{
        \tiny
        hmtpy
      }
    }{\simeq}
    }
    &
    \boldpi^p
    \big(
      (\mathbb{R}^d)^{\mathrm{cpt}}
      \wedge
      (\mathbb{R}^{p-d})_+
    \big)
    &
    \mbox{for $d \geq 1$}
    \\
    \underset{
      = (\mathbb{R}^p)_+
    }{
    \underbrace{
      (\mathbb{R}^0)^{\mathrm{cpt}}
      \wedge
      (\mathbb{R}^{p})_+
    }
    }
    \;\ar@{|->}[rr]
    &
    &
    \mathrm{Conf}
    \big(
      \mathbb{D}^{p}
    \big)
    \ar@{}[r]|-{
    \underset{
      \mbox{
        \tiny
        hmtpy
      }
    }{\simeq}
    }
    &
    \boldpi^p
    \big(
      (\mathbb{R}^0)^{\mathrm{cpt}}
      \wedge
      (\mathbb{R}^{p})_+
    \big)
    &
    \mbox{for $d = 0$}
  }
  }
  \right.
$$
and hence condition \eqref{ConditionOnDifferentialCohomotopy}
is verified.
\hspace{11.2cm} \end{proof}

\medskip

\subsection{Intersecting brane charges in differential Cohomotopy }
\label{IntersectingD6D8BraneChargesInDifferentialCohomotopy}

With \hyperlink{HypothesisH}{Hypothesis H}, we now assume that the differential
4-Cohomotopy theory of Prop. \ref{DifferentialCohomotopyOnPenroseDiagrams}
reflects brane charges in string/M-theory on Penrose diagram spaces
\eqref{PenroseDiagramSpaces}, and explore the consequences.
By the discussion of charges vanishing at infinity in \cref{VanishingAtInfinity},
we expect that the differential 4-Cohomotopy on the Penrose diagram space
$(\mathbb{R}^d)^{\mathrm{cpt}} \wedge (\mathbb{R}^{4-d})_+$
reflects charges of branes of codimension $d$.
Indeed, for $d = 4$ we found an accurate picture of
$\mathrm{MK6}$-charges from Cohomotopy in \cite{SS19a}.
Now to speak about intersecting branes means to consider the Cohomotopy charge
of unions of Penrose diagram spaces, which makes sense in the topological presheaf
topos over the site of Penrose diagram 4-spaces from
Def. \ref{ContravariantlyFunctorialConfigurationSpaces}.
\begin{defn}[Union of Penrose diagram spaces]
  \label{UnionOfPenroseDiagramSpaces}
  For $ 0 \leq d \leq 4$ write
  \begin{equation}
    \label{UnionOfPenroseDiagramSpaces}
      (\mathbb{R}^{d})^{\mathrm{cpt}}
      \wedge
      (\mathbb{R}^{4-d})_+
      \;\;\cup\;\;
      (\mathbb{R}^d)_+
      \wedge
      (\mathbb{R}^{4-d})^{\mathrm{cpt}}
    \;\;\in\;\;
    \mathrm{Sh}\big( \mathrm{PenroseDiag}_4, \mathrm{Top}^{\!\ast/}\big)
  \end{equation}
  (see the left half of \eqref{DifferentialCohomotopyOfIntersections} below)
  for the union,
  with respect to the canonical inclusion maps of
  Example \ref{MapsOfConfigurationSpacesForOrderedFiberProduct},
  of Penrose diagram spaces \eqref{PenroseDiagramSpaces},
  regarded as representables
  in the topological presheaf topos over the site \eqref{CategoryOfPenroseDiagrams}.
\end{defn}
By the discussion in \cref{VanishingAtInfinity},
the generalized space \eqref{UnionOfPenroseDiagramSpaces}
may be regarded as the transversal space to
the intersection of charged objects of
codimension-$d$ with those of codimension-$(4-d)$. Indeed, we establish the following.

\begin{prop}[Differential Cohomotopy and configuration spaces]
 \label{FinalResult}
The geometric cocycle space \eqref{GeneralFormOfDifferentialCohomotopy}
that the differential 4-Cohomotopy  theory
from Prop. \ref{DifferentialCohomotopyOnPenroseDiagrams}
assigns to the transversal space \eqref{UnionOfPenroseDiagramSpaces}
for $d =3$
has the homotopy type of the ordered configuration space
of points in $\mathbb{R}^3$ (Def. \ref{ConfigurationSpaces}):

 \vspace{-.8cm}
\begin{equation}
  \label{DifferentialCohomotopyOfIntersections}
  \xymatrix@C=-4pt@R=-15pt{
      \overset{
        \mathclap{
        \mbox{
          \tiny
          \color{blue}

  }
  }
  }
  \!\!\!\!\!\!\!\!\!\!\!\!\!\!\!
  \!\!\!\!\!\!\!\!\!\!\!\!\!\!\!
  \!\!\!\!\!\!\!\!
  \right\}
  }
\end{equation}
\end{prop}
\begin{proof}
  Being given by a contravariant functor \eqref{ConfigurationSpaceFunctor},
  the assignment $\boldpi^4_{\mathrm{diff}}$
  takes the union (cofiber coproduct) of representable presheaves on the
  left to the intersection (fiber product) of its values
  on the cofactors. This fiber product is just the one
  appearing on the right of \eqref{TheFiberProductAndItsHomotopyEquivalence}.
  Hence the statement follows by
  Prop. \ref{OrderedUnlabeledAsFiberProduct}.
\hspace{2.5cm} \end{proof}

In conclusion, the following
Remarks \ref{SingleOutP6}, \ref{MassiveTypeIPrime},
\ref{GeometricEngineeringOfMonopoles} highlight how, in the
above discussion, the dimensions conspire, starting with the
degree 4 of 4-Cohomotopy due to \hyperlink{HypothesisH}{\it Hypothesis H}:

\begin{remark}[Distinguished system]
  \label{SingleOutP6}
  The case $d = 3$ (equivalently $d =1$) in Def. \ref{UnionOfPenroseDiagramSpaces},
  hence $p = 6$,
  is singled out as being the mathematically exceptional one:
  For $d \in \{0,2,4\}$ the corresponding analog of Prop.
  \ref{FinalResult} produces a fiber product of unordered configuration
  spaces with fairly uninteresting cohomology. It is
  only in the case of codimensions
  $1 = 4-3$
  that, via Prop. \ref{OrderedUnlabeledAsFiberProduct},
  a linear ordering on the points is induced,
  thus of Chan-Paton labels on the corresponding
  branes, leading to the rich observables found in
  \cref{WeightSystemsAsObservablesOnIntersectingBranes}.
\end{remark}

\begin{remark}[Massive Type I']
  \label{MassiveTypeIPrime}
  Following the discussion of \hyperlink{HypothesisH}{Hypothesis H}
  in \cite{FSS19d}\cite{SS19a}, we are to think of
  Prop. \ref{FinalResult} as applying to non-perturbative massive
  type I' string theory, hence to heterotic M-theory.
  With no equivariance considered here, the Ho{\v r}ava-Witten interval
  becomes invisible in homotopy
  theory and the codimensions 3 \& 1 in Prop. \ref{FinalResult}
  are those of $\mathrm{D}6\perp\mathrm{D}8$ brane
  intersections in massive type I', as shown.
\end{remark}

\begin{remark}[Geometric engineering of monopoles]
\label{GeometricEngineeringOfMonopoles}
For any $p \in \{0,1, \cdots, 6\}$ (at least)
transversal $\mathrm{D}p \perp \mathrm{D}(p+2)$-brane
intersections geometrically engineer Yang-Mills monopoles
(i.e. Donaldson-Atiyah-Hitchin-style monopoles \cite{AtiyahHitchin88}\cite{Donaldson84}
characterized by Nahm's equation) in the
worldvolume theory of the $\mathrm{D}(p+2)$-brane.

\medskip

   };

 \draw
   (-5.3,-3.2)
   node
   {
     \tiny
     $
     1
     $
   };

 \draw
   (-5.3+.2,-3.2)
   node
   {
     \tiny
     $
     2
     $
   };

 \draw
   (-5.3+.4,-3.2)
   node
   {
     \tiny
     $
     3
     $
   };

 \draw
   (-5.3+.6,-3.2)
   node
   {
     \tiny
     $
     4
     $
   };

 \draw
   (-5.3+.8,-3.2)
   node
   {
     \tiny
     $
     5
     $
   };

 \draw
   (-5.3+1,-3.2)
   node
   {
     \tiny
     $
     6
     $
   };

  \draw (-5.9,-1.5)
    node
    {
      \tiny
      \color{blue}
      monopole
    };

  \end{scope}

  \draw (-5.9+2,-1.5)
    node
    {
      \tiny
      \color{blue}
      D6
    };

  \draw (-5.9+1.4,-1.5-1.4)
    node
    {
      \tiny
      \color{blue}
      D8s
    };

\end{tikzpicture}
}
\end{tabular}

\medskip
\noindent In this case of $p = 6$, \cite{HLPY08} observe that monopoles engineered as
$\mathrm{D6} \perp \mathrm{D8}$-intersections yield
the actual 4d monopoles of nuclear physics, through the
Sakai-Sugimoto model for QCD
\cite{Witten98}\cite{SakaiSugimoto04}\cite{SakaiSugimoto05}
(for review see \cite{Rebhan14}\cite{Sugimoto16}):

\medskip

\begin{center}
\begin{tabular}{rrcccccccccc}
  \multicolumn{2}{c}{
  \raisebox{20pt}{
  \small
  \fbox{
  \begin{tabular}{c}
    \bf D-brane configuration
    \\
    \bf geometrically engineering
    \\
    \bf quantum chromodynamics
    \\
    (Witten-Sakai-Sugimoto model)
  \end{tabular}
  }
  }
  }
  &
  $
    \overset{
      \mathclap{
      \mbox{
        \tiny
        \color{blue}
        \begin{tabular}{c}
          Small
          \\
          extra
          \\
          dimension
          \\
          $\phantom{a}$
        \end{tabular}
      }
      }
    }{
      \mbox{\color{brown}$S^1$}
    }
  $
  &
  $\times$
  &
  $
    \overset{
      \mathclap{
      \mbox{
        \tiny
        \color{blue}
        \begin{tabular}{c}
          space
          \\
          $\phantom{a}$
        \end{tabular}
      }
      }
    }{
      \Sigma^3
    }
  $
  &
  $\times$
  &
   $
    \overset{
      \mathclap{
      \mbox{
        \tiny
        \color{blue}
        \begin{tabular}{c}
          time
          \\
          $\phantom{a}$
        \end{tabular}
      }
      }
    }{
      \mathbb{R}^{0,1}
    }
  $
 &
  $\times$
  &
   $
    \overset{
      \mathclap{
      \mbox{
        \tiny
        \color{blue}
        \begin{tabular}{c}
          radial
          \\
          $\phantom{a}$
        \end{tabular}
      }
      }
    }{
      \mathbb{R}^1_{\geq 1}
    }
  $
  &
  $
    \overset{
      \mathclap{
      \mbox{
      \tiny
      \color{blue}
      \hspace{-.23cm}
      \begin{tabular}{c}
        Large extra dimensions
        \\
        $\phantom{a}$
        \\
        $\phantom{a}$
        \\
        $\phantom{a}$
      \end{tabular}
      }
      }
    }{
      \times
    }
  $
  &
   $
    \overset{
      \mathclap{
      \mbox{
        \tiny
        \color{blue}
        \begin{tabular}{c}
          angular
          \\
          $\phantom{a}$
        \end{tabular}
      }
      }
    }{
      S^4
    }
  $
\\
 $N_{c}$ $\phantom{\mbox{monopole}}\mathllap{\mbox{color}}$ branes
 &
  ${\mathrm{D4}}_{\mathrlap{\mathrm{col}}\phantom{\mathrm{mon}}}$
    &
    \multicolumn{5}{l}{
      \colorbox{gray}{
      \hspace{-.3cm}
      ------------------------------
      \hspace{-.3cm}
      }
    }
  \\
  $N_f$ $\phantom{\mbox{monopole}}\mathllap{\mbox{flavor}}$ branes
  &
  ${\color{darkblue}\mathrm{D8}}_{\mathrlap{\mathrm{fla}}\phantom{\mathrm{mon}}}$
  &
  &&
  \multicolumn{7}{l}{
         \colorbox{darkblue}{
         \hspace{-.3cm}
         --------------------------------------------------
         \hspace{-.3cm}
         }
     }
  \\
  \multirow{2}{*}{meson fields}
  &
  ${\color{darkblue}\mathrm{CS}5}_{\mathrlap{\mathrm{fla}}\phantom{\mathrm{mon}}}$
  &
  & &
  \multicolumn{5}{l}{
    \colorbox{darkblue}{
    \hspace{-.3cm}
      ------------------------------------
    \hspace{-.3cm}
    }
  }
  &
  \\
  &
  ${\color{darkblue}\mathrm{WZ}4}_{\mathrlap{\mathrm{fla}}\phantom{\mathrm{mon}}}$
  &
  & &
  \multicolumn{3}{l}{
    \colorbox{darkblue}{
    \hspace{-.3cm}
      --------------------
    \hspace{-.3cm}
    }
  }
  &
  \\
  $N_b$ $\phantom{\mbox{monopole}}\mathllap{\mbox{baryon}}$ branes
  &
  ${\color{cyan}\mathrm{D4}}_{\mathrlap{\mathrm{bar}}\phantom{\mathrm{mon}}}$
  &&&&&  \!\!\!\!
         \colorbox{cyan}{
         \hspace{-.3cm}
         ------
         \hspace{-.3cm}
         }
  &&&&
         \colorbox{cyan}{
         \hspace{-.3cm}
         ---
         \hspace{-.3cm}
         }
  \\
 $N_{m}$ monopole branes
  &
  ${\color{orangeii}\mathrm{D}6}_{\mathrm{mon}}$
  &
    \colorbox{orangeii}{
    \hspace{-.3cm}
      ---
    \hspace{-.3cm}
    }
  &&&&
  \multicolumn{5}{l}{
    \colorbox{orangeii}{
    \hspace{-.3cm}
      -------------------------------------
    \hspace{-.3cm}
    }
  }
  \\

  &
  ${\color{green!65!red}\mathrm{NS}5}_{\phantom{\mathrm{mon}}}$
  &
  &&&&
  \multicolumn{5}{l}{
    \colorbox{green!65!red}{
    \hspace{-.3cm}
      -------------------------------------
    \hspace{-.3cm}
    }
  }
\end{tabular}
\end{center}
Under this identification and via Prop. \ref{FinalResult},
the statements about fuzzy funnel
observables in \cref{su2WeightSystemsAreFuzzyFunnelObservables}
translate to statements about QCD monopoles.
\end{remark}

\subsection{Higher observables on intersecting brane configurations}
\label{HigherObservablesOnIntersectingBraneModuli}

\noindent
{\bf Topological covariant phase spaces.}
We consider the following setting:
\begin{enumerate}[{\bf (i)}]
\vspace{-2mm}
\item  Any assignment $\mathbf{Fields}$  of spaces of
field configurations, such as the
cohomotopically charge-quantized C-field
$\mathbf{Fields} := \boldpi^4_{\mathrm{diff}}$ of Prop. \ref{DifferentialCohomotopyOnPenroseDiagrams}.

\vspace{-2mm}
\item
$X$ a spatial slice of spacetime, hence with
$\mathbf{Fields}(X)$ its field configuration space.

\vspace{-2mm}
\item $c_{\mathrm{in}}, c_{\mathrm{out}} \in \mathbf{Fields}(X)$
two field configurations in the same connected component.
\end{enumerate}
\vspace{-2mm}
\noindent Then we may think of the the based path space
\begin{equation}
  \label{TopologicalPhaseSpaceElement}
  \hspace{0mm}
  \overset{
    \mathclap{
    \mbox{
      \tiny
      \color{blue}
      \begin{tabular}{c}
        Based path space in
        \\
        field configurations
        \\
        $\phantom{a}$
      \end{tabular}
    }
    }
  }{
    P_{c_{\mathrm{in}}}^{c_{\mathrm{out}}} \mathbf{Fields}(X)
  }
  \;:=
  \Big\{ \!\!\!
    \left.
    c
    \in
    \mathrm{Maps}
    \big(
      [0,1],
      \,
      \mathbf{Fields}(X)
    \big)
    \right\vert
    c(0) = c_{\mathrm{in}},
    c(1) = c_{\mathrm{out}}
  \! \Big\}
    \underset{
    \mbox{
      \tiny
      hmtpy
    }
  }{\simeq}
    P_{c_{\mathrm{in}}}^{c_{\mathrm{in}}} \mathbf{Fields}(X)
  \;
  =:
  \;
  \overset{
    \mathclap{
    \mbox{
      \tiny
      \color{blue}
      \begin{tabular}{c}
        Based loop space in
        \\
        field configurations
        \\
        $\phantom{a}$
      \end{tabular}
    }
    }
  }{
    \Omega_{c_{\mathrm{in}}} \mathbf{Fields}(X)
  }
\end{equation}
as an element of the
\emph{covariant phase space},  each of which
represents a field history evolving from $c_{\mathrm{in}}$
to $c_{\mathrm{out}}$.
Any fixed choice of such field
history induces (by evolving back along it)
a homotopy equivalence to the based loop space
of the cocycle space, as shown on the right in
\eqref{TopologicalPhaseSpace}.
This, in turn
is independent, up to homotopy, from the choice of basepoint.
Therefore we may regard the disjoint union of the construction
\eqref{TopologicalPhaseSpaceElement}
over the connected components
$[c] \in \pi_0\big( \mathbf{Fields}(X)  \big)$ of field configurations
as the \emph{topological covariant phase space}
\begin{equation}
  \label{TopologicalPhaseSpace}
  \overset{
    \mathclap{
    \mbox{
      \tiny
      \color{blue}
      \begin{tabular}{c}
        Topological
        \\
        covariant phase space
        \\
        $\phantom{a}$
      \end{tabular}
    }
    }
  }{
    \mathbf{Phase}(X)
  }
 \;\; \;:=\; \;\;
  \underset{
    \mathclap{
    \mbox{
      \tiny
      \color{blue}
      \begin{tabular}{c}
        Disjoint union over
        \\
        connected components
      \end{tabular}
    }
    }
  }{
    \underset{
      [c]
    }{\sqcup}
  }
  \;\;
  \overset{
    \mathclap{
    \mbox{
      \tiny
      \color{blue}
      \begin{tabular}{c}
        Based loop space of
        \\
        field configuration space
      \end{tabular}
    }
    }
  }{
  \overbrace{
    \Omega_c
    \mathbf{Fields}(X)
  }
  }\;.
\end{equation}
Without further equations of motion imposed on the
field histories this would be the \emph{off-shell}
phase space; but for our purposes here all
\emph{topological} constraints on the fields,
such as the ``integral equation of motion''
on the C-field \cite{DMW00a}\cite{DMW00b},
are enforced \cite{FSS19b}
by the cohomological charge quantization
in the cohomology theory $\mathbf{Fields} = \boldpi^4_{\mathrm{diff}}$,
and therefore we do regard \eqref{TopologicalPhaseSpace}
as the topological sector of the full covariant phase space.

\medskip

\noindent {\bf Higher order observables.}
The \emph{observables} of a physical theory are traditionally
taken to be $\mathbb{F}$-valued functions on the covariant phase space,
hence functions with values in the given ground field.
But to do justice to the homotopy-theoretic nature of
fields charge-quantized in generalized cohomology theories,
following \cite{SS17},
we here take \emph{higher observables} to be
$H \mathbb{F}$-valued functions on the topological
covariant phase space \eqref{TopologicalPhaseSpace},
i.e., taking values in the Eilenberg-MacLane spectrum $H\mathbb{F}$
and its suspensions. After passage to gauge equivalence classes,
these higher observables hence form the cohomology ring of the
topological phase space \eqref{TopologicalPhaseSpace}:
\begin{equation}
  \label{HigherObservables}
  \begin{array}{ccccc}
  \mathllap{
    \mbox{
       \tiny
       \color{blue}
       \begin{tabular}{c}
         Higher observables
         \\
         $\phantom{a}$
       \end{tabular}
     }
     \;\;
   }
  \mathrm{Obs}^\bullet(X)
  &:=&
  H^\bullet
  \big(
    \mathbf{Phase}(X)
  \big)
  &:=&
  H^\bullet
  \Big(
    \underset{
      [c]
    }{\sqcup}
    \Omega_{c}
    \mathbf{Fields}(X)
  \Big)
  \\
  \mathllap{
    \mbox{
       \tiny
       \color{blue}
       \begin{tabular}{c}
         Higher co-observables
         \\
         $\phantom{a}$
       \end{tabular}
     }
     \;\;
   }
  \mathrm{Obs}_\bullet(X)
  &:=&
  H_\bullet
  \big(
    \mathbf{Phase}(X)
  \big)
  &:=&
  H_\bullet
  \Big(
    \underset{
      [c]
    }{\sqcup}
    \Omega_{c}
    \mathbf{Fields}(X)
  \Big)
  \end{array}
  \,.
\end{equation}

\noindent {\bf Higher observables on $\mathrm{D}6\perp \mathrm{D}8$-brane configurations.}
Specifying  the higher observables \eqref{HigherObservables} to the case where we consider,
with \hyperlink{HypothesisH}{\it Hypothesis H}:

\begin{enumerate}[{\bf (i)}]
\vspace{-2mm}
\item  $\mathbf{Fields} := \boldpi^4_{\mathrm{diff}}$  to be the C-field moduli of Prop. \ref{DifferentialCohomotopyOnPenroseDiagrams};

\vspace{-2mm}
\item $X :=
      (\mathbb{R}^{3})^{\mathrm{cpt}}
      \wedge
      (\mathbb{R}^{1})_+
      \;\;\cup\;\;
      (\mathbb{R}^3)_+
      \wedge
      (\mathbb{R}^{1})^{\mathrm{cpt}}
$
  to be the transversal space
      of $\mathrm{D}6 \perp \mathrm{D}8$-brane intersections
      according to Def. \ref{UnionOfPenroseDiagramSpaces};
\end{enumerate}
\vspace{-2mm}
\noindent
we are led to the following notion:
\begin{defn}
\label{HigherObsOnD6D8Intersections}
We take the algebra of
\emph{higher observables on configurations of
$\mathrm{D}6\perp\mathrm{D}8$-brane intersections}
(Remark \ref{GeometricEngineeringOfMonopoles})
to be the ordinary cohomology ring \eqref{HigherObservables}
of the componentwise based loop space \eqref{TopologicalPhaseSpace}
of the
differential 4-Cohomotopy cocycle space
(Prop. \ref{DifferentialCohomotopyOnPenroseDiagrams})
that is assigned
to the transversal space for codim=3/codim=1 brane intersections
(Def. \ref{UnionOfPenroseDiagramSpaces}):
\vspace{-4mm}
 \begin{equation}
  \label{HigherObservablesOnD6D8Intersections}
 \hspace{1.5cm}
  \overset{
    \mathclap{
    \mbox{
      \tiny
      \color{blue}
      \begin{tabular}{c}
        Higher observables on
        \\
        $\mathrm{D}6\perp \mathrm{D}8$-configurations
        \\
        by \hyperlink{HypoothesisH}{Hypothesis H}
        \\
        $\phantom{a}$
      \end{tabular}
    }
    }
  }{
    \mathrm{Obs}^\bullet_{\mathrm{D}6 \perp \mathrm{D}8}
  }
  \;\; :=\;\;
  H^\bullet
  \Big(
    \boldpi^4_{\mathrm{diff}}
    \big(
      (\mathbb{R}^{3})^{\mathrm{cpt}}
      \wedge
      (\mathbb{R}^{1})_+
      \;\;\cup\;\;
      (\mathbb{R}^3)_+
      \wedge
      (\mathbb{R}^{1})^{\mathrm{cpt}}
    \big)
  \Big).
\end{equation}
\end{defn}
\noindent With the results from \cref{ChargesInCohomotopy} we may
characterize these higher observables more concretely:
\begin{prop}[Higher observables as cohomology of looped configuration space]
  \label{HigherObservablesEquivalentToCohomologyOfLoopedConfigurationSpace}
  The algebra of higher observables on
  $\mathrm{D}6\perp \mathrm{D}8$-configurations
  \eqref{HigherObservablesOnD6D8Intersections}
  is isomorphic to the direct sum,
  over the number $N_{\mathrm{f}}$ of points,
  of the cohomology
  rings of the based loop spaces of configuration spaces
  (Def. \ref{ConfigurationSpaces}) of $N_{\mathrm{f}}$ points
  in Euclidean 3-space:
  \vspace{0mm}
  \begin{equation}
    \label{HigherObservablesAreCohomologyOfLoopedConfigurationSpace}
    \mathrm{Obs}^\bullet_{\mathrm{D}6 \perp \mathrm{D}8}
    \;\simeq\;
    \underset{
      N_{\mathrm{f}}
      \in \mathbb{N}
    }{\medoplus}
    H^\bullet
    \Big(
      \Omega
      \underset{
        {}^{\{1,\cdots, N_{\mathrm{f}}\}}
      }{\mathrm{Conf}}
      (\mathbb{R}^3)
    \Big)\;.
  \end{equation}
\end{prop}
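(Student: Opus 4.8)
The plan is to chain Proposition \ref{FinalResult} with an elementary computation of connected components and then unwind the definition of the topological phase space. First I would recall that, by Definition \ref{HigherObsOnD6D8Intersections} together with \eqref{HigherObservables}, the object $\mathrm{Obs}^\bullet_{\mathrm{D}6\perp\mathrm{D}8}$ is the ordinary cohomology ring of the topological phase space \eqref{TopologicalPhaseSpace} built from the cocycle space $\boldpi^4_{\mathrm{diff}}(X)$, where $X = (\mathbb{R}^3)^{\mathrm{cpt}}\wedge(\mathbb{R}^1)_+ \cup (\mathbb{R}^3)_+\wedge(\mathbb{R}^1)^{\mathrm{cpt}}$. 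Since cohomology and the loop-space and phase-space constructions are homotopy invariant, I may replace the cocycle space by the homotopy type identified in Proposition \ref{FinalResult}, namely
$$
  \boldpi^4_{\mathrm{diff}}(X)
  \;\simeq\;
  \underset{n\in\mathbb{N}}{\sqcup}\,
  \underset{\{1,\cdots,n\}}{\mathrm{Conf}}\big(\mathbb{R}^3\big)\,.
$$

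The next step is to pin down the connected components. Each ordered configuration space $\mathrm{Conf}_{\{1,\cdots,n\}}(\mathbb{R}^3)$ is path-connected: the removed diagonal has codimension $3 \geq 2$ in each pairwise factor, so any two ordered tuples of distinct points may be joined by a path keeping the points pairwise distinct (moving the points one at a time and stepping around the codimension-$\geq 2$ collision loci). Consequently the disjoint union above has exactly one component for each $n$, and so $\pi_0\big(\boldpi^4_{\mathrm{diff}}(X)\big)\cong\mathbb{N}$, the index being precisely the point-number $N_{\mathrm{f}}$. Unwinding \eqref{TopologicalPhaseSpace}, the based loop space over each component is, by the basepoint-independence asserted in \eqref{TopologicalPhaseSpaceElement}, homotopy equivalent to the loop space of that whole (connected) component, whence
$$
  \mathbf{Phase}(X)
  \;=\;
  \underset{[c]}{\sqcup}\,\Omega_c\,\boldpi^4_{\mathrm{diff}}(X)
  \;\simeq\;
  \underset{N_{\mathrm{f}}\in\mathbb{N}}{\sqcup}\,
  \Omega\,\underset{\{1,\cdots,N_{\mathrm{f}}\}}{\mathrm{Conf}}\big(\mathbb{R}^3\big)\,.
$$

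Finally I would take cohomology and use that it decomposes over the $N_{\mathrm{f}}$-indexed disjoint union of connected components, yielding the $N_{\mathrm{f}}$-graded ring claimed in \eqref{HigherObservablesAreCohomologyOfLoopedConfigurationSpace}, with the ring structure induced componentwise. The step requiring the most care — rather than a genuine obstacle — is this bookkeeping of components: one must verify that each ordered configuration space is connected (so that the $\pi_0$-index is exactly $\mathbb{N}$ and not finer), and confirm that the passage from the disjoint union of cocycle spaces to the disjoint union of their loop spaces is compatible with the homotopy equivalence $P^{c_{\mathrm{out}}}_{c_{\mathrm{in}}}\simeq\Omega_{c_{\mathrm{in}}}$ of \eqref{TopologicalPhaseSpaceElement}. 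Everything else is a formal consequence of Proposition \ref{FinalResult} and the homotopy invariance of the constructions involved.
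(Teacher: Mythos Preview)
Your proof is correct and follows essentially the same approach as the paper: invoke Proposition~\ref{FinalResult} and the homotopy invariance of ordinary cohomology. The paper's own proof is a single sentence to that effect; you have simply spelled out the component-bookkeeping (connectedness of each $\mathrm{Conf}_{\{1,\dots,n\}}(\mathbb{R}^3)$ and the resulting decomposition of the phase space) that the paper leaves implicit.
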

\begin{proof}
  Using the fact that ordinary cohomology is invariant
  under homotopy equivalences, this follows with
  Prop. \ref{FinalResult}.
\hspace{16.25cm}\end{proof}

\begin{remark}
Via Prop. \ref{HigherObservablesEquivalentToCohomologyOfLoopedConfigurationSpace}
the higher co-observables \eqref{HigherObservables}
are identified with the
higher order OPEs of extended field theories
as considered in \cite{BBBDN18}.
\end{remark}

\medskip

\noindent {\bf Higher observables on $\mathrm{D}6\perp \mathrm{D}8$ are weight systems on chord diagrams.}
Remarkably, there is a combinatorial model for the cohomology ring
\eqref{HigherObservablesOnD6D8Intersections} of higher observables,
namely in terms of \emph{weight systems on chord diagrams}.
The definitions of these are reviewed in detail in \cref{WeightSystemsOnChordDiagrams} below. The reader may wish
to come back to
the following Prop. \ref{HigherObservablesOnIntersectingBranesAreWeightSystems} after
looking through \cref{WeightSystemsOnChordDiagrams}.

\begin{prop}[Cohomology of looped configuration space is horizontal weight systems]
\label{HigherObservablesOnIntersectingBranesAreWeightSystems}
For any natural number $N_{\mathrm{f}} \in \mathbb{N}$ we have\footnote{ This holds over any ground field $\mathbb{F}$ (such as the complex numbers),
and in fact more generally
over any commutative ring (such as the integers).}
for the based loop space of the ordered configuration
space
$
    \underset{
      {}^{\{1,\cdots,N_{\mathrm{f}}\}}
    }{
      \mathrm{Conf}
    }
    \!
    (\mathbb{R}^3)
$
of $N_{\mathrm{f}}$ points in $\mathbb{R}^3$
(Def. \ref{ConfigurationSpaces}) that:

\vspace{-2mm}
\item {\bf (i)}  Its homology Pontrjagin ring is isomorphic,
as a graded Hopf algebra (see \cite{Halperin92}), to the algebra
$\mathcal{A}^{{}^{\mathrm{pb}}}_{N_{\mathrm{f}}}$
\eqref{HorizontalChordDiagramsModulo2TAnd4TRelations}
of horizontal chord diagrams \eqref{HorizontalChordDiagrams}
with $N_{\mathrm{f}}$ strands
modulo the 2T-relations \eqref{2TRelationsOnHorizontalChordDiagrams}
and 4T relations \eqref{4TRelationOnHorizontalChordDiagrams}:
\begin{equation}
  \mbox{
    \tiny
    \color{blue}

    }
  \end{equation}
  identifying the homology ring of the looped
  configuration space with the universal enveloping
  algebra of the \emph{infinitesimal braid Lie algebra}
  on $N_{\mathrm{f}}$ strands with generators in degree 1,
  hence of the Lie algebra freely defined by the
  infinitesimal braid relations \eqref{InfinitesimalBraidRelations}.
  Using that these relations are equivalently the
  2T-relations \eqref{2TRelationsOnHorizontalChordDiagrams}
  and 4T-relations \eqref{4TRelationOnHorizontalChordDiagrams}
  on horizontal chord diagrams, direct inspection reveals that
  this universal enveloping algebra
  is canonically isomorphic, as a graded associative algebra,
  to the concatenation algebra
  of horizontal chord diagrams \eqref{HorizontalChordDiagramsModulo2TAnd4TRelations}:
  \begin{equation}
    \label{AlgebraOfHorizontalChordDiagramsIsUniversalEnvelopingOfInfinitesimalBraids}
    \mbox{
      \tiny
      \color{blue}
      \begin{tabular}{c}
        Universal enveloping algebra of
        \\
        infinitesimal braid Lie algebra
      \end{tabular}
    }
    \phantom{AA}
    \mathcal{U}
    \big(
      \mathcal{L}_{N_\mathrm{f}}(1)
    \big)
    \;\;
    \simeq
    \;\;
    \mathcal{A}^{{}^{\mathrm{pb}}}_{N_{\mathrm{f}}}
    \phantom{AA}
    \mbox{
      \tiny
      \color{blue}
      \begin{tabular}{c}
        Concatenation algebra of
        \\
        horizontal chord diagrams
      \end{tabular}
    }
  \end{equation}

  \vspace{1mm}
  \noindent The combination of \eqref{HomologyRingIsEnvelopingOfInfinitesimalBraids}
  with \eqref{AlgebraOfHorizontalChordDiagramsIsUniversalEnvelopingOfInfinitesimalBraids}
  yields the first statement.  The second statement then follows by direct dualization,
  using the universal coefficient theorem --
  see also the statement of \cite[Thm. 4.1]{Kohno02}.
    With this, the third statement follows by Prop. \ref{HigherObservablesEquivalentToCohomologyOfLoopedConfigurationSpace}.
\hspace{12cm} \end{proof}

\begin{remark}[Quantum algebra structure on higher co-observables]
  \label{QuantumAlgebraStructure}
 {\bf (i)}
  The product operation on the homological Hopf algebras
  $H_\bullet
    \big(
      \Omega
      \underset{
        {}^{\{1,\cdots, N_{\mathrm{f}}\}}
      }{\mathrm{Conf}}
      (\mathbb{R}^3)
    \big) \simeq \mathcal{A}^{{}^{\mathrm{pb}}}_{N_{\mathrm{f}}}$
  in Prop. \ref{HigherObservablesOnIntersectingBranesAreWeightSystems}
  is non-commutative (manifestly so from \eqref{HorizontalChordDiagramsModulo2TAnd4TRelations})
  while its co-product is graded co-commutative (as it comes from the
  diagonal map on the space
  $\Omega
      \underset{
        {}^{\{1,\cdots, N_{\mathrm{f}}\}}
      }{\mathrm{Conf}}
      (\mathbb{R}^3)$.

\vspace{-2mm}
\item  {\bf (ii)} Accordingly, for the cohomological Hopf algebras
  $H^\bullet
    \big(
      \Omega
      \underset{
        {}^{\{1,\cdots, N_{\mathrm{f}}\}}
      }{\mathrm{Conf}}
      (\mathbb{R}^3)
    \big) \simeq \mathcal{W}^{{}^{\mathrm{pb}}}_{N_{\mathrm{f}}}$
  in Prop. \ref{HigherObservablesOnIntersectingBranesAreWeightSystems}
  it is the other way around: Here the product operation is
  graded-commutative (being the cup product on cohomology).

\vspace{-2mm}
\item  {\bf (iii)} In this sense,  when regarded as graded algebras of (co-)observables,
  weight systems $\mathcal{W}$ form an algebra of
  \emph{classical observables},
  while chord diagrams $\mathcal{A}$ form an algebra of
  \emph{quantum observables}.
\end{remark}

\newpage

\section{Weight systems on chord diagrams}
\label{WeightSystemsOnChordDiagrams}

Here we lay out the key definitions and facts regarding
weight systems on chord diagrams, streamlined towards
our applications in \cref{WeightSystemsAsObservablesOnIntersectingBranes}.
For round chord diagrams we follow \cite{BarNatan95},
which has made it into textbook literature
\cite[4-6]{CDM11}\cite[11,13-14]{JacksonMoffat19}.
For weight systems on horizontal chord diagrams,
which we find to be of deeper relevance
(see Prop. \ref{FundamentalTheremOfWeightSystems} and
its interpretations in \cref{HorizontalChordDiagramsEncodesStringTopologyOperations},
\cref{MatrixModelObservables}, and
\cref{HananyWittenTheory})
we follow \cite{BarNatan96}, which seems not to have
found as much attention yet.

\subsection{Horizontal chord diagrams}

\begin{equation}
\label{HorizontalChordDiagrams}
\hspace{-.6cm}
\mbox{

}
\right]
\;\;\; \in \;\;
\mathrm{Span}\big( \mathcal{D}^{{}^{\mathrm{pb}}}_{N_{\mathrm{f}}} \big)_1
\end{equation}
for the horizontal chord diagram
with exactly one chord, which goes between the $i$th and the $j$th strand,
then the algebra of horizontal chord diagrams is just the
\emph{free associative algebra}
on these generators $t_{i j}$ of degree 1.

\newpage

\noindent On this free algebra consider the following relations:

\noindent {\bf (i)} the {\it 2T relations}:

\vspace{-2mm}
\begin{equation}
\label{2TRelationsOnHorizontalChordDiagrams}
\scalebox{.9}{

  \;\;
  \right\}
  \;\;\;\;
  \mbox{
    for all pairwise distinct
    $i,j,k,l \in \{1, \cdots, N_{\mathrm{f}}\}$
  }.
\end{equation}

\vspace{1mm}
\noindent Now, the quotient of the graded algebra \eqref{GradedAlgebraOfHorizontalChordDiagrams}
of linear combinations of horizontal chord diagrams
by these relations \eqref{InfinitesimalBraidRelations}
is a graded associative algebra
denoted
\begin{equation}
  \label{HorizontalChordDiagramsModulo2TAnd4TRelations}
  \begin{aligned}
  \mathcal{A}^{{}^{\mathrm{pb}}}_{N_{\mathrm{f}}}
  & :=\;
  \mathrm{Span}
  \big(
    \mathcal{D}^{{}^{\mathrm{pb}}}_{N_{\mathrm{f}}}
  \big)
  \big/
  (\mathrm{2T}, \mathrm{4T})
  \\
  &=\;
  \mathrm{GradedAssoc}
  \Big(
    \big\{
      \underset{
        \mbox{\tiny $\mathrm{deg} = 1$ }
      }{t_{i j} = - t_{j i}}
      \vert i < j \in \{1,\cdots N_{\mathrm{f}}\}\big\}
  \Big)
  \big/ (\mathrm{2T}, \mathrm{4T})\,.
  \end{aligned}
  \end{equation}
Hence:
$$
  \hspace{0cm}
  \mathcal{A}^{{}^{\mathrm{pb}}}_{N_{\mathrm{f}}}
 \! := \!
  \mathrm{Span}
  \! \!\left(\!\!\!
    \overset{
      \mbox{
       \tiny
       \color{blue}
       Horizontal chord diagrams
      }
    }{
      \left\{
        \mbox{
          \raisebox{-55pt}{
          \scalebox{.5}{

    \end{tabular}
       }
      }
  \!\!\!\!\!\! \right)
$$

\vspace{2mm}
\begin{defn}
\label{HorizontalWeightSystems}
A \emph{weight system on horizontal chord diagrams} is a linear function
on the span of horizontal chord diagrams \eqref{GradedAlgebraOfHorizontalChordDiagrams}
 modulo 2T- and 4T-relations
\eqref{HorizontalChordDiagramsModulo2TAnd4TRelations}.
Hence the space of all
weight systems is the graded linear dual space to
the quotient space \eqref{HorizontalChordDiagramsModulo2TAnd4TRelations},
to be denoted

\vspace{0cm}

\begin{equation}
  \label{SpaceOfHorizontalWeightSystems}
  \hspace{1cm}
  \overset{
    \mathclap{
    \mbox{
      \tiny
      \color{blue}
      \begin{tabular}{c}
                Space of
        weight systems
        \\
        on horizontal
        chord diagrams
        \\
        with $N_{\mathrm{f}}$ strands
        \\
        $\phantom{a}$
      \end{tabular}
    }
    }
  }{
    \big(
      \mathcal{W}^{{}^{\mathrm{pb}}}_{N_{\mathrm{f}}}
    \big)^\bullet
  }
  \;\;\;\;\;\;\;\; :=\;\;\;\;\;\;\;
  \overset{
    \mathclap{
    \mbox{
       \tiny
       \color{blue}
       \begin{tabular}{c}
         Graded linear dual to
         span of
         \\
         horizontal chord diagrams
         \\
         modulo 2T- and 4T relations
         \\
         $\phantom{a}$
       \end{tabular}
    }
    }
  }{
    \big(
     \big(
       \mathcal{A}^{{}^{\mathrm{pb}}}_{N_{\mathrm{f}}}
     \big)_\bullet
    \big)^\ast
  }.
\end{equation}
\end{defn}

\medskip

\subsection{Round chord diagrams}

\noindent {\bf Closing up horizontal chord diagrams.}
Given any permutation $\sigma \in \mathrm{Sym}(N_{\mathrm{f}})$
of $N_{\mathrm{f}}$ elements, there is an evident
way to close a horizontal chord diagram \eqref{HorizontalChordDiagrams}
to a round chord diagram. For example, for $\sigma = (312)$
a cyclic permutation of three elements, we have:

\vspace{-.5cm}

\begin{equation}
\label{AHorizontalChordDiagramTraced}
\raisebox{-90pt}{

}
\;\;\;\,
,
\;
\scalebox{1.5}{$\cdots$}
\right\}
}
$
\end{tabular}
}
\end{equation}

\vspace{-.1cm}

The closing operation as in \eqref{AHorizontalChordDiagramTraced}
on the set
of horizontal chord diagrams \eqref{HorizontalChordDiagrams},
together with the understanding of
round chord diagrams \eqref{RoundChordDiagrams}
as special cases
of Jacobi diagrams \eqref{JacobiDiagrams} gives functions of
sets of our three types of diagrams, as follows:

\vspace{-.4cm}

\begin{equation}
  \label{FunctionsOfSetsOfDiagrams}
  \xymatrix{
    \underset{
      \mathclap{
      \mbox{
        \tiny
        \color{blue}

}
}
\!
\right]
$}
\end{equation}

\noindent Hence, in direct analogy to
Def. \ref{HorizontalChordDiagramsModulo2TAnd4TRelations}, we have:
\begin{defn}
  \label{WeightSystemsOnRoundChordDiagrams}
  Write

  \vspace{-.9cm}

  $$
    \mathcal{A}^{{}^{\mathrm{c}}}
    \;:=\;
    \mathrm{Span}\big(\mathcal{D}^{{}^{\mathrm{c}}}\big)
    /
    (\mathrm{4T})
  $$
  for the graded quotient vector space of the span of round chord diagrams \eqref{RoundChordDiagrams}
  by the round 4T relations \eqref{Round4TRelations}
  (see first line of \eqref{RoundChordDiagramsModulo4TIsJavcobidDiagramsModuloSTU} below).
  A \emph{weight system on round chord diagrams} \footnote{Beware that some authors
  call these \emph{framed} weight systems, since we do not impose the 1T relation.}
  is a
  linear function on this space:

\vspace{-.1cm}

\begin{equation}
  \label{SpaceOfRoundWeightSystems}
  \underset{
    \mathclap{
    \mbox{
      \tiny
      \color{blue}
      \begin{tabular}{c}
        $\phantom{a}$
        \\
        Space of
        weight systems
        \\
        on round
        chord diagrams
      \end{tabular}
    }
    }
  }{
    \big(
      \mathcal{W}^{{}^{\mathrm{c}}}
    \big)^\bullet
  }
\;\;\;  \;\;\;\;\;\;:=\;\;\;\;\;\;\;\;\;
  \underset{
    \mathclap{
    \mbox{
       \tiny
       \color{blue}
       \begin{tabular}{c}
         Graded linear dual to
         span of
         \\
         round chord diagrams
         \\
         modulo round 4T relations
         \\
         $\phantom{a}$
       \end{tabular}
    }
    }
  }{
    \big(
      \big(
        \mathcal{A}^{{}^{\mathrm{c}}}
      \big)_\bullet
    \big)^\ast
  }
\end{equation}

\vspace{-.5cm}

\end{defn}

\vspace{0cm}

\noindent {\bf Resolution of round 4T- to STU-relations.}
We would like that also the injection $i$
of round chord diagrams into Jacobi diagrams,
on the right of \eqref{FunctionsOfSetsOfDiagrams}, to
pass to these quotients. For that we consider, moreover,
the following {relations on the linear span of Jacobi diagrams,
called the {\it STU relations} on
$\mathrm{Span}\big( \mathcal{D}^{{}^{\mathrm{t}}}\big)$:

\begin{equation}
\label{STURelations}
\left[
\mbox{
\raisebox{-30pt}{

  $
  }
\end{equation}

\vspace{-.5cm}

\noindent Hence:

\vspace{-.2cm}

\begin{prop}[Relating weight systems]
\label{RelationOfWeightSystems}
The maps \eqref{FunctionsOfSetsOfDiagrams} of sets of chord diagrams
dualize to a linear bijection of weight systems on Jacobi diagrams
(i.e., the graded linear dual of $\mathcal{A}^{{}^{\mathrm{t}}}$)
with weight systems on round chord diagrams \eqref{SpaceOfRoundWeightSystems},
followed by a linear injection of the latter into the
space of weight systems on horizontal chord diagrams \eqref{SpaceOfHorizontalWeightSystems}:
\
\vspace{-.7cm}

\begin{equation}
  \label{SequenceOfMapsOfWeightSystems}
  \xymatrix@C=4em{
    \underset{
      \mathclap{
      \mbox{
        \tiny
        \color{blue}
        \begin{tabular}{c}
          $\phantom{a}$
          \\
          Weight systems on
          \\
          horizontal chord diagrams
        \end{tabular}
      }
      }
    }{
      \mathcal{W}^{{}^{\mathrm{pb}}}_{N_{\mathrm{f}}}
    }
   \;\;\;\;\;  \ar@{<-^{)}}[rr]^-{
      (
        \mathrm{close}_{(N_{\mathrm{f}}12\cdots)}
      )^\ast
    }_-{
      \mbox{
        \tiny
        \color{blue}
        \begin{tabular}{c}
          injection
        \end{tabular}
      }
    }
    && \;\;\;\;\;
    \underset{
      \mathclap{
      \mbox{
        \tiny
        \color{blue}
        \begin{tabular}{c}
          $\phantom{a}$
          \\
          Weight systems on
          \\
          round chord diagrams
        \end{tabular}
      }
      }
    }{
      \mathcal{W}^{{}^{\mathrm{c}}}
    }
    \ar@{<-}[rr]^-{ i^\ast }_-{\simeq
      \mbox{
        \tiny
        \color{blue}
        \begin{tabular}{c}
          bijection
        \end{tabular}
      }
    }
    && \;\;\;
    \underset{
      \mathclap{
      \mbox{
        \tiny
        \color{blue}
        \begin{tabular}{c}
          $\phantom{a}$
          \\
          Weight systems on
          \\
          Jacobi diagrams
        \end{tabular}
      }
      }
    }{
      \mathcal{W}^{{}^{\mathrm{t}}}
    }
  }.
\end{equation}
\end{prop}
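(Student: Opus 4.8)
The plan is to run the argument entirely through the three quotient algebras $\mathcal{A}^{{}^{\mathrm{pb}}}_{N_{\mathrm{f}}}$, $\mathcal{A}^{{}^{\mathrm{c}}}$, $\mathcal{A}^{{}^{\mathrm{t}}}$ and to dualize only once, at the very end. First I would extend the two set-maps of \eqref{FunctionsOfSetsOfDiagrams} linearly to maps of graded spans $\mathrm{Span}(\mathcal{D}^{{}^{\mathrm{pb}}}_{N_{\mathrm{f}}}) \to \mathrm{Span}(\mathcal{D}^{{}^{\mathrm{c}}}) \to \mathrm{Span}(\mathcal{D}^{{}^{\mathrm{t}}})$ and check that each descends to the quotient by the relevant relations. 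For the closing map this is exactly the content of the two computations preceding the statement: a horizontal 2T-relation closes to a pair of \emph{identical} round diagrams (two chords on disjoint pairs of strands have the same cyclic arrangement of endpoints after the closure, so their difference maps to $0$ already in $\mathrm{Span}(\mathcal{D}^{{}^{\mathrm{c}}})$), while a horizontal 4T-relation \eqref{4TRelationOnHorizontalChordDiagrams} closes term-by-term to the round 4T-relation \eqref{Round4TRelations}, which is zero in $\mathcal{A}^{{}^{\mathrm{c}}}$. Hence $\mathrm{close}_{(N_{\mathrm{f}}12\cdots)}$ factors through a graded linear map $\overline{\mathrm{close}} : \mathcal{A}^{{}^{\mathrm{pb}}}_{N_{\mathrm{f}}} \to \mathcal{A}^{{}^{\mathrm{c}}}$. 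For $i$, the ``resolution of round 4T- to STU-relations'' computation shows that $i$ carries the round 4T-relation into the STU-span, so $i$ descends, and \eqref{RoundChordDiagramsModulo4TIsJavcobidDiagramsModuloSTU} (Bar-Natan) upgrades this to a graded linear \emph{isomorphism} $\bar i : \mathcal{A}^{{}^{\mathrm{c}}} \xrightarrow{\ \simeq\ } \mathcal{A}^{{}^{\mathrm{t}}}$.

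Next I would dualize the chain $\mathcal{A}^{{}^{\mathrm{pb}}}_{N_{\mathrm{f}}} \xrightarrow{\overline{\mathrm{close}}} \mathcal{A}^{{}^{\mathrm{c}}} \xrightarrow{\ \bar i\ ,\ \simeq\ } \mathcal{A}^{{}^{\mathrm{t}}}$, which reverses the arrows and produces $i^\ast : \mathcal{W}^{{}^{\mathrm{t}}} \to \mathcal{W}^{{}^{\mathrm{c}}}$ and $\mathrm{close}^\ast : \mathcal{W}^{{}^{\mathrm{c}}} \to \mathcal{W}^{{}^{\mathrm{pb}}}_{N_{\mathrm{f}}}$, where $\mathcal{W}^{{}^{\mathrm{pb}}}_{N_{\mathrm{f}}} = ((\mathcal{A}^{{}^{\mathrm{pb}}}_{N_{\mathrm{f}}})_\bullet)^\ast$ as in \eqref{SpaceOfHorizontalWeightSystems}. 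Since $\bar i$ is an isomorphism, $i^\ast = (\bar i)^\ast$ is automatically a bijection; this is the easy half. The nontrivial half is that $\mathrm{close}^\ast$ is injective, which, by the standard duality between surjections and injections of graded, degreewise finite-dimensional vector spaces, is equivalent to surjectivity of $\overline{\mathrm{close}} : \mathcal{A}^{{}^{\mathrm{pb}}}_{N_{\mathrm{f}}} \twoheadrightarrow \mathcal{A}^{{}^{\mathrm{c}}}$.

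Thus the whole proposition reduces to this surjectivity, and that is the step I expect to be the main obstacle. The plan there is to produce, for each round chord diagram $D$, a horizontal chord diagram $\widehat D$ on $N_{\mathrm{f}}$ strands with $\mathrm{close}(\widehat D) = D$, by cutting the base circle of $D$ at $N_{\mathrm{f}}$ points and reading the resulting arcs as strands: the cut is legal exactly when no chord of $D$ has both endpoints on a single arc, which can be arranged by placing a cut point on each of the two sides of every chord. The delicate point is that a legal cut needs enough cut points, i.e. $N_{\mathrm{f}}$ large relative to the degree of $D$; in each fixed degree this is no constraint once $N_{\mathrm{f}}$ is taken large enough (and for the application the number of strands is unbounded), while any residual small-strand case with an unavoidable same-strand chord is rewritten, using the round 4T-relations available in $\mathcal{A}^{{}^{\mathrm{c}}}$, as a combination of legally cuttable diagrams. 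Assembling these realizations shows that the closures span $\mathcal{A}^{{}^{\mathrm{c}}}$, giving surjectivity of $\overline{\mathrm{close}}$ and hence injectivity of $\mathrm{close}^\ast$; combined with the bijectivity of $i^\ast$ this is the assertion. I would cite \cite{BarNatan96} for the braid-closure surjectivity and \cite{BarNatan95} for the isomorphism $\mathcal{A}^{{}^{\mathrm{c}}} \simeq \mathcal{A}^{{}^{\mathrm{t}}}$ already invoked.
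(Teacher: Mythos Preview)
Your overall architecture matches the paper's: the paper also treats this proposition as the dualization of the maps in \eqref{FunctionsOfSetsOfDiagrams} after checking (in the text preceding the proposition) that they descend to the quotients, and it invokes \eqref{RoundChordDiagramsModulo4TIsJavcobidDiagramsModuloSTU} (i.e.\ \cite[Thm.~6]{BarNatan95}) for the bijectivity of $i^\ast$. Your first two paragraphs reproduce this faithfully and correctly.

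The gap is in your third paragraph, in the handling of the small-$N_{\mathrm{f}}$ case. The claim that ``any residual small-strand case with an unavoidable same-strand chord is rewritten, using the round 4T-relations, as a combination of legally cuttable diagrams'' is not merely unproven but actually false, and this can be seen by a dimension count rather than by chasing relations. Take $N_{\mathrm{f}}=2$: then $\mathcal{A}^{{}^{\mathrm{pb}}}_{2}$ is the free associative algebra on the single degree-$1$ generator $t_{12}$ (neither 2T nor 4T applies with only two strand-labels), so $\dim\big(\mathcal{A}^{{}^{\mathrm{pb}}}_{2}\big)_n = 1$ for every $n$. But $\dim\big(\mathcal{A}^{{}^{\mathrm{c}}}\big)_n \geq 2$ already for $n=2$ (the two inequivalent round diagrams with two chords survive the 4T relations). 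Hence $\overline{\mathrm{close}} : \big(\mathcal{A}^{{}^{\mathrm{pb}}}_{2}\big)_2 \to \big(\mathcal{A}^{{}^{\mathrm{c}}}\big)_2$ maps a $1$-dimensional space to a $2$-dimensional one and cannot be surjective, so no amount of 4T-rewriting can help. Concretely, the degree-$3$ round diagram with three isolated chords that you are implicitly trying to cut requires at least three cut points, and there is no 4T-combination of two-strand closures equal to it.

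What this shows is that the injectivity of $\mathrm{close}^\ast$ into $\mathcal{W}^{{}^{\mathrm{pb}}}_{N_{\mathrm{f}}}$ holds only degree-by-degree for $N_{\mathrm{f}}$ large enough (your own observation that $N_{\mathrm{f}} \geq 2n$ suffices, by placing one cut between every pair of consecutive chord-endpoints, is the correct bound), or equivalently as a map into the direct sum $\mathcal{W}^{{}^{\mathrm{pb}}} = \bigoplus_{N_{\mathrm{f}}} \mathcal{W}^{{}^{\mathrm{pb}}}_{N_{\mathrm{f}}}$. The paper's surjection arrow in \eqref{FunctionsOfSetsOfDiagrams} and the proposition as displayed should be read in this sense; you were right to flag that ``for the application the number of strands is unbounded,'' and that is in fact the regime in which the argument goes through. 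Drop the 4T-rewriting claim and state the result with that qualification.
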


\newpage

\subsection{Lie algebra weight systems}
\label{LieAlgebraWeightSystems}

\noindent {\bf Metric Lie algebras appear.}
The equivalence \eqref{RoundChordDiagramsModulo4TIsJavcobidDiagramsModuloSTU}
reveals that weight systems secretly encode Lie theoretic data.
Indeed, the STU-relation \eqref{STURelations}
is manifestly the Jacobi identity, or more generally the
Lie action property. This is expressed in \emph{Penrose diagram notation}
(reviewed in \cite[appendix, p. 424-434]{PenroseRindler84})
also called \emph{string diagram calculus} (reviewed in \cite{Selinger09});
see the big table on page \pageref{StringDiagramTable} for the translation. Diagrammatically:
\begin{equation}
\label{ActionProperty}
  \mbox{
  \raisebox{-40pt}{

  }
$$

$$
  \;
  \rho(f(x,y),z)
  \;\;\;\;\;\;\;\;\;\;\;\;\;
  =
  \;\;\;\;\;\;\;\;\;\;\;\;\;
  \rho(y,\rho(x,z))
  \;\;\;\;\;\;\;\;\;\;\;\;\;\;
  -
  \;\;\;\;\;\;\;\;\;\;\;\;\;\;
  \rho(x,\rho(y,z))
\end{equation}
\noindent
With $f(x,y) = [x,y]$ a Lie bracket, this is the Lie action property on $\rho$.
Moreover, with $\rho(x,z) = [x,z]$ the adjoint action, this is
the Jacobi identity.

\medskip

This means that metric Lie representations of metric Lie algebras internal to
tensor categories induce weight systems (Def. \ref{HorizontalWeightSystems})
on chord diagrams.  For ordinary Lie algebras this is due to \cite[Sec. 2.4]{BarNatan95}, while
the general statement is made explicit in \cite[Sec. 3]{RobertsWillerton06},
following observations in \cite{Vaintrob94}\cite{Vogel11}. We capture this as:

\vspace{-2mm}
\begin{equation}
\label{LieAlgebraWeightSystem}
\mbox{

   (1.2,1);



 \draw[ultra thick, darkblue] (0,2) to (0,-1);
 \draw[ultra thick, darkblue] (1.2,2) to (1.2,-1);
 \draw[ultra thick, darkblue] (2.4,2) to (2.4,-1);

 \draw (-.6,1.5) node {$\cdots$};
 \draw (0.6,1.5) node {$\cdots$};
 \draw (1.2+0.6,1.5) node {$\cdots$};
 \draw (2.4+0.6,1.5) node {$\cdots$};

 \draw (-.6,-.5) node {$\cdots$};
 \draw (0.6,-.5) node {$\cdots$};
 \draw (1.2+0.6,-.5) node {$\cdots$};
 \draw (2.4+0.6,-.5) node {$\cdots$};

 \draw (0,-1.4) node {$V$};
 \draw (1.2,-1.4) node {$V$};
 \draw (2.4,-1.4) node {$V$};

 \draw (0,-1.4+3.8) node {$V$};
 \draw (1.2,-1.4+3.8) node {$V$};
 \draw (2.4,-1.4+3.8) node {$V$};

 \begin{scope}[shift={(0,0)}]
   \draw[draw=orangeii, fill=orangeii] (0,0) circle (.07);
   \draw[draw=darkblue, thick, fill=white]
     (-.15,-.13)
     rectangle node{\tiny $\rho$}
     (.15,.17);
 \end{scope}
 \begin{scope}[shift={(2.4,0)}]
   \draw[draw=orangeii, fill=orangeii] (0,0) circle (.07);
   \draw[draw=darkblue, thick, fill=white]
     (-.15,-.13)
     rectangle node{\tiny $\rho$}
     (.15,.17);
 \end{scope}

 \begin{scope}[shift={(0,1)}]
   \draw[draw=orangeii, fill=orangeii] (0,0) circle (.07);
   \draw[draw=darkblue, thick, fill=white]
     (-.15,-.13)
     rectangle node{\tiny $\rho$}
     (.15,.17);
 \end{scope}
 \begin{scope}[shift={(1.2,1)}]
   \draw[draw=orangeii, fill=orangeii] (0,0) circle (.07);
   \draw[draw=darkblue, thick, fill=white]
     (-.15,-.13)
     rectangle node{\tiny $\rho$}
     (.15,.17);
 \end{scope}

\end{tikzpicture}
}
\\
$\phantom{a}$
\\
{\tiny \color{blue} Chord/Jacobi diagram}
&
\hspace{.4cm}{\tiny \color{blue} evaluates to}
&
\hspace{-1.1cm}
{\tiny \color{blue} element of ground field $k = \mathrm{End}(\mathbf{1})$:}
\\
\\
\raisebox{-60pt}{

  }
  &
  \scalebox{1}{
    $
      \rho_a{}^l{}_i \, k_{l j}
      =
      \rho_a{}^l{}_j \, k_{l i}
    $
  }
  \\
  \hline
\end{tabular}
}

\newpage

\noindent {\bf Metric super Lie algebras appear.}
The relevance of tensor categories in \eqref{LieAlgebraWeightSystem}
more general than that of plain vector spaces, is that by considering the tensor
category of super vector spaces (e.g., \cite[3.1]{Varadarajan04}), it immediately
follows that metric representations of \emph{super Lie algebras} \cite{Kac77}
or rather of \emph{metric super Lie algebras} (as in \cite[3.3]{dMFMR09})
are a source of weight systems on chord diagrams
\cite{Vaintrob94}\cite{FKV97}\cite{Vogel11}; see \cite[6.4]{CDM11}.
Moreover, we observe that Deligne's theorem
\cite{Deligne02} (see \cite{Ostrik04})
says that \emph{all} reasonable tensor categories
(satisfying just a mild set-theoretic size bound) are
representation categories of algebraic super-groups,
whence all reasonable Lie algebra weight systems
on chord diagrams are induced by metric super Lie algebras,
in general equivariant with respect to some super
symmetry group.
This means that the theory of weight systems on chord diagrams
largely overlaps with that of metric representations
of metric super Lie algebras.  However, interestingly, weight systems
see even one further datum, as we describe next.

\subsection{On stacks of coincident strands}
\label{OnStacksOfCoincidentStrands}

\noindent {\bf Stacks of coincident strands.}
We now consider horizontal chord diagrams
$D \in \mathcal{D}^{{}^{\mathrm{pb}}}$
that superficially have $N_{\mathrm{f}}$ strands
as in \eqref{HorizontalChordDiagrams},
but where, on closer inspection, the $i$th strand is seen/resolved to
consist of a stack of $N_{\mathrm{c},i}$ ``coincident strands'',
for some tuple of natural numbers:

\vspace{-.4cm}

\begin{equation}
  \label{Tuples}
  \vec N_{\mathrm{c}}
    =
  (N_{\mathrm{c},1}, \cdots, N_{\mathrm{c},N_{\mathrm{f}}} )
  \;\in\;
  \mathbb{N}^{N_{\mathrm{f}}}
  \;\;\; \mbox{with}\;\;\;
  N_{\mathrm{c}}
  :=
  \underoverset{i = 1}{N_{\mathrm{f}}}{\sum} N_{\mathrm{c},i}\;.
\end{equation}
The following operation $\Delta$ (see \cite[2.2]{BarNatan96})
may be seen to make this idea precise:

\begin{equation}
  \label{Delta}
  \hspace{-3.5cm}
  \xymatrix@R=-10pt@C=2.5em{
    \mathcal{D}^{{}^{\mathrm{pb}}}
    \times
    \big(
      \underset{\mathbb{N}}{\oplus} \mathbb{N}
    \big)
    \ar[rrrr]^-{ \Delta }
    &&&& \;\;
    \mathcal{A}^{{}^{\mathrm{pb}}}
  \\
  \big( D, \; \vec N_{\mathrm{c}}  \big)
    \ar@{}[rrrr]|<<<<<{\longmapsto}
  &&&&
  \mathclap{
  \underset{
    \mathclap{
      \mbox{
        \tiny
        \color{blue}
        \begin{tabular}{c}
          $\phantom{a}$
          \\
          Chord diagram
          like $D$
          \\
          but with stacks of
          \\
          $\vec N_{\mathrm{c}}$ coincident strands
        \end{tabular}
      }
    }
  }{
    \Delta^{\vec N_{\mathrm{c}}}
    (D)
  }
  \;:=\;
  \left[
   \!\!
  \mbox{
    \small
    \begin{tabular}{l}
      Sum of
      horizontal chord diagrams
      \\
      with $N_{\mathrm{c}} = N_{\mathrm{c}_1} + \cdots + N_{\mathrm{c},N_{\mathrm{f}}}$ strands
      \\
      whose chords are the chords $t_{i j}$ of $D$
      \\
      but re-attached in all
      $N_{\mathrm{c},i}\cdot N_{\mathrm{c},j}$ ways
      \\
      to the $i$th and the $j$th stack of chords
    \end{tabular}
  }
  \!\!
  \right]
  }
  }
\end{equation}

\vspace{-.1cm}

\begin{equation}
  \label{DirectSumcalA}
  \mathllap{
    \mbox{where now}
    \phantom{AAAAAAAA\!}
  }
  \mathcal{A}^{{}^{\mathrm{pb}}}
  \;:=\;
  \underset
  {
    N_{\mathrm{f}}
    \in
    \mathbb{N}
  }
  {\medoplus}
  \mathcal{A}^{{}^{\mathrm{pb}}}_{N_{\mathrm{f}}}
  ,\;\;\;\;\;\;\;\;\;
  \xymatrix{
    \mathcal{A}^{{}^{\mathrm{pb}}}
    \ar@{->>}[rr]^-{ p_{{}_{N_{\mathrm{f}}}} }
    && \;\;
    \mathcal{A}^{{}^{\mathrm{pb}}}_{N_{\mathrm{f}}}
   \;\; \ar@{^{(}->}[rr]^-{ i_{{}_{N_{\mathrm{f}}}} }
    && \;\;
    \mathcal{A}^{{}^{\mathrm{pb}}}
  }
\end{equation}

\vspace{-.1cm}

\noindent
denotes the direct sum of all spaces of horizontal chord diagrams
\eqref{HorizontalChordDiagramsModulo2TAnd4TRelations}
over the number $N_{\mathrm{f}}$ of strands.
$$


\newpage

\noindent {\bf Lie algebra weight systems on horizontal chord diagrams with stacks
of coincident strands.}
The construction $\Delta$ \eqref{Delta} of horizontal chord diagram with stacks of coincident
strands passes from the plain set $\mathcal{D}^{{}^{\mathrm{pb}}}$ of horizontal chord
diagrams to a linear map on the vector space $\mathcal{A}^{{}^{\mathrm{pb}}}$ in
\eqref{DirectSumcalA}. This means that we obtain further weight systems \eqref{SpaceOfHorizontalWeightSystems} on horizontal chord diagrams by applying
Lie algebra weight systems $w_{(V,\rho)}$ from \eqref{LieAlgebraWeightSystem}
to a horizontal chord diagram $D$ after ``zooming in'' to $\Delta^{\vec N_{\mathrm{c}}}(\Delta)$,
resolving their stacks of coincident strands:

\vspace{-.4cm}

\begin{equation}
  \label{LieAlgebraWeightSystemOnHoizontalChordDiagramsWithStacksOfCoincidentStrands}
  \hspace{-3cm}
  \xymatrix@R=1pt@C=2pt{
    \mathrm{Span}
    \Big( \!\!\!\!\!\!\!
    &
    \underset{
      \mbox{
      }
    }{
    \big(
    \mathrm{MetLieMod}_{/\sim}
    \big)
    }
    \ar@{}[r]|-{\times}
    &
    \big(
      \underset{\mathbb{N}}{\oplus} \mathbb{N}
    \big)
    \ar@{}[r]|-{
       \raisebox{-12pt}{
         \scalebox{.74}{
         $\underset{\mathbb{N}}{\times} $
         }
       }
    }
    &
    \big(
    \underset{N_{\mathrm{f}} \in \mathbb{N}}{\sqcup}
    \mathrm{Sym}(N_{\mathrm{f}})
    \big)
    & \!\!\!\!\!\!\!\!\!\!
    \Big)
    \ar[rr]^-{
      \overset{
        \mbox{
          \tiny
          \color{blue}

$$

\vspace{.2cm}

\noindent {\bf Fundamental theorem on horizontal weight systems.}
With \cite[Cor. 2.6]{BarNatan96} we now obtain:
\begin{prop}
  \label{FundamentalTheremOfWeightSystems}
  All weight systems on horizontal chord diagrams
  (Def. \ref{HorizontalWeightSystems}) are
  linear combinations of Lie algebra
  weight systems with stacks of coincident strands \eqref{LieAlgebraWeightSystemOnHoizontalChordDiagramsWithStacksOfCoincidentStrands}
  for (at least) the general linear Lie algebras
  $\mathfrak{g}
    \;\in\;
    \big\{
      \mathfrak{gl}(N)
      \;\vert\;
      N \in \mathbb{N}_{\geq 2}
    \big\}$
  over the given ground field:
  For these Lie algebras the construction
  \eqref{LieAlgebraWeightSystemOnHoizontalChordDiagramsWithStacksOfCoincidentStrands}
  is surjective,
  so that on the quotient $(-)_{/\sim}$ by its kernel
  it is a linear bijection.
    In particular, with $\mathfrak{gl}(2, \mathbb{C}) \simeq \mathfrak{su}(2)_{\mathbb{C}} \oplus \mathbb{C}$ we have:

  \vspace{-.4cm}

  \begin{equation}
  \label{FundamentalTheoremEquivalence}
  \hspace{-1cm}
    \xymatrix@C=4em{
    \mathrm{Span}
    \Big(
      \overset{
        \mathclap{
        \mbox{
          \tiny
          \color{blue}

        }
        }
      }{
        \mathcal{W}^{{}^{\mathrm{pb}}}
      }.
    }
  \end{equation}
\end{prop}

\noindent {\bf Conclusion:}
Weight systems on horizontal chord diagrams
is a theory of
1) metric super Lie representations,
2) stacks of coincident strands,
3) winding monodromies,
subject to 4) dualities. In \cref{WeightSystemsAsObservablesOnIntersectingBranes} we match this to physics.

\newpage

\thispagestyle{empty}

\begin{example}[Fundamental $\mathfrak{gl}(2,\mathbb{C})$-weight system]
  \label{TheFundamentalgl2WeightSystem}
  Considesr the
  Lie algebra
  $\mathfrak{su}(2)_{\mathbb{C}} \oplus \mathbb{C}
  \;\simeq\; \mathfrak{gl}(2,\mathbb{C})$
  equipped with the metric

  \vspace{-.9cm}

  \begin{equation}
    \label{gl2FundamentalTraceMetric}
    g(-,-)
    \;\coloneqq\;
    \mathrm{tr}_{\mathbf{2}}(-\cdot -)
  \end{equation}
  given by the trace in its
  defining fundamental representation $\mathbf{2}$;
  and consider the corresponding Lie algebra weight system
\eqref{LieAlgebraWeightSystemOnHoizontalChordDiagramsWithStacksOfCoincidentStrands}
with trivial winding monodromy and no stacks of coincident strands:

\vspace{-.5cm}

\begin{equation}
  \label{Fundamentalgl2WeightSystem}
  w_{\mathbf{2}}
  \;:=\;
  \mathrm{tr}_{\mathrm{id}}
  \circ
  w_{(\mathfrak{gl}(2,\mathbb{C}), \mathbf{2})}
  \circ
  \Delta^{(1,1\cdots, 1)}
  \,.
\end{equation}

\vspace{-.1cm}

  \noindent An elementary computation reveals that
  the value of the Lie algebra weight system \eqref{Fundamentalgl2WeightSystem}
  on a single chord is the braiding operation \cite[Fact 6]{BarNatan96}
  (see also \cref{LieAlgebraWeightSystemsEncoded3Algebras}
  and \cref{RoundWeightSystemsEncodeOpenStringAmplitudes} below):
\begin{equation}
  \label{gl2ChordIsBraiding}
\mbox{

}
\\
\hline
\end{tabular}
}
\end{equation}
This directly implies that the value
(according to \cref{LieAlgebraWeightSystems})
of the weight system \eqref{gl2FundamentalTraceMetric}
on a horizontal chord diagram $D \in \mathcal{D}^{\mathrm{pb}}$\
equals 2 taken to the power of the number of cycles in the
corresponding permutation:

\vspace{-.8cm}

  \begin{equation}
    \label{ValueOfFundamentalgl2WeightSystemOnChordDiagram}
    w_{\mathbf{2}}([D])
    \;=\;
    2^{ \#\mathrm{cycles}(\mathrm{perm}(D)) }
    \,,
    \phantom{A}
    \mbox{where}
    \phantom{AA}
    \xymatrix{
      \overset{
        \mathclap{
        \mbox{
          \tiny
          \color{blue}

  }
  $\;\;\; = \;\;\; 2^2$
  \end{tabular}
  }
\end{equation}
\end{example}

\newpage

\subsection{Quantum states on chord diagrams}
\label{QuantumStatesOnChordDiagrams}

We observe here that the algebra of horizontal chord diagrams
is canonically a star-algebra (Prop. \ref{StarStructureOnHorizontalChordDiagrams} below),
and as such qualifies as an \emph{algebra of observables}
according to quantum probability theory (see \cite{Swart17}\cite{Landsman17} for reviews).
This exhibits weight systems as linear maps assigning probability amplitudes to observables.
We may therefore consider those weight systems which are
\emph{quantum states} (density matrices) in that they
assign consistent expectation values to
real (i.e., self-adjoint) observables (Def. \ref{States} below).
An example is the fundamental
$\mathfrak{gl}(2,\mathbb{C})$-weight system
(Example \ref{gl2FundamentalWeightSystemIsAState} below)
which, further below in \cref{LargeNLimitM2M5BraneBoundStates},
we identify  with the state of two coincident transversal M5-branes
in the BMN matrix model.

\medskip

The following Definition \ref{StarAlgebra}
is traditionally considered for Banach algebras, where it
yields the concept of $C^*$-algebras;
see for instance \cite[Def. C.1]{Landsman17}.
We need the simple specialization to
finite-dimensional star-algebras (e.g., \cite[2.1]{Swart17}),
or rather the evident mild generalization of that
to degreewise finite-dimensional graded star-algebras:
\begin{defn}[Star-algebra]
  \label{StarAlgebra}
  A \emph{star-algebra} (for the present purpose) is a
  degreewise finite-dimensional graded associative algebra
  $\mathcal{A}$ over the complex numbers, equipped with an
  involutive anti-linear anti-homomorphism $(-)^\ast$,
   the \emph{star-operation}, hence with a
  function
  $$
    \xymatrix{
      \mathcal{A}
      \ar[r]^-{(-)^\ast}
      &
      \mathcal{A}
    }
  $$
  which satisfies:

  \medskip
\hspace{-.8cm}
  \begin{tabular}{lll}
  {\bf (i) Degree}:
    & $\mathrm{deg}(A) = \mathrm{deg}(A^\ast)$
    &
    \multirow{1}{*}{
      for all homogeneous $A \in \mathcal{A}$
    }.
  \\
  {\bf (ii)  Anti-linearity}:
    &
    $\big( a_1 A_1 + a_2 A^2\big)^\ast = \bar a_1 A_1^\ast + \bar a_2 A_2^\ast$.
    &
    \multirow{3}{*}{
      for all $a_i \in \mathbb{C}$, $A_i \in \mathcal{A}$
    }.
  \\
  {\bf (iii) Anti-homomorphism}:
    &
    $\big( A_1 A_2\big)^\ast = A_2^\ast A_1^\ast$
  \\
  {\bf (iv) Involution}:
    & $\left( (A)^\ast \right)^\ast = A$,
  \end{tabular}\\

  \noindent where $\bar a_i$ denotes the complex conjugate of
  $a_i$.
\end{defn}

\begin{prop}[Star-structure on horizontal chord diagrams]
  \label{StarStructureOnHorizontalChordDiagrams}
  The algebra of horizontal chord diagrams \eqref{AlgebraOfHorizontalChordDiagrams}
  becomes a complex star-algebra (Def. \ref{StarAlgebra})
  via the star-operation
  $$
    \xymatrix@R=-1pt{
      \mathcal{A}^{{}^{\mathrm{pb}}}
      \ar[rr]^-{ (-)^\ast}
      &&
      \mathcal{A}^{{}^{\mathrm{pb}}}
      \\
      a_1 \cdot D_1
      +
      a_2 \cdot D_2
      \ar@{|->}[rr]
      &&
      \bar a_1 \cdot D_1^\ast
      +
      \bar a_2 \cdot D_2^\ast
    }
  $$
  where
  $$
    \xymatrix{
      \mathcal{D}^{{}^{\mathrm{pb}}}
      \ar[rr]^-{ (-)^\ast}
      &&
      \mathcal{D}^{{}^{\mathrm{pb}}}
    }
  $$
  is the operation on horizontal chord diagrams
  \eqref{HorizontalChordDiagrams}
  that reverses the orientation of strands in a chord diagram.
\end{prop}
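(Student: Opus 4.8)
The plan is to exhibit $(-)^\ast$ algebraically through the free-algebra presentation \eqref{HorizontalChordDiagramsModulo2TAnd4TRelations} of $\mathcal{A}^{{}^{\mathrm{pb}}}$. First I would observe that reversing the orientation of all strands fixes each single-chord generator $t_{i j}$ of \eqref{GeneratorsOfHorizontalChordDiagrams}, since a lone chord between strands $i$ and $j$ is carried to itself: orientation reversal changes neither the pair of strands the chord connects nor the fact that it is a single chord, so $t_{i j}^\ast = t_{i j}$. On a concatenation, by contrast, it reverses the order in which the chords are met along the strands. Extended anti-linearly over $\mathbb{C}$, this is therefore exactly the unique anti-linear anti-automorphism of the free graded associative algebra on the $t_{i j}$ that fixes all generators; in particular the anti-homomorphism identity $(D_1 \circ D_2)^\ast = D_2^\ast \circ D_1^\ast$ holds on the free algebra by construction, and $\big((-)^\ast\big)^\ast = \mathrm{id}$ because reversing orientation twice is the identity.

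The essential point is then to check that this operation descends to the quotient by the infinitesimal braid relations \eqref{InfinitesimalBraidRelations}, i.e. that $(-)^\ast$ preserves the two-sided ideal $(\mathrm{2T},\mathrm{4T})$. For this I would use the elementary identity $[A,B]^\ast = (A B - B A)^\ast = B^\ast A^\ast - A^\ast B^\ast = -[A^\ast, B^\ast]$, valid in any anti-homomorphism, together with $t_{i j}^\ast = t_{i j}$. Applied to the two families of relators this yields
\begin{equation*}
  \big[t_{i j},\, t_{j k}\big]^\ast = -\big[t_{i j},\, t_{j k}\big],
  \qquad
  \big[t_{i k} + t_{j k},\, t_{i j}\big]^\ast = -\big[t_{i k} + t_{j k},\, t_{i j}\big],
\end{equation*}
so that each defining relator is sent to its own negative and hence remains a relator. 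Since for any $x, y \in \mathcal{A}^{{}^{\mathrm{pb}}}$ and any relator $R$ one has $(x R y)^\ast = y^\ast R^\ast x^\ast = -\, y^\ast R\, x^\ast \in (\mathrm{2T},\mathrm{4T})$, and since the ideal is a linear subspace, the whole ideal is mapped into itself. Therefore $(-)^\ast$ is well defined on the quotient $\mathcal{A}^{{}^{\mathrm{pb}}}$.

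Finally I would verify the four axioms of Definition \ref{StarAlgebra} on the quotient. Degree preservation (i) is immediate, since reversing the order of a word leaves its number of chords, hence its degree, unchanged; anti-linearity (ii) holds by the very definition of the anti-linear extension to linear combinations; and the anti-homomorphism property (iii) together with the involutivity (iv) descend directly from the free algebra, where they were established in the first paragraph. The only step carrying genuine content is the descent of the second paragraph, and this is where I expect the argument to concentrate: everything else is a formal consequence of $(-)^\ast$ being the anti-automorphism determined by its action on generators. The mild subtlety to keep in mind is merely that the sign picked up in $[A,B]^\ast = -[A^\ast,B^\ast]$ is harmless precisely because the ideal is closed under scalar multiplication, so that sending each relator to its negative still lands inside the ideal.
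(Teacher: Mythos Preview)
Your proof is correct and complete. The paper itself does not give an explicit proof of this proposition; it only states the result, illustrates it with an example, and then in the subsequent remark offers a conceptual explanation: since $\mathcal{A}^{{}^{\mathrm{pb}}}$ is the Pontrjagin Hopf algebra of a based loop space (via Prop.~\ref{HigherObservablesOnIntersectingBranesAreWeightSystems}), the star-operation is nothing but the involutive antipode of that Hopf algebra structure.

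Your approach is more elementary and self-contained: you work directly with the free-algebra presentation \eqref{HorizontalChordDiagramsModulo2TAnd4TRelations}, observe that orientation reversal fixes each generator $t_{ij}$ and reverses concatenation order, and then verify descent to the quotient by showing that each relator in \eqref{InfinitesimalBraidRelations} is sent to its negative under $[A,B]^\ast = -[A^\ast,B^\ast]$. This buys you a proof that does not appeal to the topological identification with loop space homology, whereas the paper's remark situates the star-structure within the general Hopf-algebraic framework but leaves the verification implicit. Both viewpoints are valuable: yours confirms the statement hands-on, the paper's hints at why such an involution must exist on structural grounds.
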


\noindent For example:
$$
\left(
a
\cdot
\left[
\scalebox{.8}{
\raisebox{-105pt}{

}
}
\right]
$$

\begin{remark}[Loop and configuration spaces]
  By Prop. \ref{HigherObservablesEquivalentToCohomologyOfLoopedConfigurationSpace}
  the algebra of horizontal chord diagrams
  is equivalently the
  homology Pontrjagin algebra of a based loop space
  (namely of an ordered configuration space of points).
  As such it is a Hopf algebra with involutive antipode,
  and this is the star-structure of Prop. \ref{StarStructureOnHorizontalChordDiagrams}.
\end{remark}

\medskip

The following Definition \ref{States} is standard in quantum probability theory
and in algebraic quantum (field) theory
(see, for instance, \cite[Def. 2.4]{Landsman17}\cite[2.3]{Swart17}).
\begin{defn}[Quantum state on a star-algebra]
  \label{States}
  Given a complex star-algebra $(\mathcal{A}, (-)^\ast)$
  (Def. \ref{StarAlgebra}),
  a (possibly mixed-, quantum-)\emph{state} (or \emph{density matrix})
  is a complex-linear function
  $$
    \rho
    :
    \mathcal{A}
    \longrightarrow
    \mathbb{C}
  $$
  which satisfies:\\

\hspace{-.8cm}
  \begin{tabular}{lll}
    {\bf (i) Positivity}:
      & $ \rho\big( A A^\ast \big) \geq 0 \in \mathbb{R} \subset \mathbb{C}$
      & for all $A \in \mathcal{A}$.
    \\
    {\bf (ii) Normalization}:
      & $\rho(\mathbf{1}) = 1$
      & for $\mathbf{1} \in \mathcal{A}$ the algebra unit.
  \end{tabular}\\

\end{defn}

\begin{remark}[Interpretation]
  \label{NormalizationToStates}
  The point of Def. \ref{States} is the positivity condition
  (which might rather deserve to be called \emph{semi-positivity},
  but \emph{positivity} is the established terminology here)
  while the normalization condition
  is just that: If $\rho$ is a (semi-)positive linear map
  with $\rho(\mathbf{1}) \neq 0$ then
  $\frac{1}{\rho(\mathbf{1})}\rho$ is a state.
\end{remark}

\begin{example}[Fundamental $\mathfrak{gl}(2)$-weight system is a state]
  \label{gl2FundamentalWeightSystemIsAState}
  The normalization (Remark \ref{NormalizationToStates})
  of the fundamental $\mathfrak{gl}(2)$-weight system
  $w_{\mathbf{2}}$ (Example \ref{TheFundamentalgl2WeightSystem})
  is a quantum state (Def. \ref{States}) with respect to the
  star-algebra structure on horizontal chord diagrams
  from Prop. \ref{StarStructureOnHorizontalChordDiagrams}.
\end{example}
The full proof establishing Example \ref{gl2FundamentalWeightSystemIsAState}
is relegated to \cite{CSS20},
here we just indicate the idea by proving the first non-trivial case:

\begin{remark}[Bilinear form on permutations]
  \label{BilinearFormOnPermutations}
  On the complex-linear span of the set of permutations on
  $N_{\mathrm{f}}$ elements, consider the sesqui-linear form
  \begin{equation}
    \label{SesquilinearFormOnPermutations}
    \xymatrix@R=-2pt{
      \mathbb{C}[\mathrm{Sym}(N_{\mathrm{f}})]
      \times
      \mathbb{C}[\mathrm{Sym}(N_{\mathrm{f}})]
      \ar[rr]
      &&
      \mathbb{C}
      \\
      (a_1 \sigma_1, \, a_2 \sigma_2)
      \ar@{|->}[rr]
            &&
      a_1 \bar a_2
      \,
      2^{\#\mathrm{cycles}(\sigma_1 \circ \sigma_2^{-1})}
    }
  \end{equation}
  The statement of example \ref{gl2FundamentalWeightSystemIsAState}
  is equivalent,
  by \eqref{ValueOfFundamentalgl2WeightSystemOnChordDiagram} in  Example \ref{TheFundamentalgl2WeightSystem},
  to the statement that the sesqui-linear form \eqref{SesquilinearFormOnPermutations}
  is
  positive semi-definite, in that for any formal linear combination
  of permutations
  $\Sigma \in \mathbb{C}[\mathrm{Sym}(N_{\mathrm{f}})]$ we have
  $$
    \left\vert \Sigma \right\vert^2
    \;:=\;
    \left\langle \Sigma,\Sigma\right\rangle
    \;\geq\;
    0
    \;
    \in \mathbb{R}
    \subset \mathbb{C}
    \,.
  $$
\end{remark}

\begin{lemma}[Positivity of sesquilinear form for length 2]
  \label{SesquilinearFormPositiveOnLinearCombinationOfLength2}
  The fundamental $\mathfrak{gl}(2,\mathbb{C})$-weight system
  $w_{\mathbf{2}}$ (Example \ref{TheFundamentalgl2WeightSystem})
  is (semi-)positive on the subspace
  of $\mathcal{A}^{{}^{\mathrm{pb}}}_{N_{\mathrm{f}}}$
  consisting of formal linear combinations of length two:
  $$
    w_{\mathbf{2}}
    \big(
      a_1 [D_1] + a_2 [D_2]
    \big)
    \;\geq\;
    0
    \phantom{AAA}
    \mbox{for all $a_i \in \mathbb{C}$, $D_i \in \mathcal{D}^{{}^{\mathrm{pb}}}_{N_{\mathrm{f}}}$}
    \,.
  $$
\end{lemma}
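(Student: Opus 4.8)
The plan is to translate the claim, via Remark \ref{BilinearFormOnPermutations}, into a statement about the $2\times 2$ Gram matrix of the sesquilinear form \eqref{SesquilinearFormOnPermutations}, and then to verify positive semi-definiteness by an elementary determinant computation. First I would record the translation underlying Remark \ref{BilinearFormOnPermutations}: by \eqref{ValueOfFundamentalgl2WeightSystemOnChordDiagram} the weight $w_{\mathbf{2}}$ factors through $\mathrm{perm}$, and the star-operation of Prop. \ref{StarStructureOnHorizontalChordDiagrams} (reversal of strand orientations) reverses the order in which the transpositions $t_{i j}$ are read along the strands; since each transposition is an involution, this sends $\mathrm{perm}(D)$ to $\mathrm{perm}(D)^{-1}$. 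Hence $w_{\mathbf{2}}\big(D_i D_j^\ast\big) = 2^{\#\mathrm{cycles}(\sigma_i \sigma_j^{-1})}$, where $\sigma_i := \mathrm{perm}(D_i)$, so that the positivity assertion $w_{\mathbf{2}}(A A^\ast) \geq 0$ for $A = a_1 [D_1] + a_2 [D_2]$ is precisely the claim $\langle \Sigma, \Sigma\rangle \geq 0$ for $\Sigma = a_1 \sigma_1 + a_2 \sigma_2$.

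Next I would expand this quadratic form. The diagonal entries are $\langle \sigma_i, \sigma_i\rangle = 2^{\#\mathrm{cycles}(\mathrm{id})} = 2^{N_{\mathrm{f}}}$, since the identity permutation of $N_{\mathrm{f}}$ elements has exactly $N_{\mathrm{f}}$ cycles. For the off-diagonal entries, set $c := \#\mathrm{cycles}(\sigma_1 \sigma_2^{-1})$; because a permutation and its inverse share the same cycle type, $\#\mathrm{cycles}(\sigma_2 \sigma_1^{-1}) = \#\mathrm{cycles}\big((\sigma_1 \sigma_2^{-1})^{-1}\big) = c$ as well. The Gram matrix is therefore the real symmetric matrix $\left(\begin{smallmatrix} 2^{N_{\mathrm{f}}} & 2^{c} \\ 2^{c} & 2^{N_{\mathrm{f}}}\end{smallmatrix}\right)$, and $\langle \Sigma, \Sigma\rangle = \lvert a_1\rvert^2\, 2^{N_{\mathrm{f}}} + 2\,\mathrm{Re}(a_1 \bar a_2)\, 2^{c} + \lvert a_2\rvert^2\, 2^{N_{\mathrm{f}}}$.

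Positive semi-definiteness of a $2\times 2$ Hermitian matrix is equivalent to nonnegativity of its trace and of its determinant. The trace $2\cdot 2^{N_{\mathrm{f}}}$ is manifestly positive, and the determinant is $2^{2 N_{\mathrm{f}}} - 2^{2 c}$, which is $\geq 0$ exactly because $c \leq N_{\mathrm{f}}$. The sole nontrivial input is thus the elementary combinatorial bound that every permutation of $N_{\mathrm{f}}$ elements has at most $N_{\mathrm{f}}$ cycles, with equality iff it is the identity; this yields $c \leq N_{\mathrm{f}}$ and completes the argument.

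I expect the only point requiring genuine care to be the translation step, namely confirming that the star-operation corresponds to passage to the inverse permutation and hence that $w_{\mathbf{2}}(D_i D_j^\ast)$ is governed by $\sigma_i \sigma_j^{-1}$; the positivity estimate itself collapses, for length two, to the single inequality $c \leq N_{\mathrm{f}}$. The real difficulty lies in the general case, where the Gram matrix $\big(2^{\#\mathrm{cycles}(\sigma_i \sigma_j^{-1})}\big)_{i,j}$ of arbitrary size must be shown to be positive semi-definite; that argument is deferred to \cite{CSS20}, and the present lemma is only its first nontrivial instance.
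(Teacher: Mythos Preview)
Your proof is correct and follows essentially the same route as the paper: translate via Remark~\ref{BilinearFormOnPermutations} to the sesquilinear form on permutations, expand the length-two case, and reduce everything to the bound $\#\mathrm{cycles}(\sigma_1\sigma_2^{-1}) \leq N_{\mathrm{f}}$. The only cosmetic difference is that you phrase the final step as nonnegativity of the $2\times 2$ Gram determinant $2^{2N_{\mathrm{f}}} - 2^{2c}$, whereas the paper factors out $2^{N_{\mathrm{f}}}$ and invokes the inequality $\lvert a_1\bar a_2 + a_2\bar a_1\rvert \leq \lvert a_1\rvert^2 + \lvert a_2\rvert^2$ directly; these are equivalent formulations of the same $2\times 2$ positive semi-definiteness check.
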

\begin{proof}
  By Remark \ref{BilinearFormOnPermutations}, we equivalently
  have to show that
  \begin{equation}
    \label{FormOnPermutationsPositive}
    \big\vert
      a_1 [\sigma_1] + a_2 [\sigma_2]
    \big\vert^2
    \;\geq\;
    0
    \phantom{AAA}
    \mbox{for all $a_i \in \mathbb{C}$, $\sigma_i \in \mathrm{Sym}(N_{\mathrm{f}})$}
    \,.
  \end{equation}
  Observing that
  $$
    \begin{aligned}
      \big\vert
        a_1 [\sigma_1] + a_2 [\sigma_2]
      \big\vert^2
      &
      =
      \left(
      \left\vert a_1\right\vert^2
      +
      \left\vert a_2\right\vert^2
      \right)
      2^{N_{\mathrm{f}}}
      +
      \left(
        a_1 \bar a_2
        +
        a_2 \bar a_1
      \right)
      2^{ \#\mathrm{cycles}(\sigma_1 \circ \sigma_2^{-1}) }
      \\
      & =
      \underset{
        \mathclap{
        \gt 0 \in \mathbb{R}
        }
      }{
      \underbrace{
        N_{\mathrm{f}}
      }}
      \Bigg(
        \left\vert a_1\right\vert^2
        +
        \left\vert a_2\right\vert^2
        +
        \left(
          a_1 \bar a_2
          +
          a_2 \bar a_1
        \right)
        \underset{
          \in (0,1]
        }{
        \underbrace{
          \frac{
            2^{ \#\mathrm{cycles}(\sigma_1 \circ \sigma_2^{-1}) }
          }
          {
            2^{N_{\mathrm{f}}}
          }
        }
        }
      \Bigg)
    \end{aligned}
    \,,
  $$
  the statement \eqref{FormOnPermutationsPositive}
  follows from the ``cosine rule''
  $
    \left\vert
      a_1 \bar a_2
      +
      a_2 \bar a_1
    \right\vert
    \;\leq\;
    \left\vert a_1\right\vert^2
    +
    \left\vert a_2\right\vert^2
  $.
\end{proof}

\newpage

\section{Chord diagrams and intersecting branes}
\label{WeightSystemsAsObservablesOnIntersectingBranes}

By the isomorphism \eqref{ObservablesAreWeightSystems}, the higher observables
\eqref{HigherObservablesOnD6D8Intersections} on the moduli space
of $\mathrm{D}p \!\perp\! \mathrm{D}(p+2)$-brane intersections,
as described in diagram \eqref{ConfigurationSpaceOfBraneIntersections},
are given by weight systems on horizontal chord diagrams,
discussed in \cref{WeightSystemsOnChordDiagrams}.
Here we discuss how, under this interpretation,
these weight systems
from \cref{WeightSystemsOnChordDiagrams}
turn out to capture various structures known,
or rather: expected, in intersecting brane physics.

\subsection{Lie algebra weight systems give matrix model single trace observables}
\label{LieAlgebraWeightSystemsAreSingleTraceObservables}

\begin{observation}
 \label{TheObservation}
By Prop. \ref{RelationOfWeightSystems} and  Prop \ref{FundamentalTheremOfWeightSystems}
\emph{all} weight systems
(Def. \ref{HorizontalWeightSystems}, \ref{WeightSystemsOnRoundChordDiagrams}), on any
of {\bf (a)} horizontal chord diagrams \eqref{HorizontalChordDiagrams},
{\bf (b)} round chord diagrams \eqref{RoundChordDiagrams},
and {\bf (c)} Jacobi diagrams \eqref{JacobiDiagrams}
evaluate, in the end, to a sum of circular string diagrams. The latter,
in turn, by the rules of Penrose notation/string diagram
calculus from  \cref{LieAlgebraWeightSystems},
evaluate to a trace of a long product of matrices
and summed over sets of pairs of matrices.
For example the diagram on the left below evaluates to the
trace expression shown on the right
(with $(\rho_a \cdot \rho_b)^i{}_j = \rho_a{}^i{}_l \, \rho_b{}^l{}_j$ denoting the matrix product):
\begin{equation}
\label{LieWeightSystemOnJacobiDiagram}
\mathllap{
  \mbox{
   \tiny
   \color{blue}

}
\;\;\;\;
 =
\;\;\;\;
\mathrm{Tr}_{{}_V}
\big(
  \,
  \rho_a
    \cdot
  \rho_b
    \cdot
  \rho_c
    \cdot
  \rho^b
    \cdot
  \rho_d
    \cdot
  \rho^c
    \cdot
  \rho_a
    \cdot
  \rho^e
    \cdot
  \rho^d
  \,
\big)
\end{equation}
Notice that the string diagram on the left may be, but \emph{need not be}, the
exact image of a round chord diagram of the same shape.
In general it is the result a process of duplication and of reconnecting
of strands, according to \eqref{LieAlgebraWeightSystemOnHoizontalChordDiagramsWithStacksOfCoincidentStrands}.
However, the end result is always a sum over terms of this circular shape,
hence is a sum of traces as on the right of \eqref{LieWeightSystemOnJacobiDiagram}.
(This if the monodromy permutation in \eqref{LieAlgebraWeightSystemOnHoizontalChordDiagramsWithStacksOfCoincidentStrands}
has a single cycle, otherwise one gets traces along several connected circles, discussed in \cref{HorizontalChordDiagramsEncodesStringTopologyOperations}.)
\end{observation}

\noindent
{\bf Single trace observables subject to Wick's theorem are weight systems.}
Given a metric Lie representation
$\mathfrak{g}\otimes V \overset{\rho}{\to} V$ as in \cref{LieAlgebraWeightSystems},
consider a quantum field or a random variable $Z$ with values in
$\mathfrak{g}$, hence with component expansion
$
  Z = Z_a \, \rho^a
  $.
A \emph{single trace observable} in $Z$ is an operator/random variable
of the form
\begin{equation}
  \label{SingleTraceObservable}
  \mathcal{O}
  \;=\;
  \mathrm{Tr}\big( Z \cdot Z \cdot \cdots \cdot Z \big)
  \,.
\end{equation}
Assume then that the component fields $Z_a$ are free quantum fields,
or random variables of multivariate Gaussian distribution
with covariance given by the metric $k$ on $V$:
$
  \big\langle
    Z_a Z_b
  \big\rangle
  =
  k_{a b}
  $.
Then \emph{Wick's theorem} says that the higher moments
of $Z$ are sums of contractions labelled by \emph{linear chord diagrams},
as shown, by example, in the first two lines here:

\vspace{-.3cm}

\begin{equation}
\label{LinearChordsTraced}
\raisebox{-.7cm}{
\hspace{-.5cm}\includegraphics[width=\textwidth]{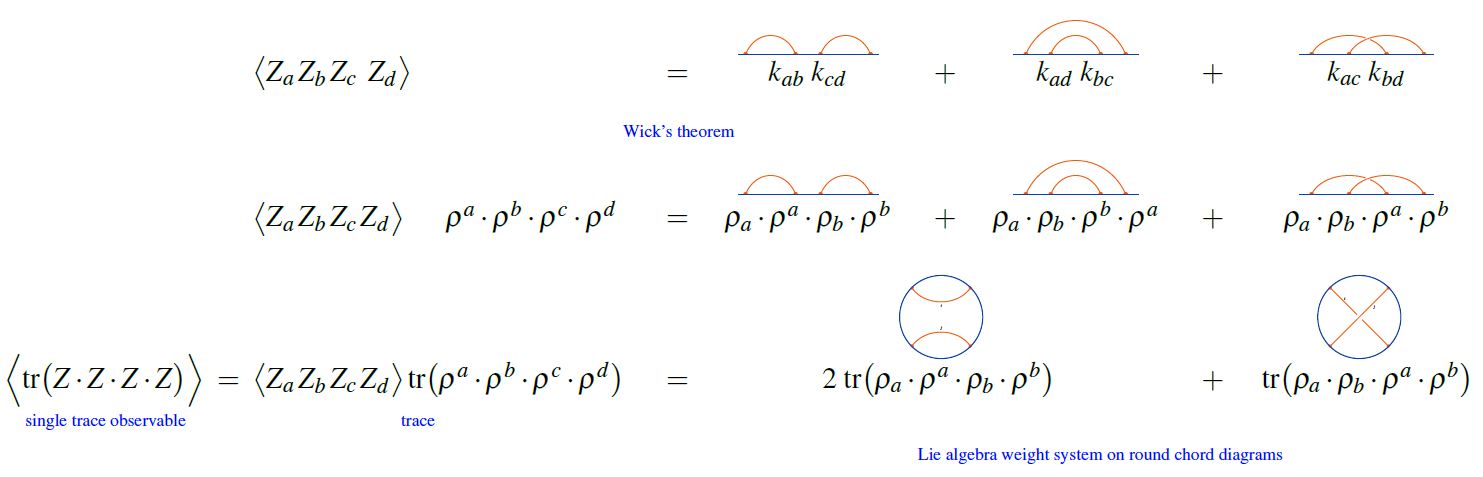}
}
\end{equation}

\newpage
But then, as shown by example in the last line, the trace
that defines the single trace observables closes up
the resulting matrix product such that the terms that were previously
controlled by linear chord diagrams are now labelled by round
chord diagrams \eqref{RoundChordDiagrams}.
Moreover,  comparison with Observation \ref{TheObservation} shows that
the summands contributing to the expectation value of
the single trace observables are exactly the values of
Lie algebra weight systems on these round chord diagrams.

\medskip

\noindent {\bf The SYK-model compactification of M5-branes.}
An observation along the lines of \eqref{LinearChordsTraced}
(with emphasis on the appearance of chord diagrams,
but without the identification of weight systems)
was recently found to be crucial for the analysis of single trace observables
in the SYK-model (review in \cite{Rosenhaus18}) and analogous systems; see
\cite[Sec. 2.2]{GGJV18}\cite[Sec. 4]{JiaVerbaarschot18}\cite[Sec. 2.1]{BNS18}\cite[Sec. 2]{BINT18}\cite[5-21]{Narovlansky19}.
Notice that from the point of view of string/M-theory,
the SYK-model is the {(near-)CFT} which is the holographic dual to
the full compactification of the M5-brane; see \cite[4.1]{LLL18}\cite{BHT18}.

\medskip

\noindent {\bf The BMN matrix model and fuzzy sphere states.}
The \emph{BFSS matrix model} famously is a (0+1)-dimensional super Yang-Mills
theory which is thought to describe at least a sector of M-theory
(but see \cite[[p. 43-44]{Moore14}) with
un-wrapped M2-branes \cite{NicolaiHelling98}\cite{DasguptaNicolaiPleftka02},
or equivalently strongly coupled type IIA string theory
with stacks of unbound D0-branes \cite{BFSS96},
both on asymptotically Minkowski spacetime backgrounds
(review in \cite{Banks97}\cite{Taylor01}).
The \emph{BMN matrix model} \cite[5]{BMN02}\cite{DSJVR02},
which is the KK-compactification over $S^3$ of $D=4 \mathcal{N}=4$ super Yang-Mills theory \cite{KKP03},
generalizes this to asymptotically gravitational pp-wave backgrounds, which arise as
Penrose limits of both the
$\mathrm{Ads}_{4,7} \times S^{7,4}$ near horizon geometries of black M2-branes and M5-branes (\cite[4.7]{Blau04}), and which deform the
action functional of the BFSS model by a mass and a Chern-Simons term.
These extra terms in the BMN model
lift the notoriously problematic ``flat directions'' of the
BFSS model (\cite{dWLN89}, see \cite{BGR18})
thus leading to a well-defined quantum mechanics,
which describes wrapped M2-branes (giant gravitons)
or equivalently of $\mathrm{D}p\perp \mathrm{D}(p+2)$-brane bound states
for $p = 0$ \cite{Lin04}:

\medskip
The supersymmetric solutions are precisely
\cite[(5.4)]{BMN02}\cite[4.2]{DSJVR02} those
matrix configurations that constitute complex
$\mathrm{su}(2)$-representations
$\mathfrak{su}(2)_{\mathbb{C}} \otimes V \overset{\rho}{\to} V$,
interpreted as systems of fuzzy 2-sphere geometries
(discussed as such below in \cref{su2WeightSystemsAreFuzzyFunnelObservables}).
This means
by Observation \ref{TheObservation}
with \eqref{FundamentalTheoremEquivalence} that:

\medskip
\noindent {\it
The $S^2$-rotation invariant single-trace observables
of the BMN matrix model are equivalently round chord diagrams $D$,
evaluated on supersymmetric ground states
$(\mathfrak{su}(2)_{\mathbb{C}} \otimes V \overset{\rho}{\to}V)$
by pairing them with the corresponding Lie algebra weight system
\eqref{LieWeightSystemOnJacobiDiagram}.}
This generalizes to multi-trace observables, discussed in \cref{HorizontalChordDiagramsEncodesStringTopologyOperations} below.

\medskip
\noindent In view of this it may be worthwhile to briefly recall:

\medskip
\noindent {\bf The general relevance of single trace observables in $\mathrm{AdS}/\mathrm{CFT}$.}
Single trace observables
$
  \mathcal{O}
  \;=\;
  \mathrm{Tr}
  \big(
    Z \cdot Z \cdot \cdots \cdot Z
  \big)
$
on the gauge theory side play a special role in the AdS/CFT correspondence.
They map to
single string excitations on the AdS side, in a way
that identifies the string quite literally with the
\emph{string of characters} $Z \cdot Z \cdot Z \cdots$ in the
expression of the single trace observables.
An early account of the general mechanism is in \cite{Polyakov02},
whose author already outlines the grand picture,
indicating  that space-time is gradually disappearing in the regions of large curvature, and the
natural description is provided by a gauge theory in which the basic objects are the texts formed from the gauge-invariant words, and the theory provides us with the expectation values assigned to the various texts, words and sentences.
The first concrete realization in $D = 4$, $\mathcal{N}=4$ SYM
is due to \cite{BMN02}, whose authors find that the ``string of $Z$s'' becomes the physical string and that each $Z$ carries one unit of $J$ which is one unit of momentum, and that locality along the worldsheet of the string comes from the fact that planar diagrams allow only contractions of neighboring operators. This led the authors to conclude that  the Yang-Mills theory gives a string bit model where each bit is a $Z$ operator.
%
See also \cite{GKP02} for similar arguments.

\medskip
The correspondence between single trace operators
in CFT and string excitations on AdS came to full fruition when it
was realized that the single trace operators of a given length
behave as integrable spin chains
when the dilatation operator is regarded as their Hamiltonian.
This led to the celebrated precision checks of $\mathrm{AdS}_5/\mathrm{CFT}_4$
starting with \cite{BFST03}, reviewed in \cite{Bea10}.

\medskip

\subsection{Lie algebra weight systems give fuzzy funnel observables}
\label{su2WeightSystemsAreFuzzyFunnelObservables}

\hypertarget{FigureC}{}
\begin{minipage}[l]{7.9cm}

{\bf Fuzzy funnels of  $\mathrm{D}p \!\perp \! \mathrm{D}(p+2)$ intersections.}
The configuration of $N_c$ coincident $\mathrm{D}p$-branes
ending on a $\mathrm{D}(p+2)$-brane is famously a noncommutative
``fuzzy funnel''
geometry
\cite{Diaconescu97}\cite{CMT99}\cite[4]{Myers01}\cite[3.4.3]{GaiottoWitten08}
(\hyperlink{su2RepToFuzzyFunnel}{see Figure 3}),
where the three
$\mathfrak{u}(N_{\mathrm{c}})$-valued scalar
fields $\{X^1, X^2, X^3\}$ on the $\mathrm{D}p$-branes
solve Nahm's equation with a pole as
\begin{equation}
  \label{SolutionOfNahmEquation}
  X^a(y)
    \;=\;
  \frac{1}{y}
    \,
  \frac{2}{\sqrt{N^2 - 1 } }
    \,
  \rho^a
  \,,
\end{equation}
for $y$ the transversal distance from the $\mathrm{D}(p+2)$-brane
and  $\{\rho_a\}$
the matrices of the $N_{\mathrm{c}}$-dimensional irreducible representation
of $\mathfrak{su}(2)_{\mathbb{C}}$.
Due to the Casimir relation
$$
  X_a \cdot X^a
    \;=\;
  \tfrac{1}{y^2} \, \mathrm{1}_{{}_{N_{\mathrm{c}}\times N_{\mathrm{c}}}}
$$
this means that at fixed distance $y$ the algebra of functions
generated by the scalar fields is that on the fuzzy 2-sphere
$S^2_{N_{\mathrm{c}}}$ \cite{Madore92} of radius $R = 1/y$.

\medskip
\noindent  {\bf Shape observables on fuzzy 2-spheres.}
The fuzziness of the fuzzy 2-sphere $S^2_{N_{\mathrm{c}}}$ is
reflected in the fact that functions of its radius $R$
are not all constant, due to ordering ambiguity
in the observables of the schematic form ``$R^{2k}$''.
After averaging/integration
over the fuzzy 2-sphere, hence under the trace operation, the remaining ordering ambiguities
are fully reflected by round chord diagrams,
as shown on the right by the first few examples.
Hence these
\emph{radius fluctuation amplitude} observables on the
fuzzy 2-sphere are equivalently
the values of $\mathfrak{su}(2)_{\mathbb{C}}$-weight systems on round
chord diagrams, as in \cref{LieAlgebraWeightSystems},
see Prop. \ref{FundamentalTheremOfWeightSystems}.
In fact, these fuzzy shape observables are instances of
single trace observables as in \cref{LieAlgebraWeightSystemsAreSingleTraceObservables}.

\medskip
\noindent {\bf $1/N_{\mathrm{N}_c}$-Corrections
to the $\mathrm{D}p\perp \mathrm{D}(p+2)$-systems.}
In the large $N_{\mathrm{c}}$ limit the fuzzy 2-sphere
$S^2_{N_{\mathrm{c}}}$ approaches the ordinary 2-sphere,
and its
fuzzy shape observables all converge to unity.
This large $N_{\mathrm{c}}$ limit of the
$\mathrm{D}p\perp \mathrm{D}(p+2)$-intersections
had been studied in \cite{CMT99}.
But discussion of small $N_{\mathrm{c}}$ corrections,
or even of the full matrix model mechanics
of $\mathrm{D}p \!\perp\! \mathrm{D}(p+2)$-intersections
requires handling the multitude of fuzzy shape observables
as shown on the right.
That and how these computations are crucially organized by round
chord diagrams was noticed in
\cite[Sec. 3.2]{RST04}
for review see \cite[A]{MPRS06}\cite[4]{McNamara06}\cite[p. 161-162]{Papageorgakis06}.

\end{minipage}
\hspace{3pt}
\scalebox{.82}{

}
\\
&
\\
&
\begin{minipage}[r]{8cm}
\footnotesize
{\bf Figure 3 -- Fuzzy funnel geometry}
of $\mathrm{D}p \!\!\perp\!\! \mathrm{D}(p+2)$-brane intersections
with fuzzy 2-sphere cross-sections $S^2_N$,
encoded by $\mathfrak{su}(2)_{\mathbb{C}}$-representations
$\mathfrak{su}(2)_{\mathbb{C}} \otimes \mathbf{N}
\overset{\rho}{\to} \mathbf{N}$.
\end{minipage}

\end{tabular}

\medskip

\noindent {\bf Enhancement to $\mathfrak{gl}(2,\mathbb{C})$-weight systems.}
Adjoining a commuting element \eqref{GaugeFieldBoundaryCondition}
to the $\mathfrak{su}(2)_{\mathbb{C}}$-representation $\rho$
\eqref{SolutionOfNahmEquation} equivalently means to extend the
representation to a $\mathfrak{gl}(2,\mathbb{C})$-representation along
the canonical inclusion $\iota$.
\begin{equation}
  \label{ExtensionAndRestrictionBetweensl2Andgl2}
  \xymatrix@C=34pt@R=6pt{
    \mathllap{
      \mathfrak{sl}(2,\mathbb{C})
      \;\simeq
      \;\;
    \mathfrak{su}(2)_{\mathbb{C}}
    }
    \;\; \ar@{^{(}->}[rr]^-{ \iota = (\mathrm{id},0) }
    &&
  \;\;   \mathfrak{su}(2)_{\mathbb{C}}
    \oplus
    \mathbb{C}
    \mathrlap{
      \;\;
      \simeq
      \;
      \mathfrak{gl}(2,\mathbb{C})
    }
    \\
    \underset{
      \mathclap{
      \mbox{
        \tiny
        \color{blue}
        \begin{tabular}{c}
          values of scalar fields
          \\
          at $\mathrm{D}p\!\!\perp\!\!\mathrm{D}(p+2)$
        \end{tabular}
      }
      }
    }{
    \underbrace{
    \overset{
      \mathclap{
        \overset{
          \mbox{
            \tiny
            \color{blue}
            \begin{tabular}{c}
              scalar fields
              \\
              on $\mathrm{D}p$
            \end{tabular}
          }
        }{
        \langle
          X_1, X_2, X_3
        \rangle
        }
      }
    }{
      \overbrace{
        \mathfrak{su}(2)_{\mathbb{C}}
      }
    }
    \mathrm{MetMod}
    }
    }
\;\;    \ar@<+10pt>@{<-}[rr]^-{
      0\;
      \mapsfrom
  \;      A_y
    }_-{ \iota^\ast }
    \ar@<-10pt>[rr]_-{ A_y := 1 }
    &&
    \underset{
      \mbox{
        \tiny
        \color{blue}
        \begin{tabular}{c}
          values of scalars \& gauge field
          \\
          at $\mathrm{D}p \!\!\perp\!\! D(p+2)$
        \end{tabular}
      }
    }{\;\;\;
    \underbrace{
    \big(
      \mathfrak{su}(2)_{\mathbb{C}}
      \oplus
      \overset{
        \mathclap{
          \overset{
            \mbox{
              \tiny
              \color{blue}
              \begin{tabular}{c}
                gauge field
                \\
                on $\mathrm{D}p$
              \end{tabular}
            }
          }{
            \langle A_y \rangle
          }
        }
      }{
        \overbrace{
          \mathbb{C}
        }
      }
    \big)
    \mathrm{MetMod}
    }
    }
  }
\end{equation}
Such extensions always exist, the canonical one
being given by choosing for $A_y$ the identity element.
With this choice, the fundamental (defining) representation
$\mathbf{2}$ of
$\mathfrak{sl}(2,\mathbb{C})$ is extended to the fundamental
(defining) representation of $\mathfrak{gl}(2,\mathbb{C})$.

\medskip

\noindent {\bf Fuzzy funnel states as weight systems.}
In summary, the invariant radius fluctuation observables on fuzzy funnel
$\mathrm{D}p \!\perp\! \mathrm{D}(p+2)$-configurations are
encoded by chord diagrams and given by evaluating chord diagrams in
$\mathfrak{gl}(2,\mathbb{C})$-weight systems.
(While above we discussed only round chord diagrams and single-trace
observables, this generalizes to horizontal chord diagrams and
multi-trace observables, see \cref{MatrixModelObservables} below).

\medskip
Conversely,
$\mathfrak{gl}(2,\mathbb{C})$-weight systems
thus reflect exactly the invariantly observable content of
fuzzy funnel $\mathrm{D}p \!\perp\! \mathrm{D}(p+2)$-brane intersections,
hence their states. But, by Prop. \ref{FundamentalTheremOfWeightSystems},
\emph{all} weight systems may be identified with
 $\mathfrak{gl}(2,\mathbb{C})$-weight systems,
and hence with states of fuzzy funnel
$\mathrm{D}p \!\perp\! \mathrm{D}(p+2)$-brane intersections.

\medskip
Below in \cref{LieAlgebraWeightSystemsEncoded3Algebras},
\cref{MatrixModelObservables} and \cref{LargeNLimitM2M5BraneBoundStates}
we find the analogous statement from the point of
view of the M-theory lift of such intersections to
$\mathrm{M2}/\mathrm{M5}$-brane bound states.

\newpage

\subsection{Lie algebra weight systems encode M2-brane 3-algebras}
\label{LieAlgebraWeightSystemsEncoded3Algebras}

The M-theory lift of fuzzy funnel $\mathrm{D}p \!\perp\! \mathrm{D}(p+2)$-brane
intersections (\cref{su2WeightSystemsAreFuzzyFunnelObservables})
to $\mathrm{M}2\perp\mathrm{M5}$-brane intersections
has famously been argued \cite{BasuHarvey05}\cite{BaggerLambert06}
(review in \cite[2.2]{BLMP12}) to involve generalization of
the $\mathfrak{su}(2)_{\mathbb{C}}$ Lie bracket, a binary operation,
to a trinary ``BLG 3-algebra'' structure.
At the same time M/TypeII-duality
requires that observables on M-brane intersections are equivalent
to those on the corresponding D-brane intersections.

We now observe that when identifying D-brane intersections
with Lie algebra weight systems on chord diagrams (as in \cref{su2WeightSystemsAreFuzzyFunnelObservables}) then the BLG 3-algebras
indeed emerge as the fundamental building blocks of the corresponding observables, namely as their value on single chords.
(The full M2/M5-brane states emerge below in \cref{LargeNLimitM2M5BraneBoundStates}.)

\medskip

\noindent{\bf The value of a weight system on a single chord.}
The value of a Lie algebra weight system \eqref{LieAlgebraWeightSystem}
on a chord diagram is a contraction of many copies of the one tensor
assigned to a single chord, according to \cref{LieAlgebraWeightSystems}:

\vspace{-.5cm}

\begin{equation}
\label{SingleChordEvaluation}
\raisebox{-0cm}{

     node[above, near start] {\tiny \color{black} \raisebox{-4pt}{$\mathfrak{g}$}}
   (-90-110:1.7);

  \begin{scope}

  \draw[fill=black] (-35-5:1.7) circle (.02);
  \draw[fill=black] (-30-5:1.7) circle (.02);
  \draw[fill=black] (-25-5:1.7) circle (.02);

  \draw[fill=black] (-145+5:1.7) circle (.02);
  \draw[fill=black] (-150+5:1.7) circle (.02);
  \draw[fill=black] (-155+5:1.7) circle (.02);

  \draw[darkblue, ultra thick] (-45-5:1.7) arc (-45-5:-135+5:1.7);

  \draw (-1.1,-1.6) node {\small $V$};
  \draw (+1.1,-1.6) node {\small $V$};

  \begin{scope}[shift={(0,-1.7)}]
    \draw[draw=orangeii, fill=orangeii] (0,0) circle (.07);
   \draw[draw=darkblue, thick, fill=white]
     (-.15,-.13)
     rectangle node{\tiny $\rho$}
     (.15,.17);
  \end{scope}

  \end{scope}

  \begin{scope}[rotate=(-110)]

  \draw[fill=black] (-35-5:1.7) circle (.02);
  \draw[fill=black] (-30-5:1.7) circle (.02);
  \draw[fill=black] (-25-5:1.7) circle (.02);

  \draw[fill=black] (-145+5:1.7) circle (.02);
  \draw[fill=black] (-150+5:1.7) circle (.02);
  \draw[fill=black] (-155+5:1.7) circle (.02);

  \draw[darkblue, ultra thick] (-45-5:1.7) arc (-45-5:-135+5:1.7);

  \draw (-1.1,-1.6) node {\small $V$};
  \draw (+1.1,-1.6) node {\small $V$};

  \begin{scope}[shift={(0,-1.7)}]
    \draw[draw=orangeii, fill=orangeii] (0,0) circle (.07);
   \draw[draw=darkblue, thick, fill=white]
     (-.15,-.13)
     rectangle node{\tiny $\rho$}
     (.15,.17);
  \end{scope}

  \end{scope}

  \end{scope}

\end{tikzpicture}
}
&
\scalebox{1}{
  $\rho_a{}^m{}_l\,\rho^a{}_{j i}$
}
\\
&&&
\\
\hline
&&&
\\

}
&
\scalebox{1}{
  $\rho_a{}^m{}_l \, \rho^a{}_{}^j{}_i$
}
\\
&&&
\\
\hline
\end{tabular}
\end{tabular}
}
\end{equation}

\vspace{0cm}

\noindent {\bf M2-brane 3-algebras.}
As shown on the right, this tensor assigned to a single chord
is exactly the tensor considered in \cite[above Prop. 10 \& (22)]{dMFMR09} (for
the first case in the above table) or in \cite[(34)]{dMFMR09} (for the second case).
By \cite[Prop. 10]{dMFMR09} these tensors are the 3-brackets constituting
generalized BLG 3-algebras
\cite{BaggerLambert06}\cite[4]{CherkisSaemann08}\cite[3]{BLMP12}
and the Faulkner construction \cite{Faulkner73}, respectively, as shown above.
In fact, by \cite[Theorem 11]{dMFMR09} this construction constitutes a bijective
equivalence between (generalized real) BLG 3-algebras and metric Lie representations.
Hence all 3-algebras come from weight systems on chord diagrams.

\newpage

\subsection{Round weight systems are 3d gravity observables}
\label{RoundWeightSystemsExhibitHolography}

We have seen in \cref{LieAlgebraWeightSystemsAreSingleTraceObservables}
and \cref{su2WeightSystemsAreFuzzyFunnelObservables} that
weight systems on round chord diagrams have the form
of observables on (fully compactified) worldvolume theories of branes,
where the circle in the chord diagram is what represents the
worldvolume.
Here we observe that generating functions of
weight systems dually encode Chern-Simons amplitudes that
may be thought of as propagating in a bulk spacetime away from these
brane worldvolumes.

\hspace{-.95cm}

    }
  }{
  \mbox{
  $
  \left[
  \scalebox{.8}{
    $
    (w, \mathcal{K})
    \mapsto
    $
  }
  \left\langle
    \mathrm{Tr}_{{}_w}
    \mathrm{P}
    \!
    \exp
    \!
    \left(
      \int_{{}_{\scalebox{.5}{$\mathcal{K}$}}} \!A
    \right)
    \!
  \right\rangle
  \,
  \right]
  $
  }
  }
  \;
  :=
  \;\;
  \underset{
    n \in \mathbb{N}
  }{\sum}
  \,\,
  \hbar^n
  \;\;\;
  \underset{
    \mathclap{
    {\Gamma \in}
    {
      (\mathcal{D}^{{}^{\mathrm{t}}})_n
    }
    }
  }{\sum}
 \; \left(
    \tfrac{1}{
      \left\vert
        \mathrm{Aut}(\Gamma)
      \right\vert
    }
    \,
    [\Gamma]_{{}_{\mathcal{A}}} \otimes [\Gamma]_{{}_{\mathcal{G}}}
  \right)
  \;\;\;
  \in
  \;
  \;
  &
  \;\;
  \overset{
    \mathclap{
    \mbox{
      \tiny
      \color{blue}
      \begin{tabular}{c}
        0-cohomology of
        \\
        graph complex with
        \\
        values in Jacobi diagrams
        \\
        $\phantom{a}$
      \end{tabular}
    }
    }
  }{
    H^0
    \big(
      \mathcal{A}_\bullet
      \otimes
      \,
      \mathcal{G}^\bullet
    \big)
  }
  \\
  \simeq
  \;\;
  &
  \underset{
    \mathclap{
    \mbox{
      \tiny
      \color{blue}
      \begin{tabular}{c}
        $\phantom{a}$
        \\
        Graded linear maps from
        \\
        weight systems on chord diagrams
        \\
        to graph cohomology
      \end{tabular}
    }
    }
  }{
    \mathrm{Hom}
    \big(
      \mathcal{W}^\bullet
      ,
      H^\bullet(\mathcal{G})
    \big)
  }
  \end{aligned}
\end{equation}
  which, dually, is a graded linear map, as shown on the right.
Since \eqref{UniversalWilsonLoopObservable}
is a \emph{universal Vassiliev invariant} \cite[Thm. 1]{AF96}
(following \cite[Thm. 2.3]{Kontsevich93}\cite[4.4.2]{BarNatan95},
reviewed in \cite[8.8]{CDM11}\cite[18]{JacksonMoffat19} )
this map in fact identifies weight systems on round chord diagrams \eqref{SequenceOfMapsOfWeightSystems}
with the space of Vassiliev knot invariants \cite{Vassiliev92}
via identification (see \cite[Prop. 7.6 using Thm. 7.3]{CCRL02})
with the graph cohomology spanned by trivalent graphs
$$
  \mathcal{W}^\bullet
  \overset{\simeq}{\longrightarrow}
  H^\bullet(\mathcal{G})_{{}_{
\!\!\!\!\!\!\!
\raisebox{-2pt}{
\scalebox{.11}{
\begin{tikzpicture}
  \begin{scope}[rotate=(-40)]
    \draw[draw=orangeii, fill=orangeii] (0,0) circle (.08);
    \draw[orangeii, ultra thick] (0:1) to (0,0);
    \draw[orangeii, ultra thick] (120:1) to (0,0);
    \draw[orangeii, ultra thick] (240:1) to (0,0);
  \end{scope}
\end{tikzpicture}
}
}
}
}
 \!\!
  \subset
  H^\bullet(\mathcal{G})
  \,.
$$

\noindent {\bf Dual Chern-Simons theory appears.}
But the  graph complex is what organizes
Feynman diagrams for perturbative Chern-Simons theory in the presence of
a framed Wilson loop \cite{BarNatan91}\cite{BarNatan95CS}.
The construction
\cite[p. 11-12]{Kontsevich92}\cite{AxelrodSinger93}\cite[3]{AF96}
of Feynman amplitudes as configuration space fiber integrals of
wedge products of Chern-Simons propagators,
regarded as differential forms on the configuration space of points,
(see Def. \ref{ConfigurationSpaces}) sends graph cocycles to the higher
observables of Chern-Simons theory with a Wilson loop
(reviewed in \cite[3-4]{Volic13}).
This is given by evaluating the corresponding Lie algebra
weight system \eqref{LieAlgebraWeightSystemOnHoizontalChordDiagramsWithStacksOfCoincidentStrands}
(restricted to round chord diagrams via \eqref{SequenceOfMapsOfWeightSystems})
\vspace{-4mm}
\begin{equation}
  \label{LieAlgebraWeightSystemsOnRoundChordDiagrams}
  \xymatrix{
    \overset{
      \mathclap{
      \mbox{
        \tiny
        \color{blue}
        \begin{tabular}{c}
          Metric Lie representations
          \\
          $\phantom{a}$
        \end{tabular}
      }
      }
    }{
    \mathcal{L}
    \;:=\;
    \mathrm{Span}
    \big(
      \mathrm{MetLieMod}_{/\sim}
    \big)
    }
    \ar[rrr]^-{
      \mathclap{
      \mbox{
        \tiny
        \color{greenii}
        \begin{tabular}{c}
          Lie algebra weight systems
        \end{tabular}
      }
      }
    }
    &&&
    \overset{
      \mbox{
        \tiny
        \color{blue}
        \begin{tabular}{c}
          Chordal
          \\
          Vassiliev invariants
          \\
          $\phantom{a}$
        \end{tabular}
      }
    }{
    \underset{n \in \mathbb{N}}{\prod}
    \big(
      \underset{
        \mathclap{
        \mbox{
          \tiny
          \color{blue}
          \begin{tabular}{c}
          \\
            Round
            \\
            degree $n$
            \\
            weight systems
          \end{tabular}
        }
        }
      }{
        \mathcal{W}^n
      }
      \langle \hbar^n \rangle
    \big)
    }
  }
\end{equation}

\vspace{-2mm}
\noindent on the universal Wilson loop observable \eqref{UniversalWilsonLoopObservable}, thus multiplying the
bare Chern-Simons amplitude with the Lie algebraic contraction
and tracing of the gauge indices.
Here on the right of \eqref{LieAlgebraWeightSystemsOnRoundChordDiagrams}
we recognize the space of generating functions of round weight systems
\eqref{SpaceOfRoundWeightSystems}
as that of Vassiliev knot invariants, via the
\emph{fundamental theorem of Vassiliev invariants}
\cite[Thm. 2.3]{Kontsevich93}\cite[Thm. 1 (3)]{BarNatan95}
(reviewed in \cite[8.8]{CDM11}\cite[18]{JacksonMoffat19}).

\medskip
In summary, we thus find that,
via the Chern-Simons Wilson loop observable,
the generating functions of weight systems on round chord diagrams are equivalently
the perturbative quantum observables of Chern-Simons theory with a Wilson loop knot $\mathcal{K}$:

\vspace{-1.5cm}

\begin{equation}
  \label{UniversalWilsonLoopAsAMap}
  \hspace{-.5cm}
  \raisebox{-50pt}{
  \xymatrix@R=-10pt@C=18pt{
    \overset{
      \mathclap{
      \mbox{
        \tiny
        \color{blue}

         }
       }
       $
     \end{rotate}
   };

\end{tikzpicture}
}
\end{tabular}

\medskip
\medskip
\noindent {\bf Holographic wrapped 5-branes appear.} Now consider instead the case that the
knot $\mathcal{K}$ in \eqref{UniversalWilsonLoopAsAMap} is
a hyperbolic knot \cite{FKP17}, hence such that its complement $S^3 \setminus \mathcal{K}$
carries the structure of a hyperbolic space with \emph{finite} volume,
then necessarily unique, by Mostow rigidity \cite{Mostow68}
(reviewed in \cite{Bourdon18}).
In this case the \emph{volume conjecture} asserts
\cite{Kashaev96}\cite{MurakamiMurakami01} (reviewed in \cite{Murakami10})
that the Wilson loop observables \eqref{UniversalWilsonLoopAsAMap}
for the $N$-dimensional irreducible representation
of $\mathfrak{su}(2)_{\mathbb{C}}$
tends in the large $N$ limit, $N \to \infty$, to that finite volume.
Moreover, the \emph{3d-3d correspondence} (see the review \cite{Dimofte14}) asserts that
the Wilson loop observables \eqref{UniversalWilsonLoopAsAMap}
are dually observables on the worldvolume theory of M5-branes
wrapped on $\Sigma^3 := S^3 \setminus \mathcal{K}$. Furthermore,
with this identification the statement of the volume conjecture
is part of the statement of holographic AdS/CFT duality for such configurations \cite[3.2]{GangKimLee14b} (see also \cite{BGL16}).
We thus have the following web of relations connecting to \hyperlink{HypothesisH}{Hypothesis H}:

\vspace{-5mm}
$$
\hspace{2cm}
  \xymatrix@C=20pt@R=8pt{
  &&
    \underset{
    \mbox{
      \tiny

      }
    }{
      \mathrm{SCFT}_3[\Sigma^3]
    }
    \\
    \\
    &&&
    \fbox{\small \hyperlink{HypothesisH}{Hypothesis H}}
    \ar@{~>}[uu]_-{ \mbox{ \tiny \eqref{UniversalWilsonLoopAsAMap} } }
  }
$$

\newpage

\subsection{Round weight systems contain supersymetric indices}
\label{RoundWeightSystemsEncodeMonopoleModuli}

We observe here that round weight systems encode the
Witten indices of $D= 3, \mathcal{N}=4$ super Yang-Mills theories,
computing the $\widehat A$-genus of
Coulomb branches of intersecing branes given by
Atiyah-Hitchin moduli space of Yang-Mills monopoles.

\medskip

\noindent
{\bf Coulomb branches of $D=3, \mathcal{N}=3$ SYM and monopole moduli.}
The worldvolume gauge theory of $\mathrm{D}p\perp \mathrm{D}(p+2)$-brane
intersections is thought to be $D=3, \mathcal{N}=4$ super Yang-Mills theory,
at least for $p = 3$ \cite{HananyWitten97}.
The moduli spaces of vacua of
$D=3$ $\mathcal{N}=4$ super Yang-Mills theory,
both the Coulomb branches and the Higgs branches,
are hyperk{\"a}hler manifolds $\mathcal{M}^{4n}$
\cite{SeibergWitten96} (see, e.g., \cite{dBHOO97}),
which are either
\begin{enumerate}[{\bf (1)}]
\vspace{-2mm}
\item  asymptotically flat (ALE-spaces) and dual to branes transversal to ADE-singularities;
\vspace{-2mm}
\item or compact and dual to branes transversal to a K3 surfaces or to a 4-torus $\mathbb{T}^4$.
\end{enumerate}
\vspace{-1mm}
Specifically, the (classical) Coulomb branches of these theories
are the Atiyah-Hitchin moduli spaces of Yang-Mills monopoles \cite{AtiyahHitchin88}
on the transversal space \cite{DKMTV97}\cite{Tong99}\cite{BullimoreDimofteGaiotto15},
which are often identified with Hilbert schemes of points
\cite{dBHOOY96}\cite{dBHOO97}\cite[(4.4)]{CHZ14}.

\medskip
In particular, if the transversal space is a
K3 surface $\Sigma^4_{\mathrm{K3}}$, then the corresponding moduli space
is the Hilbert scheme of points
$\mathcal{M}^{4n} = (\Sigma^4_{\mathrm{K3}})^{[n]}$
\cite{VafaWitten94}\cite{Vafa96}, which is an
example of a \emph{compact} hyperk{\"a}hler manifold. In fact, all
known examples of compact hyperk{\"a}hler manifolds are Hilbert schemes
either of K3 surfaces or of the 4-torus \cite{Beauville83}, with two exceptional variants found in \cite{OGrady98}\cite{OGrady00}
(reviewed in \cite[5.3]{Sawon04}).
These compact Coulomb branches come from $D=3, \mathcal{N}=4$ SYM theories that are obtained by KK-compactification of little string theories
\cite{Intriligator99}.

\medskip

\noindent {\bf Rozansky-Witten theory.}
The topological C-twist of $D = 3$ $\mathcal{N}=4$ SYM is
Rozansky-Witten theory \cite{RozanskyWitten97},
which, after gauge fixing and
suitable field identifications, turns out to have same Feynman rules
as 3d Chern-Simons theory. This is in the sense that the only relevant propagator
is the Chern-Simons propagator, and the only relevant Feynman diagrams
are trivalent, the only difference being that the Lie algebra weights
of Chern-Simons theory are replaced by \emph{Rozansky-Witten weights}
\cite[3]{RozanskyWitten97}\cite{RobertsWillerton06}
which depend (only) on the hyperk{\"a}hler moduli space
$\mathcal{M}^{4n}$, and in fact only on
its underlying holomorphic symplectic manifold \cite{Kapranov99}.

\medskip
Hence the assignment of Rozansky-Witten weights is a linear map
from the linear span of the set of isomorphism classes of such
gauge theories
\begin{equation}
  \label{RozansyWittenWeightSystemAssignment}
  \xymatrix{
    \overset{
      \mathclap{
      \mbox{
        \tiny
        \color{blue}
        \begin{tabular}{c}
          $D= 3$, $\mathcal{N}=4$
          gauge theories
          \\
          $\phantom{a}$
        \end{tabular}
      }
      }
    }{
      \mathcal{G}
      :=
      \mathrm{Span}
      \big(
        \mathrm{SYM}^{D = 3, \mathcal{N}=4}_{/\sim}
      \big)
    }
    \ar[rrrr]^-{
      \mathclap{
      \mbox{
        \tiny
        \color{greenii}
        \begin{tabular}{c}
          $\phantom{a}$
          \\
          Rozansky-Witten
          weight systems
        \end{tabular}
      }
      }
    }
    &&&&
    \overset{
      \mathclap{
      \mbox{
        \tiny
        \color{blue}
        \begin{tabular}{c}
          Chordal
          \\
          Vassiliev invariants
          \\
          $\phantom{a}$
        \end{tabular}
      }
      }
    }{
    \underset{n \in \mathbb{N}}{\prod}
    \big(
      \underset{
        \mathclap{
        \mbox{
          \tiny
          \color{blue}
          \begin{tabular}{c}
            \\
            Round
            \\
            degree $n$
            \\
            weight systems
          \end{tabular}
        }
        }
      }{
        \mathcal{W}^n
      }
      \langle \hbar^n\rangle
    \big)
    }
  }
\end{equation}
directly analogous to the assignment of
Lie algebra weight systems \eqref{LieAlgebraWeightSystemsOnRoundChordDiagrams}.
Furthermore, the Wilson loop observables of
Rozansky-Witten C-twisted $D = 3, \mathcal{N}=4$ super Yang-Mills theory
are obtained by evaluating these weights on the universal Vassiliev
Wilson loop observable, in direct analogy to the Wilson loop
observables \eqref{UniversalWilsonLoopAsAMap} of Chern-Simons theory:

\begin{equation}
  \label{EvaluatingRzanskyWittenWeightsOnUniversalWilsonLoopObservable}
  \hspace{.1cm}
  \xymatrix@R=-10pt@C=18pt{
    \overset{
      \mathclap{
      \mbox{
        \tiny
        \color{blue}

    }
    }
  }{
  \left\langle
    \mathrm{Tr}
    \mathrm{P}
    \!
    \exp
    \!
    \left(
      \int_{{}_{\scalebox{.5}{$\mathcal{K}$}}}
      \!
      (\Gamma + \Omega)
    \right)
    \!
  \right\rangle_{T}
  }
  \,
  \!\!\!\Big).
  \;\;\;\;\;\;
  }
\end{equation}

\newpage

\noindent {\bf The index of $D=3$, $\mathcal{N}=4$ SYM.}
In the case that the knot $\mathcal{K} = \bigcirc$ in \eqref{EvaluatingRzanskyWittenWeightsOnUniversalWilsonLoopObservable}
is the unknot, the Rozansky-Witten Wilson loop observable
\eqref{EvaluatingRzanskyWittenWeightsOnUniversalWilsonLoopObservable}
computes the square root of the
$\widehat A$-genus of the moduli space $\mathcal{M}^4n_T$ of the
given C-twisted $D=3, \mathcal{N}=4$ SYM theory $T$
(\cite[Lem. 8.6]{RobertsWillerton06},
using the wheeling theorem \cite{BNTT03} and Hitchin-Sawon theorem
\cite{HitchinSawon99}):
$$
  \left\langle
    \mathrm{Tr}
    \left(
    \mathrm{P}\exp\big(
      \int_{\bigcirc}
      (\Gamma + \Omega)
    \big)
    \right)
  \right\rangle_T
  \;=\;
  \sqrt{
    \widehat{A}(\mathcal{M}^{4n}_T)
  }
  \,.
$$
This genus is part of the expression of the
Witten index of the theory $T$ \cite{BFK18}.

\medskip

\begin{observation}[\bf Dualities]
\label{Dualities}
From the point of view of
\hyperlink{HypothesisH}{\it Hypothesis H},
the genuine observables on the brane configurations are
the abstract weight systems in
$\underset{n\in \mathbb{N}}{\prod}\big(
  \mathcal{W}^n \langle \hbar^n \rangle
\big)$, by Prop. \ref{HigherObservablesEquivalentToCohomologyOfLoopedConfigurationSpace}.
One may then ask which physics is compatible with
these observables, much like one asks which target space geometry
emerges from a given worldsheet CFT.
We saw  in \cref{RoundWeightSystemsExhibitHolography}
and \cref{RoundWeightSystemsEncodeMonopoleModuli}
that a range of quantum field theories has these weight systems as
their observables, including Chern-Simons theories and
Rozansky-Witten C-twisted 3d super Yang-Mills theories. These are reflected
by canonical maps \eqref{LieAlgebraWeightSystemsOnRoundChordDiagrams}
and  \eqref{RozansyWittenWeightSystemAssignment}
from the spaces of these theories into
the space of observables:
$$
  \xymatrix{
    \overset{
      \mathclap{
      \mbox{
        \tiny
        \color{blue}

      }
    }{
      \underset{n \in \mathbb{N}}{\prod}
      \big(
        \mathcal{W}^n\langle \hbar^n \rangle
      \big)
    }
  }
\end{equation}
corresponding to
different gauge theories which have \emph{indistinguishable observables},
hence which are physical duals. We thus see that
\hyperlink{HypothesisH}{\it Hypothesis H},
not only sees the genuine observables on the brane configurations as
the abstract weight systems but also encodes duality in the corresponding
field theories in a compatible manner.
\end{observation}


\subsection{Round weight systems encode 't Hooft string amplitudes}
\label{RoundWeightSystemsEncodeOpenStringAmplitudes}

We have seen in \eqref{RoundChordDiagramsModulo4TIsJavcobidDiagramsModuloSTU}  that
round chord diagrams modulo 4T relations are equivalently
Jacobi diagrams \eqref{JacobiDiagrams} modulo STU-relations,
and that weight systems \eqref{SequenceOfMapsOfWeightSystems} exhibit
the latter as the Feynman diagrams of Chern-Simons theory (\cref{RoundWeightSystemsExhibitHolography})
and of Rozansky-Witten theory
(\cref{RoundWeightSystemsEncodeMonopoleModuli}).
We now observe that Lie algebra weight systems (\cref{LieAlgebraWeightSystems}) also know about the
't Hooft double line reformulation \cite{tHooft74}
of these Feynman diagrams as well as about the resulting
identification of Chern-Simons amplitudes
with topological open string amplitudes
\cite{Witten92} (reviewed in \cite{Marino04}).

\medskip

\noindent {\bf 't Hooft double line notation.}
One observes that Lie algebra weight systems \eqref{LieAlgebraWeightSystemsOnRoundChordDiagrams}
for $\mathfrak{g}$ a semisimple Lie algebra and
$V$ its fundamental representation,
evaluate a single chord \eqref{SingleChordEvaluation}
to a linear combination of a \emph{double line} of strands,
in terms of the Penrose notation from \cref{LieAlgebraWeightSystems},
as shown in the following table:
\begin{equation}
  \label{MetricContractionOfFundamentalActions}
  \raisebox{0pt}{


}
\end{equation}

In the context of gauge theory this was famously observed in \cite{tHooft74} for
$\mathfrak{g} = \mathfrak{u}(N)$
(see also \cite[(34)]{BarNatan95}), in which case only the first summands in \eqref{MetricContractionOfFundamentalActions} and
\eqref{HooftDoubleLineOfTrivalentVertex}
appear; the generalization to arbitrary semisimple Lie algebras
was observed in \cite[Figure 14]{Cvitanovic76} and partially again in
\cite{Cicuta82}. Later \cite[6.3]{BarNatan95} reconsidered this,
apparently independently, in the general context of Lie algebra weight systems,
which is reviewed in \cite[6.2.6]{CDM11}. The case $\mathfrak{g} = \mathrm{sl}(N)$ is also discussed in \cite[6.1.8]{CDM11}\cite[14.4]{JacksonMoffat19}.

\medskip

We focus on the case $\mathfrak{g} = \mathfrak{so}(N)$.

\medskip

\noindent {\bf Emerging string worldsheets.}
The 't Hooft double line construction \eqref{HooftDoubleLineOfTrivalentVertex}
exhibits each Jacobi/Feynman diagram as a linear combination of
ribbon graphs (``fatgraphs''), underlying which are
isomorphism classes of surfaces
with marked boundaries (see \cite[Def. 1.12]{BarNatan95}).
This defines a linear function

\vspace{-.2cm}

\begin{equation}
  \label{tHooftConstructionOnSetOfJacobiDiagrams}
  \xymatrix{
    \overset{
      \mathclap{
      \mbox{
        \tiny
        \color{blue}

}
\end{equation}

\vspace{.3cm}

\noindent This is the generalization to unoriented open string worldsheets of
the 't Hooft construction for Chern-Simons theory
as an open topological string theory \cite[Figures 1 \& 2]{Witten92}.\footnote{Beware that
(only) for \emph{closed} string gravity duals
of Chern-Simons theory \cite{GopakumarVafa99}
are these open worldsheets turned into closed string
worldsheets by gluing disks onto all their free boundaries,
see \cite[1.1]{GaiottoRastelli03}\cite[III, p. 14]{Marino04}.}

\medskip

\noindent {\bf Chern-Simons observables as topological string amplitudes. }
We now observe that the open topological string worldsheets
as in \eqref{AtHooftSurfaceExample},
given by the 't Hooft construction \eqref{tHooftConstructionOnSetOfJacobiDiagrams},
are reflected by the higher observables (Prop. \ref{HigherObservablesOnIntersectingBranesAreWeightSystems})
in that the image of \emph{stringy weight systems},
assigning weight amplitudes to open string worldsheets,
embed into the space of weight systems on chord diagrams.

\medskip

To fully account for the quark/Wilson loop,
consider the function
\begin{equation}
  \label{ReOrderingVerticesOfJacobiDiagram}
  \xymatrix{
    \mathrm{Span}
    \big(
      \mathcal{D}^{{}^{\mathrm{t}}}
    \big)
    \ar[rr]^-{ \mathrm{perm} }
    &&
    \mathrm{Span}
    \big(
      \mathcal{D}^{{}^{\mathrm{t}}}
    \big)
  }
\end{equation}
which sends a Jacobi diagram with $n$ external vertices
to the
linear combination of the $n!$ ways of permuting them
along the Wilson loop circle.
Then the composition of $\mathrm{perm}$ \eqref{ReOrderingVerticesOfJacobiDiagram}
with the 't Hooft double line construction
$\mathrm{tH}_{\mathfrak{so}}$ \eqref{tHooftConstructionOnSetOfJacobiDiagrams}
respects the STU-relations \eqref{STURelations}
(\cite[Thm. 10 with Thm. 8]{BarNatan95})
and thus descends to linear map on
$\mathcal{A}_\bullet := \mathcal{A}^{{}^{\mathrm{t}}}$ \eqref{RoundChordDiagramsModulo4TIsJavcobidDiagramsModuloSTU}

\begin{equation}
  \label{tHooftLimitOnDiagrams}
  \xymatrix@R=11pt{
    \overset{
      \mathclap{
      \mbox{
        \hspace{-30pt}
        \tiny
        \color{blue}

        }
      }
    }{
      \underset{n \in \mathbb{N}}{\prod}
      \mathcal{W}^n\langle \hbar^n \rangle
    }
  }
\end{equation}
as encoding open string scattering amplitudes.

One finds (\cite[Thm. 11]{BarNatan95}) that
stringy weight systems span those Lie algebra weight
systems \eqref{LieAlgebraWeightSystemsOnRoundChordDiagrams} that come from metric Lie representations of $\mathfrak{gl}(N)$
and $\mathfrak{so}(N)$, and contains those coming from $\mathfrak{sp}(N)$,
with $N$ ranging over the natural numbers :

\vspace{-.5cm}

$$
  \xymatrix@R=11pt{
  \overset{
    \mathclap{
    \mbox{
      \tiny
      \color{blue}
      \begin{tabular}{c}
        Stringy weight systems
        \\
        $\phantom{a}$
      \end{tabular}
    }
    }
  }{
    \mathrm{Im}
    \big(
      \mathcal{S}
    \big)
  }
  \;\; \ar@{^{(}->}[dr]_-{ \mbox{\tiny \eqref{StringyWeightSystems}} }
  \ar@{}[r]|-{\simeq}
  &
  \overset{
    \mathclap{
    \mbox{
      \tiny
      \color{blue}
      \begin{tabular}{c}
        Classical
        \\
        Lie algebra weight systems
        \\
        $\phantom{a}$
      \end{tabular}
    }
    }
  }{
    \mathrm{Im}
    \big(
      \mathcal{L}_{\mathfrak{gl}}
      \oplus
      \mathcal{L}_{\mathfrak{so}}
    \big)
  }
  \ar@{^{(}->}[d]^-{
    \mbox{
      \tiny
      \eqref{LieAlgebraWeightSystemsOnRoundChordDiagrams}
    }
  }
  \ar@{}[r]|-{\supset}
  &
  \mathrm{Im}
  \big(
    \mathcal{L}_{\mathfrak{sp}}
  \big).
  \ar@{^{(}->}[dl]^-{
    \mbox{
      \tiny
      \eqref{LieAlgebraWeightSystemsOnRoundChordDiagrams}
    }
  }
  \\
  &
  \underset{n \in \mathbb{N}}{\prod}
  \big(
    \underset{
      \mathclap{
      \mbox{
        \tiny
        \color{blue}
        \begin{tabular}{c}
          $\phantom{a}$
          \\
          $\phantom{a}$
          \\
          All weight
          systems
        \end{tabular}
      }
      }
    }{
      \mathcal{W}^n
    }
    \langle \hbar^n \rangle
  \big)
  }
$$
It follows from this weight-theoretic result that the perturbative Wilson loop observables
of Chern-Simons theory \eqref{UniversalWilsonLoopAsAMap},
for $\mathfrak{g} =\mathfrak{\mathfrak{so}}(N)$ and with the Wilson loop in the
fundamental representation, are equivalent,
under the 't Hooft construction \eqref{tHooftLimitOnDiagrams},
to observables of a unoriented open topological string theory,
as in the identification of Chern-Simons theory as a
topological string theory in \cite{Witten92} (reviewed in \cite{Marino04}).

\medskip

\subsection{Horizontal weight systems observe string topology operations}
\label{HorizontalChordDiagramsEncodesStringTopologyOperations}

\begin{equation}
\label{SullivanChordDiagrams}
\mbox{
\hspace{-.71cm}

}
}
\!\!\!\!
\right\}
}
}
}
\end{tabular}
}
\end{equation}

\vspace{.1cm}

\noindent For example, every round chord diagram \eqref{RoundChordDiagrams}
is a Sullivan chord diagram, but a Jacobi diagram \eqref{JacobiDiagrams}
is only a Sullivan chord diagram if its internal edges form a tree
(so the Jacobi diagram
shown in \eqref{tHooftLimitOnDiagrams} is \emph{not} a
Sullivan chord diagram).

\vspace{2mm}
\noindent {\bf String topology TQFT.} The tree-condition
ensures \cite[2]{CohenGodin04}
that for $D \in \mathcal{D}^{{}^{\mathrm{s}}}$
a Sullivan chord diagram with $n_{\mathrm{in}}, n_{\mathrm{out}}$
in/out-going boundary components,
the pull-push operation in homology
through the mapping space out of $\mathrm{tH}_{\mathfrak{u}}(D)$ exists
\cite[Theorem 4]{CohenGodin04},
for $X$ an oriented target manifold, with free loop space
$\mathcal{L}X := \mathrm{Maps}(S^1, X)$:
\begin{equation}
  \label{PullPushOperations}
  \hspace{-2cm}
  \raisebox{20pt}{
  \xymatrix@R=-12pt@C=3.5em{
    &
    \overset{
      \mathclap{
      \mbox{
        \tiny
        \color{blue}

    }
  }
  }
  }
  }
\end{equation}
There is a precise sense \cite[Ex. 7.1]{Schreiber14}
in which this pull-push operation \eqref{PullPushOperations}
is the \emph{cohomological path-integral}
of a topological closed string theory with target space $X$
and worldsheet geometry $\mathrm{tH}_{\mathfrak{u}}(D)$.
Indeed, as the Sullivan chord diagram $D$ and hence
the worldsheet topology $\mathrm{tH}_{\mathfrak{u}}(D)$ varies,
the operations \eqref{PullPushOperations} organize into the
propagators of a 2d topological field theory
(see \cite[3]{CohenVoronov05}).

\medskip

\noindent {\bf String topology operations from horizontal chord diagrams.}
We observe that Sullivan chord diagrams \eqref{SullivanChordDiagrams}
without any internal vertices,
and hence the corresponding string topology operations \eqref{PullPushOperations},
arise precisely as the closures of horizontal chord diagrams \eqref{HorizontalChordDiagrams}
with respect to general monodromy permutations $\sigma$
as in
\eqref{LieAlgebraWeightSystemOnHoizontalChordDiagramsWithStacksOfCoincidentStrands}.
If $\sigma = (N_{\mathrm{f}}12\cdots)$
has only a single cycle (single orbit)
the result is a round chord diagram as in \eqref{AHorizontalChordDiagramTraced}:
$$
  \xymatrix{
  \overset{
    \mathclap{
    \mbox{
      \tiny
      \color{blue}

}
}
\!\!\!\!\!\!\!\!
\right]
}
\mathrlap{+ \cdots}
\\
&
}
\end{equation}
On the bottom left of \eqref{HorizontalChordDiagramClosedToSullivanChordDiagram} we are showing the
form of the associated Lie algebra weights \eqref{LieAlgebraWeightSystemOnHoizontalChordDiagramsWithStacksOfCoincidentStrands},
which now are BMN multi-trace observables, see \cref{MatrixModelObservables}.

\newpage

\subsection{Horizontal chord diagrams are BMN model multi-trace observables}
\label{MatrixModelObservables}

While the  single-trace gauge theory observables from
\cref{LieAlgebraWeightSystemsAreSingleTraceObservables}
correspond to single-string states
under the AdS/CFT correspondence, general multi-string
states correspond \cite{ChalmersSchalm99} to \emph{multi-trace observables} \cite{BDHM98},
hence to polynomials in single-trace observables \cite[p. 1]{Witten01}.

\medskip

\noindent {\bf Invariant multi-trace observables in the BMN matrix model.}
Thus, in generalization of the discussion in \cref{LieAlgebraWeightSystemsAreSingleTraceObservables},
a supersymmetric and $S^2$-rotation invariant multi-trace observable
in the BMN matrix model sends a supersymmetric state given by a
complex Lie algebra representation
$\mathfrak{su}(2)_{\mathbb{C}} \otimes V \overset{\rho}{\to} V$ to
expressions like the following:
\begin{equation}
\label{MultiTraceObservableInBMNMatrixModel}
\hspace{-1cm}
\raisebox{-60pt}{

}
\end{equation}

\noindent {\bf Horizontal chord diagrams are BMN matrix model multi-trace observables.}
The multi-trace expressions \eqref{MultiTraceObservableInBMNMatrixModel}
are manifestly the values \eqref{LieAlgebraWeightSystemOnHoizontalChordDiagramsWithStacksOfCoincidentStrands}
of the Lie algebra weight system
$w_{{}_V}$ corresponding to the given BMN model state on
horizontal chord diagrams encoding the multi-trace observable,
as in \eqref{HorizontalChordDiagramClosedToSullivanChordDiagram}.
But the {fundamental theorem of horizontal weight systems},
Prop. \ref{FundamentalTheremOfWeightSystems}, says
that \emph{every} horizontal weight system arises this way
\eqref{FundamentalTheoremEquivalence}\eqref{ExtensionAndRestrictionBetweensl2Andgl2}, hence that:

\medskip
\noindent {\it Weight systems
on horizontal chord diagrams are equivalently the
supersymmetric BMN model states as seen by
the colleciton of $S^2$-invariant multi-trace observables,
which in turn are encoded by chord diagrams.}

\medskip

\noindent {\bf In summary}, this means we have found the following
identifications (see \hyperlink{Figure2}{Figure 2}):

$$
  \xymatrix@R=20pt@C=4em{
        \left\{ \!\!\!\!\!
    \mbox{
      \begin{tabular}{c}
        Higher observables on
        \\
        $\mathrm{D}p\perp \mathrm{D}(p+2)$ intersections
        \\
        by \hyperlink{HypothesisH}{\it Hypothesis H}
      \end{tabular}
    }
   \!\!\!\!\! \right\}
    \ar@{<->}[dd]_-{ \simeq }
    &
    \left\{\!\!\!\!\!
    \mbox{
      \begin{tabular}{c}
        Higher co-observables on
        \\
        $\mathrm{D}p\perp \mathrm{D}(p+2)$ intersections
        \\
        by \hyperlink{HypothesisH}{\it Hypothesis H}
      \end{tabular}
    }
  \!\!\!\!\!  \right\}
    \ar@{<->}[dd]_-{ \simeq }
    \\
    \ar@{}[r]|-{
      \scalebox{.6}{
        \begin{tabular}{c}
        Prop. \ref{HigherObservablesEquivalentToCohomologyOfLoopedConfigurationSpace},
        \\
        Prop. \ref{HigherObservablesOnIntersectingBranesAreWeightSystems}
        \end{tabular}
      }
    }
    &
    \\
    \left\{ \!\!\!\!\!
    \mbox{
      \begin{tabular}{c}
        Horizontal weight systems
      \end{tabular}
    }
  \!\!\!\!\!  \right\}
    \ar@{<->}[dd]_-{
      \simeq
    }
    &
    \left\{\!\!\!\!\!
    \mbox{
      \begin{tabular}{c}
        Horizontal chord diagrams
      \end{tabular}
    }
\!\!\!\!\!    \right\}
    \ar@{<->}[dd]_-{
      \simeq
    }
    \\
    \ar@{}[r]|-{
      \scalebox{.6}{
        \begin{tabular}{c}
          Prop. \ref{FundamentalTheremOfWeightSystems}
        \end{tabular}
      }
    }
    &
    \\
    \left\{\!\!\!\!\!
    \mbox{
      \begin{tabular}{c}
        Supersymmetric states
        \\
        of the BMN matrix model
        \\
        and $\mathrm{D}p\perp \mathrm{D}(p+2)$ fuzzy funnels
      \end{tabular}
    }
   \!\!\!\!\! \right\}
    &
    \left\{\!\!\!\!\!
    \mbox{
      \begin{tabular}{c}
        Invariant multi-trace observables
        \\
        of the BMN matrix model
        \\
        and $\mathrm{D}p\perp \mathrm{D}(p+2)$ fuzzy funnels
      \end{tabular}
    }
   \!\!\!\!\! \right\}
  }
$$

\newpage

\subsection{Horizontal weight systems contain M2/M5-brane states}
\label{LargeNLimitM2M5BraneBoundStates}

We discuss here that
transversal microscopic M2/M5 brane bound states and their large N
macroscopic limits are identified in weight systems on chord diagrams.
While we had observed
that the supersymmetric states of the BMN matrix model
(\cref{LieAlgebraWeightSystemsAreSingleTraceObservables}),
given by fuzzy 2-sphere geometries (\cref{su2WeightSystemsAreFuzzyFunnelObservables}),
are seen by multi-trace observables
as their image in weight systems on Sullivan chord diagrams
(\cref{MatrixModelObservables})
hence on horizontal chord diagrams closed by some winding monodromy
permutation (\cref{OnStacksOfCoincidentStrands}),
the BMN matrix model has, of course, a tower of
excited states beyond the fully supersymmetric ground states.
One might therefore suspect that the theory of weight systems, and hence cohomotopy,
 reflects
only a negligible corner of the M-theory captured by
the BMN matrix model. Remarkably, the \emph{opposite} is the case:

\medskip

\noindent {\bf M2/M5 brane bound states in the BMN matrix model.}
It was suggested in \cite{MSJVR03} and checked in \cite{AIST17a}
(surveyed in \cite{AIST17b}) that
finite numbers of stacks of coincident light-cone transversal
M2- and M5-brane states are given by
isomorphism classes of some kind of limit sequences in the
set of supersymmetric states of the BMN matrix model.
Concretely, let
\begin{equation}
  \label{DecompositionofARepresentation}
  V \;:=\;
  \underset{
    \underset{
      \mathclap{
      \mbox{
        \tiny
        \begin{tabular}{c}
          \color{blue}
          stacks of
          coincident branes
          \\
          ({\color{olive}}direct sum over irreps)
        \end{tabular}
      }
      }
    }{
    \underbrace{
      \scalebox{.8}
      {
        $i$
      }
    }
    }
  }{\medoplus}
  \!
  \big(
    \overset{
      \mathclap{
      \mbox{
        \tiny
        \begin{tabular}{c}
          \color{blue}
          M2/M5-brane charge in $i$th stack
          \\
          ({\color{olive}}$i$th irrep with multiplicity)
        \end{tabular}
      }
      }
    }{
    \overbrace{
      N^{{}^{(\mathrm{M2})}}_i
      \cdot
      \mathbf{N}^{{}^{(\mathrm{M5})}}_i
    }
    }
  \big)
  \;\in\;
  \mathfrak{su}(2)_{\mathbb{C}} \mathrm{MetMod}_{/\sim}
\end{equation}
denote the isomorphism class of the $\mathfrak{su}(2)_{\mathbb{C}}$ representation with
$N^{{}^{(\mathrm{M2})}}_i \in \mathbb{N}$ direct summands of the
$N^{{}^{(\mathrm{M5})}}_i$-dimensional irreducible representation,
for $i$ in some finite index set,
hence with total dimension
\begin{equation}
  N
    \;:=\;
  \underset{i}{\sum}
  N^{{}^{(\mathrm{M2})}}_i
  {N}^{{}^{(\mathrm{M5})}}_i
  \;\;
  \in
  \mathbb{N}.
\end{equation}
Then a \emph{sequence} of such states/representations
corresponds to
a finite number of stacks of macroscopic M2 branes
\emph{or} macroscopic M5-branes depending on how
the sequence behaves in the large $N$ limit:
\begin{equation}
\label{LargeNLimitForM2M5Branes}
\hspace{-5mm}
\raisebox{-4pt}{
\mbox{
\begin{tabular}{ll|ll}
  \multicolumn{3}{c}{
    {\bf Stacks of macroscopic}
  }
  \\
 \hline
 \multicolumn{1}{c}{$\phantom{A}$} &
  \multicolumn{1}{c|}{
    {\it M2-branes}
  }
  &
  \multicolumn{1}{c}{
    {\it M5-branes}
  }
  \\
 If for all $i$: &
  $N^{{}^{(\mathrm{M5})}}_i \to \infty$
  &
  $N^{{}^{(\mathrm{M2})}}_i \to \infty$
  &
  (the relevant large N limit)
  \\
  with fixed &
  $N^{{}^{(\mathrm{M2})}}_i$
  &
   $N^{{}^{(\mathrm{M5})}}_i$
  &
  (the number of coincident branes in the $i$th stack)
  \\
  and fixed & $N^{{}^{(\mathrm{M2})}_i}\!\!\!/N$
  &
  $N^{{}^{(\mathrm{M5})}_i}\!\!\!/N$
  &
  (the charge/light-cone momentum carried by the $i$th stack)
  \\
  \hline
\end{tabular}
}
}
\end{equation}
cf. (\cite[Figure 2]{MSJVR03}\cite[(1.2)-(1.4)]{AIST17b}).

\medskip

\noindent {\bf Open problem.}
In order to make precise sense of the suggestion \eqref{LargeNLimitForM2M5Branes}
one needs to say where these limits are to be taken.
They cannot actually be taken in the set \eqref{DecompositionofARepresentation}
as this set is discrete: \emph{no} sequence
with $N \to \infty$ has a limit in this set.
Instead, the proper operational definition of
limits of states is that seen by limits of the
values of observables evaluated on these states.
For a holographic gauge theory like the BMN matrix model
the relevant observables $\mathcal{O}_D$ are multi-trace observables
\cref{MatrixModelObservables}; their structure is encoded
by Sullivan chord diagrams
$D \in \underset{n \in \mathbb{N}}{\oplus}\mathcal{A}^n$
and their value on a state
$
\Psi_{\scaleto{
  \left(
  \big\{
    N^{{}^{(\mathrm{M2})}}_i, \,
    N^{{}^{(\mathrm{M5})}}_i
  \big\}
  \right)
}{10pt}}
$
given by \eqref{DecompositionofARepresentation} is the
value of the corresponding Lie weight system (\cref{LieAlgebraWeightSystems})
$\underset{i \in \mathbb{N}}{\sum}
N^{{}^{(\mathrm{M2})}}_i \cdot w_{\mathbf{N}^{{}^{(\mathrm{M5})}}_i} \in \underset{n \in \mathbb{N}}{\prod} \mathcal{W}^n$
on $D$:

\vspace{-.8cm}

\begin{equation}
  \label{StatesInObservables}
  \hspace{1cm}
  \overset{
    \mathclap{
    \mbox{
      \tiny
      \color{blue}
      \begin{tabular}{c}
        Invariant
        \\
        multi-trace
        \\
        observable
        \\
        $\phantom{a}$
      \end{tabular}
    }
    }
  }{
    \mathcal{O}_D
  }
  \;:\;
  \overset{
    \mathclap{
    \mbox{
      \tiny
      \color{blue}
      \begin{tabular}{c}
        Supersymmetric
        \\
        state of
        \\
        BMN matrix model
        \\
        $\phantom{a}$
      \end{tabular}
    }
    }
  }{
\Psi_{\scaleto{
  \left(
  \big\{
    N^{{}^{(\mathrm{M2})}}_i, \,
    N^{{}^{(\mathrm{M5})}}_i
  \big\}
  \right)
}{10pt}}
  }
  \;\;\;\longmapsto\;\;\;
  \underset{i \in \mathbb{N}}{\sum}
  N^{{}^{(\mathrm{M2})}}_i
  \,
  \overset{
    \mathclap{
    \mbox{
      \tiny
      \color{blue}
      \begin{tabular}{c}
        Value of weight system
        \\
        on chord diagrams
        \\
        $\phantom{a}$
      \end{tabular}
    }
    }
  }{
    w_{\mathbf{N}^{\scalebox{.5}{$(\mathrm{M5})$}}_i}(D)
  }.
\end{equation}

This means that the large $N$ limits \eqref{LargeNLimitForM2M5Branes}
are to be considered in the space of weight systems on
Sullivan/horizontal chord diagrams.
It just remains to determine the proper normalizations:

\newpage

\noindent {\bf Coincident M5-Brane quantum states.}
Notice that, with
the identification \eqref{StatesInObservables}, the state of
2 coincident M5-branes, according to \eqref{LargeNLimitForM2M5Branes}
is given, via \eqref{ExtensionAndRestrictionBetweensl2Andgl2},
by the normalization of the
fundamental $\mathfrak{gl}(2,\mathbb{C})$-weight system
(Example \ref{TheFundamentalgl2WeightSystem}) regarded
as a quantum state (Example \ref{gl2FundamentalWeightSystemIsAState}).

\medskip

\noindent {\bf Single M2-brane states and normalization.}
By the discussion in \cref{su2WeightSystemsAreFuzzyFunnelObservables},
we have that the fuzzy 2-sphere state of
a \emph{single} M2-brane at any
$N^{{}^{(\mathrm{M5})}}$ is given,
under the identification \eqref{StatesInObservables},
by the following weight system:
\begin{equation}
  \label{SingleM2BraneAsWeightSystem}
  \underset{
    \mbox{
      \tiny
      \begin{tabular}{c}
        $\phantom{a}$
        \\
        \color{blue}
        Single M2-brane state in BMN model
        \\
        ({\color{olive}multiple of $\mathfrak{su}_{\mathbb{C}}$-weight system})
      \end{tabular}
    }
  }{
  \underbrace{
  \tfrac{
    4\pi \, 2^{2n}
  }{
    \left(
      \big(
        N^{{}^{(\mathrm{M5})}}
      \big)^2
      -
      1
    \right)^{\sfrac{1}{2} + n}
  }
  \,
  w_{\mathbf{N}^{{}^{(\mathrm{M5})}}}
  }
  }
  \;\;\;\in\;\;\;\;\;\;\;
  {
  \overset{
    \mathclap{
    \mbox{
      \tiny
      \begin{tabular}{c}
        \color{blue}
        States as seen by multi-trace observables...
        \\
        ({\color{olive}weight systems on horizontal chord diagrams})
      \end{tabular}
    }
    }
  }{
    \overbrace{
      \underset{
    \mathclap{
    \mbox{
      \tiny
      \begin{tabular}{c}
        \color{blue}
        ...of any length $2n$
        \\
        ({\color{olive}with any number $n$ of chords})
      \end{tabular}
    }
    }
      }{
      \underbrace{
        \underset{n \in \mathbb{N}}{\prod}
      }
      }
      \mathcal{W}^{n}
    }
  }
  }
\end{equation}
Here the power of $\sfrac{1}{2}$ in the
normalization factor accounts for the
normalization of the fuzzy integration in \cref{su2WeightSystemsAreFuzzyFunnelObservables},
while the power of the degree $n$
(which is the number of edges in a chord diagram \eqref{HorizontalChordDiagrams}, hence half the number of
insertions in any multi-trace observable evaluated on this state)
accounts for the normalization \eqref{SolutionOfNahmEquation}
of the functions on the fuzzy 2-sphere.
Thus, with this normalization the evaluation on any round chord
diagram produces the correct fuzzy sphere observable.
For example: \begin{equation}
  \label{SingleTraceObservableAsFuzzySphereShapeCoefficient}
\begin{aligned}
  \overset{
    \!\!\!
    \mathclap{
    \mbox{
      \tiny

  }
  \end{aligned}
\end{equation}

\medskip

\noindent {\bf Normalization and DLCQ.} This prescription gives observables
of the \emph{relative} radius/shape of the fuzzy spheres
incolved in an M2/M5-brane bound state. The absolute radius
is not observed. For example, for a single M2-brane we reduce to
states whose single-trace observables \eqref{SingleTraceObservableAsFuzzySphereShapeCoefficient}
measure fluctuations of the fuzzy 2-sphere (of any bit number $N^{{}^{(\mathrm{M5})}}$ but)
of unit radius:

\begin{equation}
  \label{SingleTraceObservableAsFuzzySphereShapeCoefficient}
\begin{aligned}
  \overset{
    \!\!\!
    \mathclap{
    \mbox{
      \tiny

  }
  }
  \;\;\;
  =
  1\;.
  \end{aligned}
\end{equation}
This is just as it must be for there to be
a large $N$-limit: In this limit the bare brane scale necessarily
diverges, and needs to be normalized against
the radius $R_{11}$ of the longitudinal spacetime circle in the
DLCQ prescription, to yield finite $p^+ = N/R_{11}$.

\newpage

\noindent {\bf M2-M5 brane bound states as weight systems.}
It follows that the weight systems corresponding
to M2/M5 branes states as in \eqref{LargeNLimitForM2M5Branes}
are to be mixtures of the single M2-brane states
\eqref{SingleM2BraneAsWeightSystem}:
\begin{equation}
  \label{M2M5BraneBoundStateWeights}
  \hspace{-1cm}
  \raisebox{40pt}{
  \scalebox{.9}{
  \xymatrix{
  \overset{
    \mathclap{
    \mbox{
      \tiny

        }
        }
      }{
      \underbrace{
        \mathbf{N}^{{}^{(\mathrm{M5})}}_i
      }
      }
      }
      }
   \, \Big)
    \;\left\vert
    \scalebox{.7}{
    $
    {\begin{aligned}
      &
      \left\{
        \big(
          N^{{}^{(\mathrm{M2})}}_i,
          N^{{}^{(\mathrm{M5})}}_i
        \big)
      \right\}_{i \in \mathbb{N}}
      \\
      &\in
      \underset{i \in \mathbb{N}}{\oplus}
      (\mathbb{N}\times \mathbb{N})
    \end{aligned}}
    \!\!\!
    $
    }
    \right.
  \right\}
  \ar[r]
  &
  \left\{
    \overset{
      \mbox{
        \tiny
        \color{blue}
        Mixture
      }
    }
    {
    \overbrace{
    \frac{
      \mathclap{\phantom{\vert^{\vert^{\vert}}}}
      1
    }{
     \underset{
       i \in \mathbb{N}
     }{
       \sum
     }
     N^{{}^{(\mathrm{M2})}}_i
    }
    \underset{i \in \mathbb{N}}{\sum}
    }
    }
    \;
    \overset{
      \mbox{
        \tiny
        \color{blue}
        Normalized radii
      }
    }{
    \overbrace{
      \frac{
        N^{{}^{(\mathrm{M2})}}_i
        \,
        4\pi
        \,
        2^{2n}
      }{
        \Big(
          \!
          \big(
            N^{{}^{(\mathrm{M5})}}_i
          \big)^2
          -1
          \!
        \Big)^{\!\!\sfrac{1}{2} + n}
      }
    }
    }
    \overset{
      \mbox{
        \tiny
        \color{olive}
        Lie weights
      }
    }{
    \overbrace{
      \mathclap{\phantom{\vert^{\vert^{\vert^{\vert^{\vert^b}}}}}}
      w_{
        \mathbf{N}^{{}^{(\mathrm{M5})}}_i
      }
    }
    }
    \;\left\vert
    \scalebox{.7}{
    $
    {\begin{aligned}
      &
      \left\{
        \big(
          N^{{}^{(\mathrm{M2})}}_i,
          N^{{}^{(\mathrm{M5})}}_i
        \big)
      \right\}_{i \in \mathbb{N}}
      \\
      &\in
      \underset{i \in \mathbb{N}}{\oplus}
      (\mathbb{N}\times \mathbb{N}_{\geq 1})
    \end{aligned}}
    \!\!\!
    $
    }
    \right.
  \right\}_{\bigg/\sim}
  \\
\raisebox{-47pt}{

    }
  }
  }
$$
This follows via \eqref{SingleTraceObservableAsFuzzySphereShapeCoefficient} by the standard convergence of the fuzzy sphere $S^2_N$ to the round 2-sphere for $N \to \infty$.

\newpage

\subsection{Horizontal chord diagrams encode Hanany-Witten states}
\label{HananyWittenTheory}

{\bf The graded-commutative algebra of horizontal chord diagrams.}
Recall from \eqref{HorizontalChordDiagramsModulo2TAnd4TRelations}
that $\mathcal{A}^{{}^{\mathrm{pb}}}_{N_{\mathrm{f}}}$
is the free \emph{graded associative} algebra on
generators $\big\{t_{i j } = t_{j i}\vert i\neq j \in \{1,\cdots N_{\mathrm{f}}\}\big\}$ in degree 1,
modulo the 2T and 4T relations.
By skew-symmetrizing this induces the
\emph{graded commutative} algebra obtained from the
same generators and relations:
\begin{equation}
  \label{GradedCommutativeAlgebraOfHorizontalChords}
  \mathcal{A}^{{}^{\mathrm{hw}}}_{N_{\mathrm{f}}}
  \;:=\;
  \mathrm{GradedComm}
  \Big(
    \big\{
      \underset{
        \mbox{\tiny $\mathrm{deg} = 1$ }
      }{t_{i j = t_{j i}}}
      \vert i\neq j \in \{1,\cdots N_{\mathrm{f}}\}\big\}
  \Big)\big/ (\mathrm{2T}, \mathrm{4T})\;.
\end{equation}
Horizontal chord diagrams \eqref{HorizontalChordDiagrams}
still represent generators in this graded-commutative algebra.
To indicate that we think of a horizontal chord diagram as a generator
in $\mathcal{A}^{{}^{\mathrm{hw}}}_{N_{\mathrm{f}}}$,
we complete each chord by a gray line to the left or to the right, as in the
 following example:
\begin{equation}
\label{AGenericVanishingElementOfSkewSymmetrizedHorizontal}
\left[
\raisebox{-107pt}{

}
\right]
\;\;=\;\;
t_{45} \wedge t_{35} \wedge t_{25} \wedge t_{15} \wedge t_{14}
  \wedge t_{2 4}
  \;\;\in\;\;
  \mathcal{A}^{{}^{\mathrm{hw}}}_{N_{\mathrm{f}} = 5}\;.
\end{equation}

\medskip
\noindent In fact, this element vanishes, because the 2T-relations \eqref{2TRelationsOnHorizontalChordDiagrams} now say that the product of two
chords vanishes if they do not connect to one common strand. In the example \eqref{AGenericVanishingElementOfSkewSymmetrizedHorizontal}
the 2T relation gives
$$
  t_{15} \wedge t_{24} \;=\; 0
  \,.
$$
Therefore, a non-vanishing homogeneous element
in \eqref{GradedCommutativeAlgebraOfHorizontalChords}
has to look either like this:
$$
\left[
\scalebox{.9}{
\raisebox{-78pt}{

}
}
\right]
\;\;\;=\;\;
\begin{aligned}
  & \phantom{A}
  \\
  & t_{45} \wedge t_{35} \wedge t_{25} \wedge t_{15}
  \\
  &
  \in\;\;
  \mathcal{A}^{{}^{\mathrm{hw}}}_{N_{\mathrm{f}} = 5}
\end{aligned}
$$
up to permutation of strands.

\medskip

\noindent {\bf Hanany-Witten theory.}
We observe that the elements of the skew-symmetrized graded-commutative algebra
of $\mathcal{A}^{{}^{\mathrm{hw}}}_{N_{\mathrm{f}}}$\eqref{GradedCommutativeAlgebraOfHorizontalChords}
of horizontal chord diagrams reflect the diagrammatics of Hanany-Witten
$\mathrm{D}p-\mathrm{D}(p+2)$ brane configurations according to
\cite[6]{HananyWitten97}\cite[3]{GaiottoWitten08}
(see also \cite[23]{HoriOoguriOz98}\cite[p. 83-]{GiveonKutasov99}\cite[(6.12)]{GKSTY01}\cite[Fig. 3.13]{Fazzi17}) if we identify:

\noindent
\begin{equation}
\label{IdentificationWithHananyWittenConfigurations}
\mbox{
\hspace{-3cm}
\begin{minipage}[l]{7cm}
\begin{enumerate}[{\bf (i)}]
  \item strands as $\mathrm{D}(p+2)$-branes;
  \item chords as $\mathrm{D}p$-branes,
    \newline
    stretching between $\mathrm{D}(p+2)$s;
  \item green dots as $\mathrm{NS}5$-branes;
  \item gray lines as $\mathrm{D}p$-branes,
    \newline
    stretching
    from $\mathrm{NS}5$ to $\mathrm{D}(p+2)$.
\end{enumerate}
\end{minipage}
\scalebox{.9}{
\raisebox{-84pt}{

}
\right]
\;\;=\;\;
0
$$
Under the identification \eqref{IdentificationWithHananyWittenConfigurations},
these are the configurations where two $\mathrm{D}p$-branes
end on the same $\mathrm{D}(p+2)$-brane. That these configurations
are excluded (if supersymmetry is required) is
known as the \emph{s-rule} of Hanany-Witten theory, going back to
the discussion of \emph{s-configurations} in \cite{HananyWitten97}
and made explicit in \cite[p. 83-]{GiveonKutasov99}.

\medskip
We notice that in \cite[2.3]{BGS97}\cite{BachasGreen98}
the s-rule has been argued to be nothing but the implication of the
\emph{Pauli exclusion principle} for the fermions on the
intersecting branes. But of course the mathematical reflection
of the Pauli exclusion principle is, at its core,
precisely the graded-commutativity as in \eqref{GradedCommutativeAlgebraOfHorizontalChords}.

\medskip

\noindent {\bf (2) Breaking of $\mathrm{D}p$-branes on $\mathrm{D}(p+2)$-branes.}
A non-vanishing element of $\mathcal{A}^{{}^{\mathrm{hw}}}_{N_{\mathrm{f}}}$
\eqref{GradedCommutativeAlgebraOfHorizontalChords}
may also be of the form
$$
\left[
\raisebox{-67pt}{
\begin{tikzpicture}

 \begin{scope}[shift={(0,3)}]
   \draw[->][thick, orangeii] (0,0) to (0.122,0);
   \draw[thick, white] (0,0) to (.05,0);
 \end{scope}
 \begin{scope}[shift={(1,3)}]
   \draw[->][thick, orangeii] (0,0) to (-0.122,0);
   \draw[thick, white] (0,0) to (-.05,0);
 \end{scope}

 \draw[thick, orangeii] (0+.1,3) to (1-.1,3);

 \begin{scope}[shift={(0,3)}]
   \draw[->][thick, gray] (0,0) to (-0.122,0);
   \draw[thick, white] (0,0) to (-.05,0);
 \end{scope}

 \begin{scope}[shift={(2,2)}]
   \draw[->][thick, orangeii] (0,0) to (-0.122,0);
   \draw[thick, white] (0,0) to (-.05,0);
 \end{scope}
 \begin{scope}[shift={(1,2)}]
   \draw[->][thick, orangeii] (0,0) to (0.122,0);
   \draw[thick, white] (0,0) to (.05,0);
 \end{scope}

 \draw[thick, orangeii] (1+.1,2) to (2-.1,2);

 \begin{scope}[shift={(1,2)}]
   \draw[->][thick, gray] (0,0) to (-0.122,0);
   \draw[thick, white] (0,0) to (-.05,0);
 \end{scope}

 \draw[ultra thick, darkblue] (0,4) to (0,1);
 \draw[ultra thick, darkblue] (1,4) to (1,1);
 \draw[ultra thick, darkblue] (2,4) to (2,1);

 \draw[ultra thick, white] (0,+.3) to (0,-.3);

 \draw (0,1-.3) node {\small $1$};
 \draw (1,1-.3) node {\small $2$};
 \draw (2,1-.3) node {\small $3$};

 \draw[thick,gray] (-2, 3) to (0-.1,3);
 \draw[thick,gray] (-2, 2) to (1-.1,2);

 \draw[draw=green, fill=green] (-2,3) circle (.085);
 \draw[draw=green, fill=green] (-2,2) circle (.085);

\end{tikzpicture}
}
\right]
\;\;=\;\;
t_{12} \wedge t_{23}
  \;\;\in\;\;
  \mathcal{A}^{{}^{\mathrm{hw}}}_{N_{\mathrm{f}} = 5}
$$
Under the identification \eqref{IdentificationWithHananyWittenConfigurations} this
corresponds to a $\mathrm{D}p$-brane which crosses a
$\mathrm{D}(p+2)$-brane without having broken up into segments.
But the {\it 4T-relation} \eqref{Round4TRelations}
in the graded commutative algebra \eqref{GradedCommutativeAlgebraOfHorizontalChords}
now implies that this configuration equivalently
transmutes to the one on the right of the following:
$$
\left[
\raisebox{-50pt}{

}
\right]
$$
Under the identification \eqref{IdentificationWithHananyWittenConfigurations},
this process is the breaking up of a $\mathrm{D}p$-brane
where it crosses a $\mathrm{D}(p+2)$-brane,
as expected in Hanany-Witten theory.

\medskip

\noindent {\bf (3) The ordering constraint.}
Under the identification
\eqref{IdentificationWithHananyWittenConfigurations}
and by the discussion in \cref{HigherObservablesOnIntersectingBraneModuli},
we obtain the higher observables on Hanany-Witten
$\mathrm{D}p\perp\mathrm{D}(p+2)$-configurations by passing to
weight systems evaluated on the skew-symmetrized horizontal chord
diagrams in \eqref{GradedCommutativeAlgebraOfHorizontalChords}.
By Prop. \ref{FundamentalTheremOfWeightSystems} this introduces
two extra pieces of data, namely:
\begin{enumerate}[{\bf (i)}]
\vspace{-2mm}
\item numbers $N_{\mathrm{c},i}$ of coincident $\mathrm{D}p$-branes
ending on the $i$th strand, and
\vspace{-2mm}
\item winding monodromies $\sigma$ of these
strands, modulo some equivalence relations.
\end{enumerate}
\vspace{-2mm}
But from \eqref{LieAlgebraWeightSystemOnHoizontalChordDiagramsWithStacksOfCoincidentStrands}
it is evident that up to these equivalence relations
only the conjugacy class of the winding monodromy
$\sigma \in \mathrm{Sym}(N_{\mathrm{f}})$ matters, where an equivalence
$$
  \sigma
  \;\;\sim\;\;
  \tilde \sigma \circ \sigma \circ \tilde \sigma^{-1}
$$
corresponds to reordering the strands according to
any other permutation $\tilde \sigma \in \mathrm{Sym}(N_{\mathrm{f}})$.
With the tuple $\vec N_{\mathrm{f}}$ of numbers of coincident
$\mathrm{D}p$-branes specified, this means that we may partially
gauge-fix this freedom in the winding monodromy $\sigma$
by requiring that the elements of $\vec N_{\mathrm{f}}$
are monotonically ordered:
$$
  N_{\mathrm{c},1}
  \;\leq\;
  N_{\mathrm{c},2}
  \;\leq\;
  \cdots
  \;\leq\;
  N_{\mathrm{c},N_{\mathrm{f}}}
  \,.
$$
Under the identification \eqref{IdentificationWithHananyWittenConfigurations}
this is the \emph{ordering constraint} that was found in
\cite[3.5]{GaiottoWitten08}.

\vspace{5mm}
\noindent {\bf Acknowledgements.}
We thank Vincent Braunack-Mayer, Qingtao Chen, Carlo Collari,
David Corfield, and Domenico Fiorenza for discussion.

\newpage

\vspace{.8cm}

\noindent Hisham Sati, {\it Mathematics, Division of Science, New York University Abu Dhabi, UAE.}

 \medskip
\noindent Urs Schreiber,  {\it Mathematics, Division of Science, New York University Abu Dhabi, UAE, on leave from Czech Academy of Science, Prague.}

\end{document}